\newlength{\originalbase}
\newcommand{\remove}[1]{}
\newcommand{\dist}[2]{d({#1},{#2})}
\newcommand{\vertline}[1]{Y({#1})}
\newcommand{\horizline}[1]{X({#1})}
\newcommand{\horizlineneg}[1]{X_{-}({#1})}
\newcommand{\sweepline}[1]{S({#1})}
\newcommand{\spokeline}[1]{S_{#1}}
\newcommand{\spokevertline}[1]{Y_{#1}}
\newcommand{\transline}[1]{T_{#1}}
\def\frontier{X}
\def\cP{{\mathcal P}}
\def\cE{{\mathcal E}}
\def\cA{{\mathcal A}}
\def\cB{{\mathcal B}}
\def\R{{\mathbb R}}
\def\dx{\Delta x}
\def\dy{\Delta y}
\def\sp{\Pi}
\def\proj{\mathrm{proj}}
\def\Area{\mathrm{area}}
\newtheorem{theorem}{Theorem}[section]
\newtheorem{lemma}[theorem]{Lemma}
\newtheorem{definition}{Definition}[theorem]
\begin{document}

\def\proofend{\hfill$\Box$\medskip}
\def\Proof{\noindent{\bf Proof. }}
\newcommand{\ProofOf}[1]{\noindent{\bf Proof of {#1}. }}
\newenvironment{proofof}[1]{\ProofOf{#1}}{\hfill $\Box$ \medskip}

\newcommand{\style}[1]{{\bf{{#1}}}}

\def\proj{\mathrm{proj}}
\def\diam{\mathrm{diam}}




\title{Line-of-Sight Pursuit in Monotone and Scallop Polygons}

\date{}

\author{Lindsay Berry\thanks{Statistical Science Department, Duke University Durham, NC 27708, USA} \and Andrew Beveridge\thanks{Department of Mathematics, Statistics, and Computer Science, Macalester College, St. Paul MN 55105} \and Jane Butterfield\thanks{Department of Mathematics \& Statistics, University of Victoria, Victoria, B.C. V8X 2X6} \and Volkan Isler\thanks{Department of Computer Science \& Engineering, University of Minnesota, Minneapolis MN 55455} \and Zachary Keller\thanks{Department of Mathematics, University of Minnesota, Minneapolis MN 55455} \and Alana Shine\thanks{Department of Computer Science, University of Southern California, Los Angeles, CA 90089, USA} \and Junyi Wang$^\dagger$}

\maketitle

\begin{abstract}
We study a turn-based  game in a simply connected  polygonal environment $Q$ between  a pursuer $\cP$ and an adversarial evader $\cE$. 
Both players can move in a straight line to any point within unit distance during their turn. The pursuer $\cP$ wins by capturing the evader, meaning that their distance satisfies $d(\cP, \cE) \leq 1$, while the evader wins by eluding capture forever. 
Both players have a map of the environment, but they have different sensing capabilities. The evader $\cE$ always knows the location of $\cP$. Meanwhile, $\cP$ only has line-of-sight visibility:  $\cP$ observes the evader's position only when the line segment connecting them lies entirely within the polygon. Therefore $\cP$ must search for $\cE$ when the evader is hidden from view.

We provide a winning strategy for $\cP$ in two families of polygons: \style{monotone polygons} and \style{scallop polygons}. In both families, a straight line $L$ can be moved continuously over $Q$ so that (1) $L \cap Q$ is a line segment and (2) every point on the boundary $\partial Q$ is swept exactly once. These are both subfamilies of \style{strictly sweepable  polygons}.  The sweeping motion for a monotone polygon  is a single translation, and the sweeping motion for a scallop polygon is a single rotation.  
Our algorithms use \style{rook's strategy} during its  pursuit phase, rather than the well-known \style{lion's strategy}. The rook's strategy is crucial for obtaining a capture time that is linear in the area of $Q$. For both monotone and scallop polygons, our algorithm has a capture time of $O(n(Q) + \Area(Q))$, where  $n(Q)$ is the number of polygon vertices.

\end{abstract}

%

\section{Introduction}

We study a turn based pursuit-evasion game in which a  pursuer $\cP$ chases an evader $\cE$ in a simply connected polygon $Q$ with boundary $\partial Q$. 
The players have unit speed, so they can move to any point within unit distance of their current location, provided that the line segment connecting these points is contained in $Q$.  The pursuer's goal is to catch the evader in finite time  using a deterministic strategy, while the evader tries to avoid capture indefinitely.
Both players have full knowledge of the layout of environment, but they have  asymmetric sensing capabilities. The evader has \style{full-visibility}, meaning that $\cE$ knows the location of $\cP$ at all times. The pursuer only has \style{line-of-sight visibility}: she knows the exact location of $\cE$ only when the line segment  $\overline{\cP\cE} \subset Q$. Therefore, the pursuer must search when $\cE$ is obscured by features of the environment. The capture condition requires both proximity and visibility: the evader is captured when  the distance $d(\cP, \cE) \leq 1$ and  $\overline{\cP\cE} \subset Q$.

The line-of-sight restriction puts the pursuer at a great disadvantage. Isler et al.~provide a winning strategy for a full-visibility pursuer $\cP$ in a simply connected polygon, called  \style{lion's strategy}.\cite{isler05tro}  They also construct environments for which a pursuer $\cP$ with line-of-sight visibility does not have a deterministic winning strategy.   Klein and Suri provided an upper bound for the number of pursuers required for this setting:  $O(n^{1/2})$ line-of-sight pursuers have a winning strategy in any simply connected polygon.\cite{klein+suri}    They construct a comb-like environment (with visibility-blocking notches in its corridors) that requires $\Theta(n^{1/2})$ pursuers, so their bound is tight.

This paper addresses the other extreme: when can a single line-of-sight pursuer use a deterministic strategy to catch an evader? 
We provide a winning pursuer strategy for two subfamilies family of strictly sweepable polygons. Strictly sweepable polygons   were introduced by Bose and van Krevald.\cite{bose}
A polygon $Q$ is \style{sweepable} if a straight line $L$ can be moved continuously over $Q$ so that $Q \cap L$ is a line segment. The sweeping motion can include translations and rotations of the sweep line.   A polygon is \style{strictly sweepable} if every point of $Q$ is swept exactly once during this process. See Figure \ref{fig:sweepable} for examples of sweepable and strictly sweepable polygons. We build up to this result, first describing winning pursuit strategies in two simpler subfamilies. A \style{monotone polygon} is a strictly sweepable polygon where the sweep line moves via translation along a fixed line $L_0$, see  Figure \ref{fig:mono-scallop}(a).  Our second family is  \style{scallop polygons}, where the sweep line simply rotates through a fixed center $C$ located outside of the polygon, see Figure \ref{fig:mono-scallop}(b). In this case, the total angle swept must be less than $\pi$ to maintain the convex intersection with $Q$.  

\begin{figure}

\begin{center}
\begin{tikzpicture}[scale=.75]

\begin{scope}

\foreach \ang in {190, 215, 240, 270} {
\draw[dashed] (\ang:0) -- (\ang:3.75);

}

\foreach \ang in {190, 200, 210, 220, 230, 240, 250, 260} {
\draw[thick] (\ang:2.5) -- (\ang+5:3.5) -- (\ang+10: 2.5);
}

\draw[thick] (190:2.5) --  (270:1);

\end{scope}

\begin{scope}[shift={(270:1)}]

\draw[thick] (0,0) -- (2,0);
\draw[thick] (0,-1.5) -- (2,-1.5);

\node at (1,-3) {\scriptsize (b)};

\end{scope}

\begin{scope}[shift={(2,-3.5)}]

\foreach \ang in {10, 35, 60, 90} {
\draw[dashed] (\ang:0) -- (\ang:3.75);

}

\foreach \ang in {10, 20, 30, 40, 50, 60, 70, 80} {
\draw[thick] (\ang:2.5) -- (\ang+5:3.5) -- (\ang+10: 2.5);
}

\draw[thick] (10:2.5) --  (90:1);

\end{scope}


\begin{scope}[shift={(-10,-.5)}]

\draw[thick] (0,0) -- (5,0) -- (5,-2) -- (4,-2);

\draw[thick] (1, -2) -- (.25, -2.5) -- (1.25, -2) -- (2.25, -2.5) -- (1.5,-2);

\draw[thick] (4, -2) -- (4.75, -2.5) -- (3.75, -2) -- (2.75, -2.5) -- (3.5,-2);

\draw[thick] (3.5,-2) -- (1.5,-2);

\draw[thick] (1,-2) -- (0,-2) -- (0,0);

\node at (2.5, -3.5) {\scriptsize (a)};

\end{scope}

\end{tikzpicture}

\end{center}

\caption{(a) A sweepable polygon that is not strictly sweepable. (b) A strictly sweepable polygon. }

\label{fig:sweepable}

\end{figure}


\begin{figure}[htp]
  \begin{center}
    	\begin{tikzpicture}[scale=.75]

		\draw[thick] (0.75,0.5) -- (1.3,2.3) -- (1.7,1.8) -- (2.1,2.4) -- (2.6, -.5) -- (3.2, 2.5) -- (3.5, 1.5) -- (3.75,2.3) -- (4.5, 0) ;
		\draw[thick] (0.75,0.5) -- (1.1,-2) -- (1.4, -.75)  -- (1.9,0) -- (3,-2.25) -- (3.25, 0.0) -- (3.5,.1) -- (3.7, -0.05) -- (4.2,-2) -- (4.5, 0) ;
	  	\draw[dashed] (1,-2.4) -- (1,2.8);
	  	\draw[dashed] (2.5,-2.4) -- (2.5,2.8);	
	  	\draw[dashed] (4,-2.4) -- (4,2.8);

\draw[thick] (.25,.25) -- (5,.25);
\node at (-.1,.2) {\small $L_0$};

\node[above] at (0.55,0.5) {\small $v_1$};
\node[right] at (1.05,-2.1) {\small $v_2$};
\node[above] at (1.4,2.2) {\small $v_3$};
\node[right] at (4.5, -.25) {\small $v_{17}$};

\node at (2.75,-3.5) {\scriptsize (a)};


\begin{scope}[shift={(11,-2.5)}]

\foreach \ang in {30, 45, 60, 75, 90, 105, 120, 135, 150} {
\draw[dashed] (\ang:0) -- (\ang:4);

}

\foreach \ang in {30, 50, 70, 90, 110, 130} {
\draw[thick] (\ang:2.5) -- (\ang+10:4) -- (\ang+20: 2.5);
}

\foreach \ang in {40, 65,  90,  115} {
\draw[thick] (\ang:1) -- (\ang+12.5:1.5) -- (\ang+25:1);
}

\draw[thick] (30:2.5) -- (40:1);
\draw[thick] (150:2.5) -- (140:1);

\node[below] at (150:2.5) {\small $v_{1}$};
\node[above] at (145:4.2) {\small $v_{2}$};
\node[below] at (27:2.75) {\small $v_{21}$};

\node at (0,-1) {\scriptsize (b)};

\end{scope}

\end{tikzpicture}

\end{center}

  \caption{(a) A monotone polygon with respect to horizontal axis $L_0$ with vertices indexed from left to right. (b) A scallop polygon with vertices  indexed by  decreasing polar angle.}

\label{fig:mono-scallop}

\end{figure}

\begin{theorem}
\label{thm:monotone}
A line-of-sight pursuer can catch an evader in a monotone polygon $Q$ with capture time $O(n(Q) + \Area(Q))$.
\end{theorem}

\begin{theorem}
\label{thm:scallop}
A line-of-sight pursuer can catch an evader in a scallop polygon $Q$ with capture time $O(n(Q) + \Area(Q))$.
\end{theorem}

Our pursuit algorithms for monotone and scallop polygons follow the same high-level strategy.  The algorithms  alternate between a \style{Search Mode} and an active pursuit mode, called \style{Rook Mode}. During Search Mode, the pursuer methodically clears the polygon from left-to-right, maintaining a series of \style{checkpoints} that mark her progress. When $\cP$ enters Rook Mode, her movements still guard these checkpoints. During Rook Mode, the evader can use features of the environment to hinder the pursuer. For example,  the pursuer can be blocked by the boundary. In scallop  polygons, the evader can also hide behind features. In response, $\cP$ reverts back into Search Mode, carefully ensuring that she maintains her checkpoint progress. 
Maintaining the progress is straight-forward in monotone polygons, but is trickier in scallop polygons. 
Eventually, the evader will be pushed into a subpolygon where the pursuer can capture him. Figure \ref{fig:example-monotone} shows a typical pursuit trajectory in a monotone polygon. The pursuer's Search Modes are shown in black and the Rook Modes are shown in gray. Note that searching always progresses from left-to-right. While in Rook Mode, $\cP$ guards her current horizontal position, moving left or right to track the evader's movement, and advancing towards the evader when given the opportunity.

\begin{figure}
  \begin{center}
  \begin{tikzpicture}[scale=1]

\draw (0,1) -- (0.5,2) -- (1,0) -- (1.3,1.8) -- (1.7,1.3) -- (2.1,1.9) -- (2.6, .2) -- (3.2, 2) -- (3.5, 1) -- (3.75,1.8) -- (3.9,.5) --(4.3,-.2) -- (5, 2.1) -- (5.5,-.75) -- (6.25,1.5) -- (6.5,0);
\draw (0,1) -- (0.45,-.3) -- (.6,.2) -- (1.1,-1.5) -- (1.4, -.4) -- (1.7,-1) -- (1.9,0.5) -- (3,-1.75) -- (3.5,-.75) -- (4.2,-1.5) -- (4.9, -.5) -- (5.5,-1.6) -- (5.75,-1) -- (6,-1.7) -- (6.5,0);

\draw[very thick] (0,1) -- (.75,1) -- (1,0) -- (1.5,0);

\draw[very thick, -latex] (0,1) -- (.55,1);
	
\draw[color=gray, very thick] (1.5, 0) -- (1.5, .15) -- (1.2, .15) -- (1.3, .28) -- (1.87, .28);

\draw[color=gray, very thick, -latex] (1.3, .28) -- (1.7, .28);

 \draw[very thick]  (1.87, .28) -- (1.9,0.5) -- (2.51176,0.5) -- (2.6,0.2) -- (3.3,0.2);

\draw[color=gray, very thick] (3.3,0.2) -- (4.0,0.2) -- (4.0, 0.05) -- (2.3, 0.05) -- (2.3, -0.1) -- (4.25, -0.1);

\draw[color=gray, very thick, -latex] (4.0, 0.05) -- (2.9, 0.05);
	
\draw[very thick]  (4.25, -0.1) -- (4.3,-.2) -- (5.25,-0.2);	

\draw[color=gray, very thick] (5.25,-0.2) -- (4.6, .4) -- (5, .6) -- (5.17, .6) -- (5.17, .75) -- (4.68, .75) -- (4.68, .9) -- (5.145, .9) -- (5.145, 1.05) -- (4.8, 1.05) -- (4.8, 1.2) -- (5.05, 1.2) -- (5.05, 1.35) -- (4.87, 1.35) -- (4.87, 1.5) -- (5, 1.5) -- (4.98, 1.82)
;

\draw[color=gray, very thick, -latex] (5.25,-0.2) -- (4.7, .3);

\draw[fill] (0,1) circle (1.25pt);
\draw[fill=white] (1.5,0) circle (1.25pt);
\draw[gray, fill=gray] (1.87, .28) circle (1.25pt);
\draw[fill=white] (3.3,0.2) circle (1.25pt);
\draw[gray, fill=gray] (4.25, -0.1) circle (1.25pt);
\draw[fill=white] (5.25,-0.2) circle (1.25pt);
\draw[fill] (4.98, 1.82) circle (1.25pt);

\end{tikzpicture}

\end{center}

\caption{An example pursuit trajectory in which the evader makes two blocking moves. The black segments are Search Mode and the gray segments are Rook Mode.  The white points indicate the transition from Search Mode to Rook Mode. The gray points indicate where the pursuer is blocked and transitions back to Search Mode. The evader is captured at the final black point. Note that the pursuer never visits the same point twice.}

\label{fig:example-monotone}

\end{figure}

The proofs for each polygon family can be found in the two sections that follow, with our arguments building successively.  Monotone pursuit is the simplest case, so we prove Theorem \ref{thm:monotone} in Section \ref{sec:mono-pursuit}. Previously, Noori and Isler proved that monotone polygons are pursuer-win for a line-of-sight pursuer.\cite{noori+isler}  Their algorithm guarantees a capture time of $O(n(Q)^7  \cdot \diam(Q)^{13})$ where  $\diam(Q) = \max_{x,y \in Q} d(x,y)$. Our  pursuit strategy  is much simpler, as is reflected in our significantly reduced capture time. The key improvement is a novel  chasing strategy for pursuit in polygonal environments. Rather that using lion's strategy in the endgame, we use a different tactic that is better suited to the monotone  setting. This \style{rook's strategy} chases the evader while simultaneously guarding a horizontal frontier from incursion by the evader. This allows for a seamless transition between searching for the evader and actively chasing him. More specifically, our rook phase also guards the pursuer's progress, while Noori and Isler needed a third mode to defend against certain pursuer gambits, which increased their capture time bound.

In Section \ref{sec:scallop}, we adapt the monotone pursuit strategy for scallop polygons. The guiding principle of the scallop strategy is to update the pursuer's frame of reference as she moves through the polygon. This update takes some care, as $\cP$ must reposition herself to continue to guard her progress. We conclude in Section \ref{sec:conclusion} with some observations and open problems. Before launching into the proofs, we discuss related work and some  environmental assumptions.

\subsection{Related work}

Polygonal environments with nice geometric properties (such as convexity) provide an opportunity to simplify algorithms, or even make feasible an otherwise intractable problem. Chapter 30 of the Handbook of Computational Geometry surveys the theory and applications of such polygons.\cite{orourke+suri+toth} One generalization of convexity is \style{monotonicity}:  polygon $P$ is monotone with respect to line $L$ when every line orthogonal to $L$ intersects $P$ in a line segment. A polygon $P$ is \style{star-shaped} when $P$ contains a point $x$ such that every half line emanating from $x$ intersects $P$ in a single line segment. Monotonicity and star-shapedness can each be determined in $O(n)$ time.\cite{preparata+supowit,lee+preparata} A polygon $P$ is \style{radially monotone} (or equivalently, \style{pseudo-star shaped}) when there exists a point $z$ in the plane such that every half line emanating from $z$ intersects $P$ in a line segment.\cite{elgindy+toussiant} Radial monotonicity can be determined in  $O(n^2)$ time.\cite{dean+lingas+sack}

We  focus on the generalization from monotone to strictly sweepable, as defined in Bose and and van Kreveld.\cite{bose} A polygon $Q$ is strictly sweepable when a straight line $L$ can be moved continuously (via translation and rotation) so that every point of $Q$ is swept exactly once. Bose and van Kreveld show that strict sweepability can be determined in $O(n^2)$ time. Note that if the sweep line only moves by translation then $Q$ is a monotone polygon. Meanwhile, when the sweep line only moves by rotation, the polygon is rotationally monotone, with rotation angle smaller than $\pi$ radians (because we use a rotating line instead of a half line). We refer to this subfamily of radially monotone polygons as \style{scallop polygons}.

The literature on pursuit-evasion games spans a  variety of settings and disciplines, with various motion and sensing capabilities.
For a given environment, the main questions are (1) to determine the minimum number of pursuers needed to capture the evader, and (2) to bound the \style{capture time}, which is the number of rounds needed to catch the evader. Diverse game models have been studied, considering different types of environments, motion constraints, sensing capabilities and capture conditions. We list a sampling of these variations, providing references below. Researchers have explored  speed differentials between the players and constraints on acceleration. As for sensing models, the players may have full information about the positions of other players, or they may have incomplete or imperfect information. Typically, the capture condition requires achieving colocation,  a proximity  threshold, or sensory visibility (such as a non-obstructed view of the evader).

 The original pursuit-evasion setting is Rado's lion-and-man game from the 1930's. In this game, a lion chases a man in a circular arena, with each player moving with unit speed. The lion wins if it becomes colocated with man in finite time, while the man tries to evade the lion forever. Intuition suggests that the lion should be victorious. However, Besicovich showed that if the game is played in continuous time, then the man can gently spiral away from the center, so that the lion becomes arbitrarily close, but never achieves colocation.\cite{littlewood} Meanwhile, the turn-based version avoids this pathology:   Sgall showed that the lion is victorious in finite time.\cite{sgall}

Pursuit-evasion games have enjoyed significant recent attention in computational geometry, combinatorics and topology. The computational geometry viewpoint, which considers the turn-based pursuit game between autonomous agents in a polygonal environment,  has been embraced by robotics researchers and theoretical computer scientists.  Interest in polygonal pursuit games has been spurred by the widespread availability of practical sensing technologies and robotics platforms. Applications for automated search and pursuit are on the rise, including intruder neutralization, search-and-rescue, and tracking of tagged wildlife. Analysis of games with an adversarial evader provide important worst-case bounds for these applications.\cite{chi}

When the game is played in $\R^d$, Kopparty and Ravishankar\cite{KR} showed that $d+1$ pursuers win when the evader starts in the convex hull of their initial location, extending the classic result of Jankovic\cite{Jankovic}  concerning pursuit-evasion in $\R^2$.
Isler et al.~adapted the lion's strategy of Sgall to simply connected polygons.\cite{isler05tro}    More recently, Noori and Isler employed a novel pursuer strategy called \style{rook's strategy} for  pursuit on surfaces and convex terrains.\cite{noori+isler-terrain,noori+isler-polysurface}  This strategy 
is valid whenever the capture distance is nonzero. 

In recent years, researchers have studied polygonal pursuit-evasion with sensing limitations. For example, Bopardikar, Bullo and Hespanha considered a pursuer with a limited sensing distance.\cite{bopardikar} Guibas et al.~introduced pursuit using line-of-sight visibility, where capture is equivalent to detection: the evader is captured as soon as he is visible.\cite{guibas} 
Stiffler and O'Kane  determine when $m$ line-of-sight pursuers have a winning detection strategy in a polygonal environment.\cite{stiffler+okane}
 Using the more restrictive colocation capture condition,   Noori and Isler proved that a single  line-of-sight pursuer is successful in monotone polygons.\cite{noori+isler} Herein,  we add to the study of line-of-sight pursuit by resolving co-location capture in strictly sweepable polygons.
We consider agents with equal speeds, but others researcher have studied pursuit-evasion in polygonal environment with various movement constraints. For example, Tovar and LaValle show that for any polygonal environment, there exists a speed ratio such that a line-of-sight pursuer can achieve detection capture.\cite{tovar+lavalle}

Meanwhile, combinatorics researchers have devoted intense scrutiny to pursuit-evasion on graphs over the past four decades. Parsons introduced the graph searching variant, where the evader is invisible to the pursuers.\cite{parsons} The full visibility case,  known as the game of cops and robbers, was was introduced  by Quillot and independently by Nowakowski and Winkler.\cite{quilliot,nowakowski+winkler}
For an overview of pursuit-evasion on graphs, see the monograph by Bonato and Nowakowski.\cite{bonato+nowakowski} More recently,  topologists have characterized capture strategies  for    spaces  with various curvature conditions.\cite{abg1,abg2}
There has certainly been  cross-pollination between these traditions. For example, the analog of the classic Aigner and Fromme\cite{aigner+fromme}  result that three cops can always capture a robber on a planar graph was recently proven to hold for the polygonal setting by Bhaudauria et al.\cite{bhadauria+klein+isler+suri} and further generalized to a class of compact two-dimensional domains by Beveridge and Cai.\cite{beveridge+cai}

\subsection{Environmental assumptions}

We conclude this section with some comments on the game model and our assumptions about the geometric environment. 

First, we assume that there is a relationship between the geometry of the environment and the speed of the players. In particular, the environment has a \style{minimum feature size}, meaning that  the minimum distance between any two polygonal vertices is at least one unit. This assumption can be achieved by a simple rescaling of the environment, if necessary. Klein and Suri pointed out that this minimum feature size has been an implicit assumption in many papers.\cite{klein+suri2}  Furthermore, minimum feature size is necessary to avoid an unexpectedly powerful evader who can use his super-speed to confound a line-of-sight pursuer. Of course, this assumption is a natural one: we view  the turn-based game as an approximation of the continuous time, so we  should choose a time scaling that appropriately partitions the dynamics of movement in the environment.

Second, recall that capture in our game requires two conditions:   $d(\cP,\cE) \leq 1$ and  $\cP$ sees $\cE$. We employ unit capture distance to simplify the presentation of our proofs. Our results also hold when we replace the first condition with $d(\cP,\cE) \leq \epsilon$ for any positive capture radius $\epsilon >0$.  The only impact of a smaller capture radius $\epsilon$ is a multiplicative factor of $1/\epsilon$ in the capture time.  We note that the polygonal literature uses proximity capture  ($d(\cP,\cE) \leq \epsilon$) and colocation capture ($d(\cP,\cE)=0$) with roughly equal frequency. Herein, we take additional advantage of the former capture condition in developing the rook's strategy for chasing the evader. In particular,  the buffer distance between the players plays an essential role in the rook's strategy. 
In practice, the players would have non-zero radius so this assumption always holds.

Finally, make two modest geometric assumptions about our environments. We assume that our vertices are in \style{general position}, so that no three are colinear. For scallop polygons, this includes the rotational sweep center $C$. We also assume that $|\Area(Q)| \geq |\diam(Q)|$  to simplify our time bound notation. 
This will generally be the case, except for long, skinny environments. Note that $\Area(Q) \leq \diam(Q)^2$ whenever $\diam(Q) \geq 1$, so we could use the latter value for our capture time bounds, if desired. We state our bounds in terms of area since our analysis will naturally partition the polygon into disjoint regions.

%

\section{Pursuit in a Monotone Polygon}
\label{sec:mono-pursuit}

In this section, we prove Theorem \ref{thm:monotone}: $\cP$ can capture $\cE$ in a monotone polygon. 
We start by setting some notation. We fix our monotone axis $L_0$ and then label our vertices from left to right as $v_1, v_2, \ldots , v_n$.
The monotonicity of $Q$ has a simple characterization with respect to $\partial Q$: the boundary can be partitioned into two piecewise linear functions defined between $v_1$ and $v_n$. These functions only intersect at $v_1$ and $v_n$, so we refer to them as the \style{upper chain} $\Pi_U$ and the \style{lower chain} $\Pi_L$. 

Given a point $Z$, let $x(Z)$ and $y(Z)$ denote its $x$- and $y$-coordinates respectively, so that 
$Z=(x(Z), y(Z))$. The horizontal $x$-axis aligns with the monotone axis $L_0$. 

\begin{definition}
Let $Q$ be a monotone  polygon. Given $Z \in Q$, the \style{frame} $(X(Z), Y(Z))$ is the pair of maximal horizontal and vertical line segments through $Z$, that is
\begin{align*}
\horizline{Z} &= \{ V \in Q : y(V) = y(Z) \mbox{ and } \overline{VZ} \subset Q \}, \\
\vertline{Z} &= \{ V \in Q : x(V) = x(Z) \mbox{ and } \overline{VZ} \subset Q \}. 
\end{align*}

We define $\horizlineneg{Z} = \{ W  \in X(Z) :  x(W)  \leq x(Z) \}$ and
$X_{+}(Z) = \{ W  \in X(Z) :  x(W)  \geq x(Z) \}$ to be  the points on  $X(Z)$ to the left and  right of $Z$  (including point $Z$). Analogously we define $Y_{-}(Z)$ and $Y_{+}(Z)$ to be the points below and  above $Z$, respectively. 
\end{definition}

We view the line segments $X(Z)$ and $Y(Z)$ as the $x$-axis and $y$-axis  for the frame of reference centered at $Z$. The segments $\horizlineneg{Z}$ and $X_{+}(Z)$ (resp.~$Y_{-}(Z)$ and $Y_{+}(Z)$) are the negative $x$-axis and positive $x$-axis (resp.~negative $y$-axis and positive $y$-axis).

Note that since $Q$ is a closed set, $\horizline{Z}$ extends past any local extrema. Likewise, $\vertline{Z}$ extends past any reflex vertex $v$ for which the local boundary near $v$ lies completely to the left or right of $Y(v)$.

The symbols $\cP$ and $\cE$ denote the pursuer and the evader respectively.  We are often concerned with these positions at some time $t \geq 0$, and will use $\cP_t = (x(\cP_t), y(\cP_t))$ and $\cE_t =  (x(\cE_t), y(\cE_t))$ to denote these points in the polygon.  We define 
\begin{align*}
\dx(t) &= |x(\cP_t) - x(\cE_t)|, \\ 
\dy(t) &= |y(\cP_t) - y(\cE_t)|.
\end{align*}  
Occasionally,  we will drop the subscript and use $\cP, \cE$ to denote the player positions in order to ease the exposition.

\subsection{Overview}

The high-level pursuer strategy is given in Algorithm \ref{alg:monotonePursuit}. We start with a qualitative description of the pursuit, tackling the details of this algorithm in the subsections that follow.  Our pursuer algorithm in a monotone polygon has much in common with the algorithm presented by Noori and Isler.\cite{noori+isler} Both alternate between a searching mode and a chasing mode. In the \style{Search Mode}, $\cP$ traverses left-to-right along a specified search path. At certain times during Search Mode, $\cP$ is able to mark a region to her left as \style{guarded}, meaning $\cE$ cannot enter this region without being captured. These checkpoints are necessary for showing that the pursuit terminates in finite time. 

This methodical advancement continues until $\cP$ establishes the criteria to enter our chasing mode, called  \style{Rook Mode}. This requires (1) visibility of $\cE$ and (2) being positioned in a manner that protects the guarded region. Once $\cP$ has transitioned into Rook Mode, she chases the evader until $\cE$ makes a rightward move that obstructs her pursuit (either by hiding behind a vertex or by using the boundary to block the pursuer's responding move). This forces $\cP$ to re-enter Search Mode. Crucially, this is done in a way that protects (and perhaps updates) the guarded territory.   The algorithm continues, and the evader territory shrinks over time. Eventually, $\cE$ is trapped in a subregion where he cannot foil Rook Mode any longer, and $\cP$ captures him.

\begin{algorithm}
\caption{  Pursuit Strategy}
\label{alg:monotonePursuit}
\begin{algorithmic}[1]
\Require $\cP$ starts out at left endpoint $v_1$
\While {$\cE$ has not been captured}
\State Create  Search Path from the  current location of $\cP$
\While {$\cP$ has not attained rook position}
\State Follow Search Strategy
\EndWhile
\While {$\cE$ does not make an escape move}
\State Follow Rook's Strategy
\EndWhile
\EndWhile
\end{algorithmic}
\end{algorithm}

The fundamental difference between our algorithm and that of Noori and Isler  is the choice of chasing mode. The method of chasing informs the trajectory for searching, so the search paths are also different. Noori and Isler use lion's strategy to chase the evader.  In lion's strategy, the pursuer stays on the shortest path between the leftmost point $v_1$ and the evader. Noori and Isler need an additional guarding mode to establish this position after spying the evader. This guarding mode is quite involved (and contributes to their worst case $O(n(Q)^7 \cdot \diam(Q)^{13})$ capture time), as the evader has multiple gambits to threaten the pursuer's guarded region. However, once $\cP$ has established lion's position, she can reliably push the evader to the right, ending in capture or an evader hiding move (which triggers a transition back to Search Mode).

Instead of lion's strategy, we use a chasing tactic called \style{rook's strategy} that is better suited for our monotone setting. 
This strategy has been used successfully on polyhedral surfaces for the full visibility case,\cite{noori+isler-terrain,noori+isler-polysurface} but this marks its first appearance for polygonal pursuit and with visibility constraints. Rook's strategy draws inspiration from the chess endgame of a black rook and black king versus a solitary white king. During this endgame, the black rook reduces the area available to the white king, one row at a time (with occasional support of the black king). Eventually, the white king is confined to a single row, where black can  checkmate. 
At any given time, the rook guards a horizontal row on the chessboard. Likewise, a pursuer in rook position will also guard a horizontal frontier. This is a natural complement to the pursuer's horizontal search trajectory. Rook's strategy eliminates the challenging transition that  Noori and Isler mitigated with their guarding phase, which dealt with the mismatch of their linear search path and the curved boundary of their guarded region.  Our improved capture time of $O(n(Q) + \Area(Q) )$ directly reflects the seamless transition between our Search Mode and our Rook Mode. 

\begin{definition}
\label{def:rook}
Let $Q$ be a monotone polygon with a vertical sweep line.
The pursuer $\cP$ is in \style{rook position} when $|x(\cP) - x(\cE)| \leq 1/2$ and the line segment $\overline{\cP \cE} \subset Q$, so that $\cP$ sees $\cE$ . The difference $x(\cP) - x(\cE)$ is the pursuer \style{offset} and  the horizontal line  segment $\horizline{\cP}$ is the \style{rook frontier}. 
\end{definition}

Figure \ref{fig:rook-position}(a) shows an example of a pursuer in rook position.  In a convex polygon (where visibility is not an issue), rook position has three important characteristics. First, a pursuer in rook position can re-establish this position after every evader move. Second, the evader cannot get too close to the rook frontier (or cross it) without being captured. Third, the pursuer can reliably advance her rook frontier: the worst case scenario is when $\cE$ moves full speed horizontally across the region. At some point, he must reverse direction, and the pursuer changes her offset (say, from positive to negative), and moves her rook frontier towards the evader, see Figure \ref{fig:rook-position}(b). The situation is more complicated in a general polygon, but these three basic characteristics remain the driving force of rook's strategy. 

\begin{figure}[t]

\begin{center}
\begin{tikzpicture}

\begin{scope}

\draw (2,0) -- (1,2.25) -- (-1,2) -- (-2,0) -- cycle;

\draw[gray, very thick] (-1.7, .6) -- (1.75, .6);

\draw[-latex] (.35, 1.6) -- (.8, 1.6);

\draw[dashed] (.35, 1.6) -- (.35, .6);
\draw[fill] (.35,  1.6) circle (1.5pt);
\node[above] at (.4, 1.6) {\scriptsize $\cE$};

\draw[dashed] (.85, 1.6) -- (.85, .6);

\draw[fill] (.85, 1.6) circle (1.5pt);
\node[above] at (.85, 1.6) {\scriptsize $\cE'$};

\draw[fill] (.2, .6) circle (1.5pt);

\node[below] at (.2, .6) {\scriptsize $\cP$};

\draw[-latex] (.2, .6) -- (.65, .6);

\draw[fill] (.7, .6) circle (1.5pt); 

\node[below] at (.7, .6) {\scriptsize $\cP'$};

\node at (0,-.5) {\scriptsize (a)};

\end{scope}

\begin{scope}[shift={(6,0)}]

\draw (2,0) -- (1,2.25) -- (-1,2) -- (-2,0) -- cycle;

\draw[gray, very thick] (-1.7, .6) -- (1.75, .6);

\draw[-latex] (.85, 1.6) -- (.43, 1.8);

\draw[dashed] (.4, 1.8) -- (.4, .6);
\draw[fill] (.4,  1.8) circle (1.5pt);
\node[above] at (.4, 1.8) {\scriptsize $\cE''$};

\draw[dashed] (.85, 1.6) -- (.85, .6);

\draw[fill] (.85, 1.6) circle (1.5pt);
\node[above] at (.85, 1.6) {\scriptsize $\cE'$};

\draw[fill] (.55, 1) circle (1.5pt); 

\node[above] at (.65, 1) {\scriptsize $\cP''$};

\draw[-latex]  (.7, .6) -- (.57, .95);

\draw[fill] (.7, .6) circle (1.5pt);

\node[below] at (.7, .6) {\scriptsize $\cP'$};

\node at (0,-.5) {\scriptsize (b)};
\end{scope}

\end{tikzpicture}
\end{center}

\caption{Rook's strategy in a convex polygon. (a) $\cP$ is in rook position with negative offset $x(\cP) - x(\cE) \leq 1/2$. $\cP$ can maintain this offset when $\cE$ moves to the right. (b) Eventually, $\cP$ will make progress, perhaps by switching to a positive offset when $\cE$ finally moves to the left.}

\label{fig:rook-position}

\end{figure}

Figure \ref{fig:rook-types} shows the  different  configurations for a pursuer in rook position. There are two main types: pocket position and chute position. Each of these has four subtypes, according to where the rook frontier intersects the boundary chains. The last boundary point visited by the pursuer is her checkpoint, see Definition \ref{def:checkpoint-mono} below. We use $c$ to denote the last checkpoint on the lower chain and $a$ to denote the last checkpoint on the upper chain.

\begin{definition}
Suppose that $\cP$ is in rook position.
The pursuer $\cP$ is in \style{upper (lower) pocket position}   when (1)  the evader is above (below) the pursuer,  and (2) both of the endpoints of the rook frontier are on the upper chain $\Pi_U$ (lower chain $\Pi_L$). 
The pursuer  $\cP$ is in \style{upper (lower) chute position}   when (1)  the evader is above (below) the pursuer,  and (2)   the right endpoint of the rook frontier is on the lower (upper) chain.  
\end{definition}

%

\begin{figure}[t]

\begin{center}

\begin{tikzpicture}[scale=.6]


\begin{scope}[shift={(8,-10)}]

\begin{scope}[shift={(115:4.5)}]
\path (3.15,0) coordinate (A1);
\path (-.85,0) coordinate (A2);
\end{scope}

\fill[gray!40] (115:4.5) -- (90:3) -- (82:3.5) -- (80:3) -- (60:4.5) -- (68:5.5) -- (70:4) -- (A1);

\draw[thick]  (129:4.75) -- (128:4.5) -- (115:6.5) --  (105:5) -- (100:6) -- (98:4.5) -- (85:6) -- (70:4) -- (68:5.5);
\draw[thick]  (135: 4.4) -- (130:3.4) -- (115:4.5) --  (90:3) -- (82:3.5) -- (80:3) -- (60:4.5);

\draw (A1) -- (A2);

\draw[fill] (115:4.5) circle (1.5 pt);
\draw[fill] (-.5,4.075) circle (1.5 pt);
\draw[fill] (-.25,3.7) circle (1.5 pt);

\node[above] at (115:4.5) {\small $c$};
\node[above right] at (-.5,4.075) {\small $\cP$};
\node[right] at (-.25,3.7) {\small $\cE$};

\node at (4,3.5) {\small lower chute 1};

\end{scope}


\begin{scope}[shift={(8,-15)}]

\begin{scope}[shift={(115:4.5)}]
\path (3.15,0) coordinate (A1);
\path (-.7,0) coordinate (A2);
\end{scope}

\fill[gray!40] (115:4.5) --  (110:2.5) -- (95:3.5) -- (90:3) -- (82:3.5) -- (80:3) -- (60:4.5) -- (68:5.5) -- (70:4) -- (A1);

\draw[thick]  (129:4.75) -- (128:4.5) -- (115:6.5) --  (105:4.23) -- (100:6) -- (98:4.5) -- (85:6) -- (70:4) -- (68:5.5);
\draw[thick]  (135: 4.4) -- (130:3.4) -- (115.5:5) --  (110:2.5) -- (95:3.5) -- (90:3) -- (82:3.5) -- (80:3) --  (60:4.5);

\draw (115:4.5) -- (A1);

\draw[fill] (115.5:5.025) circle (1.5 pt);
\draw[fill]  (105:4.23) circle (1.5 pt);. The
\draw[fill] (-.5,4.075) circle (1.5 pt);
\draw[fill] (-.25,3.7) circle (1.5 pt);

\node[above] at (115.5:5.025) {\small $c$};
\node[below] at (105:4.23) {\small $a$};
\node[above right] at (-.5,4.075) {\small $\cP$};
\node[right] at (-.25,3.7) {\small $\cE$};

\node at (4,3.5) {\small lower chute 2};

\end{scope}


\begin{scope}[shift={(8,0)}]

\begin{scope}[shift={(115:4.5)}]
\path (3.15,0) coordinate (A1);
\path (-.85,0) coordinate (A2);
\end{scope}

\fill[gray!40]   (115:4.5) --   (90:3) -- (82:3.5) -- (80:3) -- (A1);

\draw[thick]  (129:4.75) -- (128:4.5) -- (115:6.5) --  (105:5) -- (100:6) -- (98:4.5) -- (85:6) -- (80:5) -- (75:5.5);
\draw[thick]  (135: 4.4) -- (130:3.4) -- (115:4.5) --   (90:3) -- (82:3.5) -- (80:3) -- (71:4.8);

\draw (A1) -- (A2);

\draw[fill] (115:4.5) circle (1.5 pt);
\draw[fill] (.25,4.075) circle (1.5 pt);
\draw[fill] (0,3.5) circle (1.5 pt);

\node[above] at (115:4.5) {\small $c$};
\node[above right] at (.25,4.075) {\small $\cP$};
\node[left] at (0,3.6) {\small $\cE$};

\node at (3,3.5) {\small lower pocket 1};

\end{scope}


\begin{scope}[shift={(8,-5)}]

\begin{scope}[shift={(115:4.5)}]
\path (3.15,0) coordinate (A1);
\path (-.7,0) coordinate (A2);
\end{scope}

\fill[gray!40]  (115.5:4.5) -- (110:2.5) -- (95:3.5) -- (90:3) -- (82:3.5) -- (80:3) -- (A1);

\draw[thick]  (129:4.75) -- (128:4.5) -- (115:6.5) --  (105:4.23) -- (100:6) -- (98:4.5) -- (85:6) -- (80:5) -- (75:5.5);
\draw[thick]  (135: 4.4) -- (130:3.4) -- (115.5:5) --  (110:2.5) -- (95:3.5) -- (90:3) -- (82:3.5) -- (80:3) -- (71:4.8);

\draw (115:4.5) -- (A1);

\draw[fill] (115.5:5.025) circle (1.5 pt);
\draw[fill]  (105:4.23) circle (1.5 pt);
\draw[fill] (.1,4.075) circle (1.5 pt);
\draw[fill] (.3,3.7) circle (1.5 pt);

\node[above] at (115.5:5.025) {\small $c$};
\node[below] at (105:4.23) {\small $a$};
\node[above] at (.1,4.075) {\small $\cP$};
\node[right] at (.3,3.8) {\small $\cE$};

\node at (3,3.5) {\small lower pocket 2};

\end{scope}


\begin{scope}[shift={(0,-5)}]

\begin{scope}[shift={(115:4.5)}]
\path (3.15,0) coordinate (A1);
\path (-.85,0) coordinate (A2);
\end{scope}

\fill[gray!40] (115:6.5) --  (105:5) -- (100:6) -- (98:4.5) -- (85:6) -- (A1) -- (A2);

\draw[thick]  (129:4.75) -- (128:4.5)-- (115:6.5) --  (105:5) -- (100:6) -- (98:4.5) -- (85:6) -- (70:4) -- (68:5.5);
\draw[thick]  (135: 4.4) -- (130:3.4) -- (115:4.5) --  (90:3) -- (82:3.5) -- (80:3) -- (60:4.5);

\draw (A1) -- (A2);

\draw[fill] (115:4.5) circle (1.5 pt);
\draw[fill] (-.5,4.075) circle (1.5 pt);
\draw[fill] (-.25,4.5) circle (1.5 pt);

\node[above] at (115:4.5) {\small $c$};
\node[below right] at (-.6,4.075) {\small $\cP$};
\node[right] at (-.25,4.5) {\small $\cE$};

\node at (-4.7,4.5) {\small upper pocket 2};

\end{scope}


\begin{scope}[shift={(0,0)}]

\begin{scope}[shift={(115:4.5)}]
\path (3.15,0) coordinate (A1);
\path (-.7,0) coordinate (A2);
\end{scope}

\fill[gray!40]  (105:4.23) -- (100:6) -- (98:4.5) -- (85:6) -- (A1);

\draw[thick]  (129:4.75) -- (128:4.5) -- (115:6.5) --  (105:4.23) -- (100:6) -- (98:4.5) -- (85:6) -- (70:4) -- (68:5.5);
\draw[thick]  (135: 4.4) -- (130:3.4) -- (115.5:5) --  (110:2.5) -- (95:3.5) -- (90:3) -- (82:3.5) -- (80:3) --  (60:4.5);

\draw (115:4.5) -- (A1);

\draw[fill] (115.5:5.025) circle (1.5 pt);
\draw[fill]  (105:4.23) circle (1.5 pt);
\draw[fill] (0,4.075) circle (1.5 pt);
\draw[fill] (.2,5.15) circle (1.5 pt);

\node[above] at (115.5:5.025) {\small $c$};
\node[below] at (105:4.23) {\small $a$};
\node[below right] at (-.35,4.075) {\small $\cP$};
\node[below right] at (.2,5.25) {\small $\cE$};

\node at (-4.7,4.5) {\small upper pocket 1};

\end{scope}


\begin{scope}[shift={(0,-15)}]

\begin{scope}[shift={(115:4.5)}]
\path (3.15,0) coordinate (A1);
\path (-.85,0) coordinate (A2);
\end{scope}

\fill[gray!40]  (115:6.5) --  (105:5) -- (100:6) -- (98:4.5) -- (85:6) -- (80:5) -- (75:5.5) -- (71:4.8) -- (A1) -- (A2);

\draw[thick]  (129:4.75) -- (128:4.5)-- (115:6.5) --  (105:5) -- (100:6) -- (98:4.5) -- (85:6) -- (80:5) -- (75:5.5);
\draw[thick]  (135: 4.4) -- (130:3.4) -- (115:4.5) --   (90:3) -- (82:3.5) -- (80:3) -- (71:4.8);

\draw (A1) -- (A2);

\draw[fill] (115:4.5) circle (1.5 pt);
\draw[fill] (.25,4.075) circle (1.5 pt);
\draw[fill] (0,4.75) circle (1.5 pt);

\node[above] at (115:4.5) {\small $c$};
\node[below left] at (.5,4.075) {\small $\cP$};
\node[above right] at (0,4.75) {\small $\cE$};

\node at (-4.7,5) {\small upper chute 2};

\end{scope}


\begin{scope}[shift={(0,-10)}]

\begin{scope}[shift={(115:4.5)}]
\path (3.15,0) coordinate (A1);
\path (-.7,0) coordinate (A2);
\end{scope}

\fill[gray!40]  (105:4.23) -- (100:6) -- (98:4.5) -- (85:6) -- (80:5) -- (75:5.5) -- (71:4.8) -- (A1) -- (A2);

\draw[thick]  (129:4.75) -- (128:4.5) -- (115:6.5) --  (105:4.23) -- (100:6) -- (98:4.5) -- (85:6) -- (80:5) -- (75:5.5);
\draw[thick]  (135: 4.4) -- (130:3.4) -- (115.5:5) --  (110:2.5) -- (95:3.5) -- (90:3) -- (82:3.5) -- (80:3) -- (71:4.8);

\draw (115:4.5) -- (A1);

\draw[fill] (115.5:5.025) circle (1.5 pt);
\draw[fill]  (105:4.23) circle (1.5 pt);
\draw[fill] (.25,4.075) circle (1.5 pt);
\draw[fill] (0,4.75) circle (1.5 pt);

\node[above] at (115.5:5.025) {\small $c$};
\node[below] at (105:4.23) {\small $a$};
\node[below left] at (.5,4.075) {\small $\cP$};
\node[above right] at (0,4.75) {\small $\cE$};

\node at (-4.7,4.5) {\small upper chute 1};

\end{scope}

\end{tikzpicture}

\end{center}

\caption{During Rook Mode, the pursuer is  either in pocket position or in chute position, with four subtypes depending on where the rook frontier meets the boundary.  The evader territories are shown in gray. The vertices labeled $c$ and $a$ are checkpoints. Once $\cP$ has visited this point, the horizontal line segment to the left of the checkpoint is guarded, meaning that if $\cE$ crosses this segment then $\cP$ captures the evader on her next turn. }
\label{fig:rook-types}

\end{figure}

When the pursuer establishes pocket position, we have reached the endgame of the pursuit. The evader can try to use the features in the pocket to his advantage, but the pursuer always has a counter move. Eventually, the evader will be caught, just as in a convex polygon. On the other hand, when the pursuer is in chute position, the evader may be able to employ an escape move to prolong the game. 

\begin{definition}
\label{def:escape}
Suppose that $\cP$ and $\cE$ are in upper (lower) chute position. An \style{escape move} by the evader is a rightward move so that the lower (upper) boundary prevents the pursuer from re-establishing rook position. 
\end{definition}

Figure \ref{fig:escapemove} shows an example of an escape move in upper chute position. In response to an escape move, $\cP$ transitions back into Search Mode, using a new search path drawn from her current location. Pursuit continues as before, alternating between Search Mode and Rook Mode until the evader is trapped by a pocket position, which leads to capture.

This concludes the overview of our pursuer strategy. We work through the details in the subsections that follow.

\begin{figure}

\begin{center}
\begin{tikzpicture}[scale=2]

\draw[very thick] (.6, .6) -- (.9,.2) -- (1.3,1)  -- (2.3,.9) -- (2.8, .4);
\draw[very thick]  (.6,0)  -- (1.5,-.3) -- (1.9,0.5) -- (2.4,0) -- (2.8,0);

\fill[gray!20]  (.9,.2) -- (1.3,1) -- (2.3,.9) -- (2.8, .4) -- (2.8,0) -- (2.4,0) -- (1.9,0.5) -- (1.75,.2);

\draw (.6,.2) -- (1.75,.2);

\draw[fill] (1.75, .2) circle (.75pt);

\draw[fill] (1.6, .2) circle (1pt);

\draw[fill] (1.7, .65) circle (1pt);

\draw[fill] (2.3, .45) circle (1pt);

\draw[-latex] (1.7, .65) -- (2.25, .465);

\node[below left] at (1.6, .2) {$\cP$};
\node[below left] at (1.7, .65) {$\cE$};
\node[above right] at (2.3, .45) {$\cE'$};
\node[right] at (1.75, .2) {$b$};

\end{tikzpicture}

\caption{An escape move from upper chute position. When $\cE$ moves to $\cE'$, the lower boundary point $b$ prevents $\cP$ from re-establishing rook position.}

\label{fig:escapemove}

\end{center}
\end{figure}


\subsection{Search Strategy} 
\label{sec:search}

The key to the pursuer's Search Mode is the choice of \style{search path} $\sp$. Algorithm \ref{alg:monotone-path} explains how to  construct $\sp$ from any point $p \in Q$.

\begin{algorithm}
\caption{Create Monotone Search Path}
\label{alg:monotone-path}
\begin{algorithmic}[1]
\Require starting point  $p \in Q$
\While {$p \neq v_n$}
\State Move $p$ horizontally until reaching the boundary $\partial Q$
\If{$p$ is on lower boundary $\Pi_L$}
\State Traverse  upwards along $\Pi_L$ until reaching local maximum vertex $v_k \in \Pi_L$
\Else
\State Traverse downward along $\Pi_U$ until reaching  local minimum vertex $v_k \in \Pi_U$ 
\EndIf 
\EndWhile

\end{algorithmic}
\end{algorithm}

\begin{definition}
A \style{search path} in monotone polygon $Q$ is a path $\sp$ constructed by Algorithm \ref{alg:monotone-path}  Create Monotone Search Path.
\end{definition}

Intuitively, the pursuer follows a search path $\Pi$ that remains as horizontal as possible.
Our original search path is generated from the leftmost vertex, that is $p=v_1$.  A search path always terminates at the rightmost vertex $v_n$.  An example of the initial search path is shown in Figure \ref{fig:search-path} (a). 

To search for the evader, the pursuer travels along the search path according to Algorithm \ref{alg:monotone-search}.  While searching, we say that the pursuer is in \style{Search Mode}. Typically, $\cP$ traverses  the search path at unit speed. However, she stops at the $x$-coordinate of each vertex of $Q$ in order to (1) avoid stepping past an evader hiding behind a feature and (2)  allow for a change in direction when encountering a vertex of $\partial Q$ that is on the search path $\Pi$ (recall that  the pursuer can only move in a straight line on each turn). The pursuer also ensures that  $x(\cP)\leq x(\cE)$ at all times.


\begin{figure}
  \begin{center}
  \begin{tikzpicture}[scale=.95]

	\draw (0,1) -- (0.5,2) -- (1,0) -- (1.3,1.8) -- (1.7,1.3) -- (2.1,1.9) -- (2.6, .2) -- (3.2, 2) -- (3.5, 1) -- (3.75,1.8) -- (3.9,.5) --(4.3,-.2) 
	-- (5,1.5) -- (5.25,.5);
	\draw (0,1) -- (0.45,-.3) -- (.6,.2) -- (1.1,-1.5) -- (1.4, -.4) -- (1.7,-1) -- (1.9,0.5) -- (3,-1.75) -- (3.5,-.75) -- (4.2,-1.5) -- (5, -.5) -- (5.25,.5);
  	\draw[thick] (0,1) -- (.75,1) -- (1,0) -- (1.8333,0) -- (1.9,0.5) -- (2.51176,0.5) -- (2.6,0.2) -- (4.07143,0.2) -- (4.3,-.2) -- (5.075,-0.2)  --(5.25,.5);
	\node[left] at (0,1) {$v_1$};
	\node[right] at (5.25,.5) {$v_n$};

\draw[thick, fill] (0,1) circle (1pt);
\draw[thick, fill] (5.25,.5) circle (1pt);

\node at (4.1,1.8) {$\Pi_U$};
\node at (4.2,-1.75) {$\Pi_L$};
\node at (3.3, .5) {$\Pi$};

\draw[fill] (.6,1) circle (2pt);
\node at (-.1, 1.75) {$\cP_i$};
\draw[-latex] (0, 1.5) -- (.5, 1.1);

\draw[fill] (2.3,0.5) circle (2pt);
\node at (2.65, 1.5) {$\cP_j$};
\draw[-latex] (2.65, 1.4) -- (2.3, 0.6);

\draw[fill] (4.18,0.0) circle (2pt);
\node at (3.55, -.4) {$\cP_k$};
\draw[-latex] (3.7, -.25) -- (4.10, -0.05);

\draw[fill] (5.17,0.18) circle (2pt);
\node at (5.35, -1) {$\cP_{\ell}$};
\draw[-latex] (5.35, -.8) -- (5.23, 0.12);

\node at (3.25, -2.33) {\scriptsize (a)};

\begin{scope}[shift={(6.5,0)}]

	\draw (0,1) -- (0.5,2) -- (1,0) -- (1.3,1.8) -- (1.7,1.3) -- (2.1,1.9) -- (2.6, .2) -- (3.2, 2) -- (3.5, 1) -- (3.75,1.8) -- (3.9,.5) --(4.3,-.2) 
	-- (5,1.5) -- (5.25,.5);
	\draw(0,1) -- (0.45,-.3) -- (.6,.2) -- (1.1,-1.5) -- (1.4, -.4) -- (1.7,-1) -- (1.9,0.5) -- (3,-1.75) -- (3.5,-.75) -- (4.2,-1.5) -- (5, -.5) -- (5.25,.5);


\draw[fill] (.6,1) circle (2pt);
\draw[thick] (0,1) -- (.75,1);
\node[left] at (0,1) {$c_i$};
\node[below] at (.4,1) {$\frontier_i$};
\node at (-.1, 1.75) {$\cP_i$};
\draw[-latex] (0, 1.5) -- (.5, 1.1);
\draw[thick, fill] (0,1) circle (1.5pt);


\draw[fill] (2.3,0.5) circle (2pt);
\draw[thick] (1.1,0.5) -- (2.5,0.5);
\node[above] at (1.9,0.5) {$c_j$};
\node at (1.4,0.75) {$\frontier_j$};
\node at (2.65, 1.5) {$\cP_j$};
\draw[-latex] (2.65, 1.4) -- (2.3, 0.6);
\draw[thick,fill] (1.9,0.5) circle (1.5pt);


\draw[very thick] (2.15,0.0) -- (4.18,0.0) ;
\draw[fill] (4.18,0.0) circle (2pt);
\node at (4.0, -.4) {$\cP_k = c_k$};
\node at (2.8, -0.35) {$\frontier_k$};
\draw[-latex] (3.7, -.25) -- (4.10, -0.05);
\draw[very thick,fill] (4.18,0.0) circle (2pt);


\node[above] at (4.9,0.18) {$\frontier_{\ell}$};
\draw[very thick] (4.45,0.18) -- (5.17,0.18) ;

\draw[fill] (5.17,0.18) circle (2pt);
\node at (5.3, -1) {$c_{\ell} = \cP_{\ell} $};
\draw[-latex] (5.55, -.8) -- (5.23, 0.12);


\node at (3.25, -2.33) {\scriptsize (b)};

\end{scope}

  \end{tikzpicture}

  \end{center}
  \caption{(a) Search path through a monotone polygon with four pursuer positions shown, $i < j < k < \ell$. (b) The corresponding search frontier $\frontier_t$ and  checkpoint $c_t$ for those four pursuer positions. 
 }
  \label{fig:search-path}
\end{figure}

\begin{definition}
\label{def:checkpoint-mono}
Suppose that the pursuer is in Search Mode at time $t$.  The pursuer's
\style{checkpoint} $c_t$ is the rightmost point in 
$\frontier_t^- \cap \partial Q$.
\end{definition}

When the pursuer is in Search Mode in a monotone polygon, her checkpoint is always the last boundary point that she visited. We note here that scallop polygons,  require a more complicated notion of checkpoints that distinguishes between the upper boundary and  the lower boundary; see Definition \ref{def:scallop-checkpoint} in Section \ref{sec:scallop-search-overview}.

\begin{algorithm} 
\caption{Monotone Search Strategy}
\label{alg:monotone-search}
\begin{algorithmic}[1]
\Require Vertices of $Q$ are $ v_1, v_2, \ldots , v_n$ where $x(v_{i}) \leq x(v_{i+1})$ for $1 \leq i < n-1$. 
\Require $\Pi$ is the search path (from Create Monotone Search Path)
\Require ${ x(\cP_{t-1}) \leq x(\cE_{t-1})}$, but $\cE_{t-1}$ might be invisible to $\cP_{t-1}$

\While { {\sc not} $\big(\cE_{t-1}$ is visible and  $0 \leq x(\cE_{t-1})-x(\cP_{t-1}) \leq  1/2\big)$ }
\State Evader moves from $\cE_{t-1}$ to $\cE_{t}$
\State Set checkpoint $c$ to be the rightmost boundary point  in $\partial Q \cap \horizlineneg{\cP_{t-1}}$  
\If{$\cE_t$ is visible and $d(\cP_{t-1}, \cE_t) \leq 1$}
\State $\cP_t \leftarrow \cE_t$, which captures the evader
\ElsIf{$ -1/2 \leq x(\cE_t) - x(\cP_{t-1}) < 0}$ 
\State Note: $\cE_t$ is in evader territory and visible by Lemma \ref{lemma:hide-left}
\State $\cP'_t  \leftarrow (x(\cE_t), y(\cP_{t-1}))$
\Else 
\State Note: $x(\cP_{t-1}) \leq x(\cE_t)$
\State $k \leftarrow$  the unique index for which $x(v_k) \leq x(\cP_{t-1}) < x(v_{k+1})$ 
\State $Z \leftarrow $  the rightmost search path point in $\{ W \in \Pi : d(\cP_{t-1}, W)  \leq 1\} $
\State $\cP_{t} \leftarrow $   the point on $\Pi$ with $x$-coordinate $\min \{ x(v_{k+1}), x(Z), x(\cE_t) -1/2 \}$
\EndIf
\State Update $t \leftarrow t+1$
\EndWhile

\end{algorithmic}
\end{algorithm}


\begin{definition}
\label{def:search-terms}
Suppose that the pursuer is in Search Mode at time $t$ with checkpoint $c_t$. The pursuer's \style{search frontier}  
$\frontier_t = \horizline{\cP_t}$ 
is the horizontal line segment through $\cP_t$.
The \style{guarded frontier}
$\frontier_t' = X^-(c_t)$
is the left part of $\frontier_t$, up to and including the checkpoint $c_t$.
The guarded frontier $\frontier_t'$ partitions $Q$ into two sub-polygons:   the \style{pursuer territory} $Q_{\cP}(t)$ to the left,  and   the \style{evader territory} $Q_{\cE}(t)$ to the right. 
If a point $Z \in Q_{\cP}(t)$ then we say that $Z$ is \style{guarded}, and that $\cP$ \style{guards} $Z$.
\end{definition}

Figure \ref{fig:search-path}(b) shows the search frontier and the checkpoint for four different pursuer positions.  Figure~\ref{fig:territory} gives two examples of guarded frontiers and their corresponding pursuer territory and evader territory.
We  prove below that $\cE$ cannot cross the guarded frontier  $\frontier_t'$ without being captured.
Observe that the checkpoint $c_t$ is the unique rightmost vertex in the pursuer territory $Q_{\cP}(t)$. Similarly, the left endpoint of the guarded frontier $\frontier_t'$ is  the unique leftmost point in the evader territory $Q_{\cE}(t)$. Both of these facts play a role in the pursuer's ability to guard $Q_{\cP}(t)$.

As the pursuer searches the polygon, the checkpoint represents our progress.
Notice that $c_t=v_1$ until $\cP$ first reaches $\partial Q$. Up until that event, the pursuer territory is simply the point $v_1$.  When the checkpoint location is updated, its $x$-coordinate increases, causing the pursuer territory to increase and the evader territory to decrease. In particular, if $t < t'$ then $Q_{\cP}(t) \subseteq Q_{\cP}(t')$ and
$Q_{\cE}(t) \supseteq Q_{\cE}(t)$ with equalities holding if and only if we have not updated our checkpoint. We  discuss this notion of progress more rigorously in  Section~\ref{sec:time}.

\begin{definition}
The searching pursuer $\cP$ \style{makes progress} in the monotone polygon every time that she updates her checkpoint from $c_t$ to $c_t'$ where $t < t'$ and $x(c_t) < x(c_{t'})$.
\end{definition}

\begin{figure}
  \begin{center}
  \begin{tikzpicture}[scale=.9]

\fill[gray!25] (2.6, .2) -- (3.2, 2) -- (3.5, 1) -- (3.75,1.8) -- (3.9,.5) -- (4.07143,0.2)  -- cycle;

\fill[gray!25] (2.05,0.2) -- (4.07143,0.2) --(4.3,-.2) -- (5, 2.1) -- (5.5,.5) -- (4.2,-1.5) -- (3.5,-.75)  -- (3,-1.75) -- cycle;  
  
	\draw (1,0) -- (1.3,1.8) -- (1.7,1.3) -- (2.1,1.9) -- (2.6, .2) -- (3.2, 2) -- (3.5, 1) -- (3.75,1.8) -- (3.9,.5) --(4.3,-.2) -- (5, 2.1) -- (5.5, .5);
	\draw (1,0) -- (1.7,-1) -- (1.9,0.5) -- (3,-1.75) -- (3.5,-.75) -- (4.2,-1.5) -- (5.5, .5);
  	\draw[dashed] (1,0) -- (1.8333,0) -- (1.9,0.5) -- (2.51176,0.5) -- (2.6,0.2) -- (4.07143,0.2) -- (4.3,-.2) -- (5,-0.2);

\draw[thick] (2.05,0.2) -- (2.6, 0.2); 

\draw[thick, fill] (2.6, 0.2) circle (2pt);

\node[above right] at (2.6, 0.2) {$c$};

\draw[fill] (3.5,0.2) circle (2pt);
\node at (3.3, -.15) {$\cP$};

\node at (2.6, -.05) {$X'$};

\node at(1.7, .9) {$Q_{\cP}$};
\node at(4.25, -.8) {$Q_{\cE}$};

\begin{scope}[shift={(5.75,0)}]

\draw[gray!50, fill=gray!25] (2.1,0) -- (4.1,0) --(4.3,-.2) -- (5, 2.1) -- (5.5,.5) -- (4.2,-1.5) -- (3.5,-.75)  -- (3,-1.75) -- cycle;  
  
	\draw (1,0) -- (1.3,1.8) -- (1.7,1.3) -- (2.1,1.9) -- (2.6, .2) -- (3.2, 2) -- (3.5, 1) -- (3.75,1.8) -- (3.9,.5) --(4.3,-.2) -- (5, 2.1) -- (5.5, .5);
	\draw (1,0) -- (1.7,-1) -- (1.9,0.5) -- (3,-1.75) -- (3.5,-.75) -- (4.2,-1.5) -- (5.5, .5);
  	\draw[dashed] (1,0) -- (1.8333,0) -- (1.9,0.5) -- (2.51176,0.5) -- (2.6,0.2) -- (4.07143,0.2) -- (4.3,-.2) -- (5,-0.2);

\draw[thick] (2.15,0) -- (4.2,0);
\draw[fill] (4.17,.0) circle (2pt);
\node at (4.25,-.3) {$\cP = c$};

\node at (3.2,-.25) {$X'$};

\node at(1.7, .9) {$Q_{\cP}$};
\node at(4.25, -.8) {$Q_{\cE}$};

\end{scope}

  \end{tikzpicture}

  \end{center}
  \caption{ Two examples of a guarded frontier $X'$. The dashed lines indicate the search path. The evader territory $Q_{\cE}$ is shaded and pursuer territory $Q_{\cP}$ is unshaded. The checkpoint $c$ is the rightmost point of $X'$, as well as the rightmost point in $Q_{\cP}$.}
  \label{fig:territory}
\end{figure}

Next, we show that if  $\cE$ moves horizontally  past $\cP$, then he must  be visible.

\begin {lemma} 
\label{lemma:hide-left}
Suppose that $\cP$ is following the search strategy and  $x(\cP_{t-1}) \leq x(\cE_{t-1})$.  If  $x(\cE_{t} ) < x(\cP_{t-1})$ then $\cP_{t-1}$ can see $\cE_{t}$. Furthermore, if $\cE$ also crossed the horizontal line $\horizline{\cP_{t-1}}$, then he is captured immediately.  
\end{lemma}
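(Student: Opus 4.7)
The plan is to use that the evader's move segment $\overline{\cE_{t-1}\cE_t} \subset Q$ must cross the vertical line $Y(\cP_{t-1})$, since $x(\cE_{t-1}) \ge x(\cP_{t-1}) > x(\cE_t)$; call this crossing point $W$. By the monotonicity of $Q$, the vertical slice $Y(\cP_{t-1}) \cap Q$ is a single connected segment, so $\overline{\cP_{t-1}W} \subset Q$. If $W = \cP_{t-1}$ then the evader passes through the pursuer's location mid-move and is captured; otherwise I may assume by symmetry that $y(W) > y(\cP_{t-1})$, the case $y(W) < y(\cP_{t-1})$ being handled identically after swapping the roles of $\Pi_U$ and $\Pi_L$.

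I would then establish $\overline{\cP_{t-1}\cE_t} \subset Q$ by showing that the triangle $T = \triangle \cP_{t-1} W \cE_t$ lies in $Q$. For each $x_0 \in [x(\cE_t), x(\cP_{t-1})]$, the vertical slice of $T$ at $x_0$ is an interval $[y_l(x_0), y_u(x_0)]$, where $y_u$ lies on $\overline{W\cE_t}$ and $y_l$ lies on $\overline{\cP_{t-1}\cE_t}$; since $y(W) > y(\cP_{t-1})$ and both segments share the endpoint $\cE_t$, linear interpolation in $x_0$ gives $y_l \le y_u$ throughout. The upper bound $y_u(x_0) \le \Pi_U(x_0)$ is immediate from $\overline{W\cE_t} \subset Q$. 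The main obstacle is the lower bound $y_l(x_0) \ge \Pi_L(x_0)$: since $\cP$ follows the search path, $\cP_{t-1}$ lies on a horizontal portion of $\sp$ originating at some vertex $v_j$ with $y(v_j) = y(\cP_{t-1})$, so the line at height $y(\cP_{t-1})$ remains in $Q$ across $[x(v_j), x(\cP_{t-1})]$; combined with the unit-speed bound $x(\cP_{t-1}) - x(\cE_t) \le x(\cE_{t-1}) - x(\cE_t) \le 1$ and the minimum feature size assumption, this prevents $\Pi_L$ from rising above the segment $\overline{\cP_{t-1}\cE_t}$ on $[x(\cE_t), x(\cP_{t-1})]$.

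For the second claim, suppose the evader's move additionally crosses $X(\cP_{t-1})$ at some point $W' \neq \cP_{t-1}$. Then $W$ and $W'$ both lie on the move of length at most $1$, with $W \in Y(\cP_{t-1})$ and $W' \in X(\cP_{t-1})$, so the Pythagorean relation
\[
|W - \cP_{t-1}|^2 + |W' - \cP_{t-1}|^2 = |W - W'|^2 \le 1
\]
yields $d(\cP_{t-1}, \cE_t) \le d(\cP_{t-1}, W) + d(W, \cE_t) \le 2$. Using $\overline{\cP_{t-1}\cE_t} \subset Q$ from the first claim, the pursuer can advance up to one unit along this segment toward $\cE_t$ on her next turn, reaching a point $\cP_t$ with $d(\cP_t, \cE_t) \le 1$ and $\overline{\cP_t\cE_t} \subset \overline{\cP_{t-1}\cE_t} \subset Q$, which is capture.
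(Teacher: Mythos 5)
Your reduction is sound up to a point: the crossing point $W$ exists, $\overline{\cP_{t-1}W}\subset Q$ by monotonicity, the upper edge $\overline{W\cE_t}$ is part of the evader's move, and since $Q$ is simply connected the triangle lies in $Q$ exactly when its remaining edge $\overline{\cP_{t-1}\cE_t}$ does. But that means the entire burden falls on your lower bound $y_l(x_0)\ge \Pi_L(x_0)$, and that step is asserted rather than proved. The ingredients you cite do not establish it: $\cP_{t-1}$ need not sit on a horizontal portion of the search path (she may be traversing a boundary edge), and even when she does, the search invariant only guarantees that the horizontal segment $X(\cP_{t-1})$ is clear to her left as far as the checkpoint --- it says nothing about the region \emph{below} that segment. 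This matters precisely in the case covered by the ``furthermore'' clause, where $y(\cE_t)<y(\cP_{t-1})$ and $\overline{\cP_{t-1}\cE_t}$ dips beneath the guarded line into territory where $\Pi_L$ may have a local maximum inside the sub-unit window $[x(\cE_t),x(\cP_{t-1})]$. The minimum feature size bounds pairwise distances between vertices, not the number of vertices in a thin vertical strip, so it does not by itself exclude such a peak rising above your chord. Your second claim inherits the gap, since advancing along $\overline{\cP_{t-1}\cE_t}$ presupposes that segment is in $Q$ (your distance bound $d(\cP_{t-1},\cE_t)\le 2$ is itself fine, just weaker than the paper's).

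The paper closes exactly this hole by a mechanism your proof never invokes. Since $x(\cE_t)<x(\cP_{t-1})<x(\cE_{t-1})$, if both evader positions were hidden from $\cP_{t-1}$ they would be obscured by two distinct vertices within distance $d(\cE_{t-1},\cE_t)\le 1$ of each other, contradicting the feature-size assumption; hence at most one of $\cE_{t-1},\cE_t$ is hidden. When the evader also crosses $X(\cP_{t-1})$, the angle $\angle\,\cE_{t-1}\cP_{t-1}\cE_t$ is obtuse, so both legs are shorter than $d(\cE_{t-1},\cE_t)\le 1$; the visible endpoint cannot be $\cE_{t-1}$ (the game would already have ended at time $t-1$), so $\cE_t$ is visible and within capture range --- no containment of a triangle is needed. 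In the remaining case, with $\cE_t$ on the same side of $X(\cP_{t-1})$, the paper rules out an upper-chain obstruction $Z$ of $\overline{\cP_{t-1}\cE_t}$ by noting that the point $Z'$ directly above $Z$ on $\overline{\cE_{t-1}\cE_t}$ would lie outside $Q$, so the evader's own move would have been blocked. To repair your argument you would need both of these ideas; the slice-by-slice computation alone cannot substitute for them.
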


\begin{proof} 
First, suppose that both $\cE_{t-1}, \cE_{t}$ are invisible to $\cP_{t-1}$.
We have $x(\cE_{t}) < x(\cP_{t-1}) < x(\cE_{t-1})$, so each evader position is obscured by a distinct vertex. However, these two vertices would be separated by a distance strictly less than $d(\cE_{t-1}, \cE_{t}) \leq 1$, which is impossible due to the minimum feature size of $Q$, which requires that   no two vertices are within one unit of each other.
 Therefore, at least one of $\cE_{t-1}, \cE_{t}$ is visible to $\cP_{t-1}$. Without loss of generality, assume that $y(\cP_{t-1})\leq y(\cE_{t-1})$.  There are two cases, depending on the location of $\cE_{t}$.

Suppose that $y(\cP_{t-1}) > y(\cE_{t})$, so that $\cE$ has moved from the first quadrant of $\cP$ to her third quadrant, see Figure \ref{fig:hide-left}(a). Consider the triangle $\cE_{t-1} \cP_{t-1} \cE_{t}$, which might intersect the boundary of $Q$. The angle $\angle \cE_{t-1} \cP_{t-1} \cE_{t}$ is obtuse, which means that
$\overline{\cE_{t-1} \cE_{t}}$ is the unique longest side. We know that at least one of $\cE_{t-1}, \cE_{t}$ is visible to $\cP_{t-1}$. If $\cP_{t-1}$ sees $\cE_{t-1}$ then
$d(\cP_{t-1},\cE_{t-1}) < 1$,  so the game was over at time $t-1$. Therefore $\cP_{t-1}$ must see $\cE_{t}$ and  $d(\cP_{t-1},\cE_{t}) < 1$, so the game is over after this evader move.

Next, suppose that $y(\cP_{t-1}) \leq y(\cE_{t})$, so that $\cE$ has moved from the first quadrant of $\cP$ to her second quadrant, see Figure \ref{fig:hide-left}(b).
Assume that $\cP_{t-1}$ cannot see $\cE_{t}$. Since we are following the search strategy, no lower feature to the left of the pursuer can impede her horizontal movement. Therefore the pursuer's visibility must be blocked by the upper chain $\Pi_U$. In other words,   $\overline{\cP_{t-1}\cE_{t}}$ intersects $\Pi_U$ at some point $Z$; see Figure~\ref{fig:territory}(a). Let $Z'$ be the point on $\overline{\cE_{t-1} \cE_{t}}$ that is directly above $Z$. Since $Z$ is on the boundary of $Q$, the point $Z'$ lies outside $Q$. Therefore $\overline{\cE_{t-1}\cE_{t}}$ also intersects $\Pi_U$, which means that $\cE_{t}$ was not visible from $\cE_{t-1}$, and so the evader could not have moved to $\cE_{t}$. We conclude that $\cP_{t-1}$ sees $\cE_{t}$.
\end{proof}

\begin{figure}
  \begin{center}
  \begin{tikzpicture}[scale=.9]

\begin{scope}[shift={(-5,2)}]

\draw (0, -1.5) -- (0,1.5);
\draw (-1.5, 0) -- (1.5,0);

\draw[fill] (0,0) circle (2pt);

\draw[fill] (.25,1) circle (2pt);

\draw[fill] (-1,-.25) circle (2pt);

\draw[-latex] (.25,1) -- (-.95,-.225);

\draw[dashed] (.25,1) -- (0,0) -- (-1, -.25);

\node[below right] at (0,0) {\small $\cP_{t-1}$};

\node[right] at (.25,1) {\small $\cE_{t-1}$};

\node[below] at (-1,-.25) {\small $\cE_{t}$};

\node at (-.75,.6) {\small $\leq 1$};

\end{scope}  

\node at (-5, 0) {\scriptsize (a)};

\begin{scope}[shift={(0,0)}]

\fill[gray!50] (0,1) -- (.25, .5) -- (1.5, .5) -- (2,1.5) -- (.27,1.5);

\draw (0,1) -- (1, 3) -- (2,4) -- (2.35, 2) -- (3,4) -- (4, 3) -- (5, 2);

\draw (0,1) -- (.25, .5) -- (1.5, .5) -- (2,1.5) -- (2.5, .5) -- (4,1) -- (5, 2);

\draw[thick] (2,1.5) -- (.27,1.5);
\draw[dashed] (2,1.5) -- (4.5,1.5);

\draw[-latex] (3.2, 3.2) -- (1.95,3);

\draw[fill]  (3.2, 3.2) circle (2pt);

\node[below right] at (3.2, 3.2) {\small $\cE_{t-1}$};

\draw[fill]  (1.9, 3) circle (2pt);

\node[left] at (1.9, 3) {\small $\cE_{t}$};

\draw[fill]  (3, 1.5) circle (2pt);

\node[below right] at (3, 1.5) {\small $\cP_{t-1}$};

\draw[thick, dashed] (3,1.5) -- (1.9,3);

\draw[thick, dashed] (2.45,2.25) -- (2.45, 3.075);

\draw[fill]  (2.45,2.25) circle (2pt);
\node[right] at (2.45,2.25) {\small $Z$};

\draw[fill]  (2.45, 3.075) circle (2pt);

\node[above] at (2.45, 3.075) {\small $Z'$};

\node at (.9,1) {\scriptsize $Q_{\cP}(t-1)$};

\node at (2.5, 0) {\scriptsize (b)};

\end{scope}

  \end{tikzpicture}

  \end{center}
  \caption{The evader cannot hide when moving past the pursuer to the left. (a) When $\cE$ moves from the first quadrant to the third quadrant, at least one of $\cE_{t-1},\cE_t$ is visible to $\cP_t$ and within unit distance. (b) When $\cE$ starts above $\cP$, he cannot hide behind an upper feature.}
  \label{fig:hide-left}
\end{figure}

We can now prove that the pursuer achieves rook position by following the search strategy.

\begin{lemma} \label{lemma:search}
If  $\cP$ follows the search strategy then $\cE$ cannot step into the pursuer territory without being caught. Furthermore, $\cP$  either achieves rook position or captures $\cE$ within $O(n(Q) + \diam(Q))$ rounds. 
\end{lemma}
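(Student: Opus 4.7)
I would split the statement into a \emph{safety invariant}---that $\cE$ cannot enter $Q_{\cP}$ without being captured---and a \emph{timing bound} of $O(n(Q) + \diam(Q))$ rounds. The safety part would proceed by induction on $t$, with hypothesis ``if $\cE$ has not been captured by round $t$, then $\cE_t \in Q_{\cE}(t)$.'' The base case $t = 0$ holds because $Q_{\cP}(0) = \{v_1\}$ and we may assume $\cE_0 \neq v_1$. For the inductive step, the guarded frontier $\frontier_t'$ sits inside the horizontal line $X(\cP_{t-1})$, to the left of $\cP_{t-1}$, so one analyzes where $\cE_t$ lies relative to $X(\cP_{t-1})$ and $Y(\cP_{t-1})$. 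If $x(\cE_t) \geq x(\cP_{t-1})$, then $\cE_t$ is on the evader side of the vertical line and automatically in $Q_{\cE}(t)$. Otherwise $x(\cE_t) < x(\cP_{t-1})$, and Lemma~\ref{lemma:hide-left} implies that $\cE_t$ is visible from $\cP_{t-1}$; moreover, if the segment $\overline{\cE_{t-1}\cE_t}$ also crosses the horizontal $X(\cP_{t-1})$, the same lemma yields immediate capture. The remaining subcase is that $\cE$ slides behind $\cP$ while staying on the same horizontal side. Here the unit-step bound on $\cE$ together with the previous round's \texttt{while} guard give $-1/2 \leq x(\cE_t) - x(\cP_{t-1}) < 0$, so the algorithm's second branch fires and $\cP$ advances to attain rook position, all without $\cE$ crossing $\frontier_t'$.

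For the timing bound, whenever the \texttt{else} branch runs, $\cP$ advances along the search path $\Pi$ to the point with $x$-coordinate $\min\{x(v_{k+1}), x(Z), x(\cE_t) - 1/2\}$. Truncations of the first kind can occur at most $n(Q)$ times in total, since there are only that many vertex $x$-coordinates. Between such truncations, $\cP$ makes unit-length moves, and the total arc length of $\Pi$ is $O(\diam(Q))$: horizontal segments of $\Pi$ sum to at most the $x$-extent of $Q$ (because $\Pi$ is $x$-monotone), and its boundary portions are disjoint sub-arcs of the $x$-monotone chains $\Pi_U$ and $\Pi_L$, whose lengths are also $O(\diam(Q))$. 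Truncations of the third kind indicate that $\cE$ is close enough for rook position to be imminent, so they occur at most once per search phase. Hence the search phase terminates in $O(n(Q) + \diam(Q))$ rounds; if $\cP$ ever reaches $v_n$ without attaining rook position, then the guarded frontier has collapsed to the single point $v_n$, so $Q_{\cE} = \emptyset$ and, by the safety invariant, $\cE$ has already been captured.

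The step I expect to be most delicate is the safety invariant in the setting where $X(\cP_{t-1}) \cap Q$ is disconnected, so that the guarded frontier $\frontier_t'$ consists of several horizontal pieces. I would need to show that the checkpoint $c_t$, defined as the nearest boundary point on $X(\cP_{t-1})$ to the left of $\cP_{t-1}$, cleanly partitions $Q$ into a pursuer component containing $v_1$ and an evader component containing $\cE_t$, and that this partition evolves consistently from round to round. The subtlest moments are transitions of the search path between $\Pi_U$ and $\Pi_L$, where the relative ``above/below'' orientation of $\cP$ with respect to $\cE$ flips, and I would want to verify that no previously-cleared sliver of territory is unintentionally re-exposed.
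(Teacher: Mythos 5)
Your proposal follows essentially the same route as the paper's proof: an induction maintaining the invariants $\cE_t \notin Q_{\cP}(t)$ and $x(\cP_t) \leq x(\cE_t)$, a case split on whether $\cE$ stays to the right of $\cP$ or slips leftward past her (the latter handled via Lemma~\ref{lemma:hide-left}, with the crossing subcase ruled out because it would have meant capture at time $t-1$), and a timing bound of $O(n(Q))$ vertex-induced truncations plus $O(\diam(Q))$ unit steps along the $x$-monotone search path. The one small imprecision is that when $\cE_{t-1}$ is invisible at offset in $[0,1/2]$ the while-guard only forces $x(\cE_t)-x(\cP_{t-1})\ge -1$ rather than $-1/2$, but the horizontal countermove to $(x(\cE_t), y(\cP_{t-1}))$ still has length at most $x(\cE_{t-1})-x(\cE_t)\le 1$, exactly as in the paper's Case~3, so nothing breaks.
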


\begin{proof}
We claim that two loop invariants are maintained by the pursuer during Search Mode: $\cE_t \notin Q_{\cP}(t)$ and $x(\cP_t) \leq x(\cE_t)$. 
Initially $\cP$ is located at the leftmost vertex $\cP_0 = v_1$,  so $x(\cP_0) \leq x(\cE_0)$ and
the pursuer territory $Q_{\cP}(0) = \{ v_1 \}$. 
Now suppose that  $\cP$ has not established rook position by time $t-1$ with $x(\cP_{t-1}) < x(\cE_{t-1})$ and $\cE_{t-1} \notin Q_{\cP} (t-1)$. Since $\cP$ is not in rook position, either $x(\cE_{t-1}) > x(\cP_{t-1})  + 1/2$, or the evader is hidden and $x(\cP_{t-1}) \leq x(\cE_{t-1}) \leq x(\cP_{t-1}) + 1/2$. At this point, the evader moves from $\cE_{t-1}$ to $\cE_t$. There are four cases, three of which result in rook position (or capture). 

{\bf Case 1:} Either $x(\cP_{t-1}) + 1/2 < x(\cE_t)$, or $x(\cP_{t-1}) \leq  x(\cE_t)$ and $\cP_{t-1}$ does not see $\cE_{t-1}$. Note that $\cE_t \notin Q_{\cP}(t-1)$ since $x(c_{t-1}) \leq x(\cP_{t-1}) < x(\cE_t)$ and $c_{t-1}$ is the rightmost point in $Q_{\cP}(t-1)$. In response, the pursuer moves forward along $\Pi$. If she encounters the $x$-coordinate of a vertex, encounters the boundary,  or achieves $x(\cE_t) - x(\cP_t) \leq 1/2$ with visibility, then she stops; otherwise she moves unit distance. Either way, she maintains $x(c_t) \leq x(\cP_{t}) < x(\cE_t)$ and $\cE_t \notin Q_{\cP}(t)$. After this move, if $x(\cE_t) - x(\cP_t) \leq   1/2$ with visibility, then the pursuer has achieved rook position; otherwise we continue in Search Mode. The number of such moves is $O(\diam(Q) + n)$. 

{\bf Case 2:}  $\cP_{t-1}$ sees $\cE_t$ and 
$x(\cP_{t-1}) \leq x(\cE_t) \leq x(\cP_{t-1}) + 1/2$. This means that $\cP$ is in already in rook position, so we take $\cP_t = \cP_{t-1}$ and $c_t =c_{t-1}$.  Since $x(c_t) \leq x(\cP_{t}) \leq x(\cE_t)$, we have $\cE_t \notin Q_{\cP}(t)$ (unless $c_t = \cP_t = \cE_t$, which ends the game).

{\bf Case 3:}  $x(\cE_t) < x(\cP_{t-1})$ and $\cE_t \in Q_{\cE}(t-1)$. The  evader position $\cE_{t}$ is visible to $\cP_{t-1}$ by Lemma \ref{lemma:hide-left}. Since the leftmost point of $Q_{\cE}(t-1)$ is the left endpoint of $\frontier_t'$,  the pursuer can move horizontally to the point $(x(\cE_t), y(\cP_{t-1}))$ to achieve rook position because $x(\cP_t) - x(\cE_t) \leq x(\cE_{t-1}) - x(\cE_t) \leq 1$.

{\bf Case 4:}   $x(\cE_t) < x(\cP_{t-1})$ and $\cE_t \in Q_{\cP}(t-1)$, so that the evader crossed the guarded frontier $X'_{t-1}$. 
This means that $\cE_t$ is not visible to $\cP_{t-1}$, blocked by the feature containing the current checkpoint. As observed in the proof of Lemma \ref{lemma:hide-left}, this means that $d(\cP_{t-1}, \cE_{t-1}) \leq 1$ and $\cE_{t-1}$ was visible to $\cP_{t-1}$. Therefore the game was over at time $t-1$, and this case cannot occur.

Having established these two loop invariants, we now show that $\cP$ achieves rook position.
When $\cP$ encounters a boundary point at time $t$, she makes progress by updating her checkpoint, which also increases the pursuer territory $Q_{\cP}(t)$. Our loop invariants show that $\cE \notin Q_{\cP}(t)$ and that $\cE$ is to the right of $\cP$. 

Assume for the sake of contradiction that the searching $\cP$ reaches the rightmost vertex $v_n$ without achieving rook position.
Let $t$ be the time that $\cP$ reaches $v_n$, so that
$x(v_n) - x(P_{t-1}) \leq 1$.
The loop invariants ensure that $x(\cP_{t-1}) < x(\cE_{t-1}) \leq x(v_n)$. We claim that $\overline{\cP_{t-1} \cE_{t-1}} \in Q$. Indeed,  the searching pursuer stops at the $x$-coordinate of every boundary vertex, so $x(v_{n-1}) \leq x(\cP_{t-1}) < x(\cE_{t-1})$, which means that there are no features between $\cP_{t-1}$ and $\cE_{t-1}$.  
The pursuer  $\cP_{t-1}$ was not in rook position, so $1/2 < x(E_{t-1}) - x(P_{t-1}) \leq
x(v_n) - x(P_{t-1}) \leq 1$.

Now consider $\cE_t$. We claim that  $1/2 < x(\cE_{t}) - x(\cP_{t-1}) \leq 1$. Otherwise,  
$|x(E_{t}) - x(P_{t-1})| \leq 1/2$, and the pursuer could have remained stationary $\cP_t = \cP_{t-1}$ to achieve rook position, contradicting the pursuer's move to $v_n$. The evader is still visible, so $\cP$ can increase her $x$-coordinate to
$x(\cE_{t}) - 1/2$ and achieve rook position, rather than stepping to $v_n$, a contradiction once again.
Therefore, $\cP$ achieves rook position before reaching $v_n$.

\end{proof}

%

\subsection{Rook's Strategy} \label{rook}

After searching, $\cP$ uses the  \style{monotone rook's strategy} of Algorithm \ref{alg:monotone-rook} to either capture $\cE$ or to reduce the evader territory. A pursuer executing this algorithm is in \style{Rook Mode}.  We describe two key features  of Rook Mode before delving into the proofs.

Once $\cP$ is in rook position, the  entire frontier $\frontier_t$ is actually guarded. In other words,  the evader cannot cross $\frontier_t$ without being caught by the pursuer. Therefore, when the pursuer enters Rook Mode, we immediately update the guarded frontier to be $\frontier_t' :=\frontier_t$ and update the checkpoint $c_t$ to be the righthand endpoint of this segment.  

\begin{definition}
Suppose that the pursuer is in rook position at time $t$. The horizontal segment  $\frontier_t = \horizline{\cP_t}$ is the \style{rook frontier}.
\end{definition}

The pursuer  $\cP$  can methodically advance the rook frontier via rook's strategy.  Any leftward movement of $\cE$ can be copied by $\cP$ since the leftmost point in $Q_{\cE}(t)$ is the endpoint of $\frontier_t$. The pursuer can also mimic the evader's rightward movement, up to the checkpoint $c_t$. If the pursuer is impeded by the boundary to the right, she will start a new Search Mode. Throughout these horizontal moves, the pursuer territory is guarded from incursion.

We now give a proper description of Rook Mode. 
 Recall that there are two types of rook position: pocket position and chute position, as shown in Figure  \ref{fig:rook-types}.
 We will see that pocket position is the endgame of the pursuit: $\cP$ can maintain rook position and methodically advance her rook frontier until capture.
In an upper pocket position, the lower chain $\Pi_L$ lies beneath the rook frontier $\frontier(\cP)$, so  features on $\Pi_L$ cannot impede the pursuer's movements.  Meanwhile we will see that any attempt by $\cE$ to use upper features to his advantage can be immediately neutralized. The analogous statements hold for a lower pocket position.
On the other hand, if $\cP$ is in chute position, then $\cE$ can break out of this rook configuration via an escape move (see Definition \ref{def:escape}) in which  the evader moves rightward so that the  boundary blocks the rook response. This forces the pursuer to abandon Rook Mode and initiate a new Search Mode.  In this transition to a new Search Mode, we will have increased the pursuer territory and updated the pursuer's checkpoint, so we have made progress towards eventual capture.

\begin{algorithm}
\caption{ Monotone Rook's Strategy }
\label{alg:monotone-rook}
\begin{algorithmic}[1]
\Require  $0 \leq x(\cE_{t-1}) - x(\cP_{t-1}) \leq  1/2$ and $\cP_{t-1}$ sees $\cE_{t-1}$
\Require $Q_{\cE}(t-1)$ is bounded above by $\Pi_U$  and below by the rook frontier $\frontier(\cP_{t-1})$
\While {$\cE$ is not captured}
\State Evader moves from $\cE_{t-1}$ to $\cE_t$
\If {$\cE$ made an escape move}
\State Note: $\cP$ must have been in chute position
\State Exit  (in order to start a new Monotone Search Strategy)
\Else
\If{$\cE$ crossed below $\frontier(P_{t-1})$}
\State $\cP_t \leftarrow$  the point where the evader crossed $\frontier(\cP_{t-1})$, capturing $\cE$
\ElsIf{$\cE_t$ is within $\sqrt{3}/2$ of $\frontier(\cP_{t-1})$}
\State Note: $\cE_t$ must be visible to $\cP_{t-1}$
\State $\cP_{t} \leftarrow$  the point on $\overline{ \cP_{t-1} \cE_t}$ as close to $\cE_t$ as possible, capturing $\cE$
\ElsIf{$\cE_t$ is not in sight}
\State $\cP_t \leftarrow$   the highest reachable point below $\cE_{t-1}$
\State Note: this advances the rook frontier by at least $\sqrt{3}/2$.
\ElsIf {$|x(\cP_{t-1})-x(\cE_t)|\le 1$}
\State $\alpha \leftarrow$  the closer of $x(\cE_t) \pm 1/2$ to $x(\cP_{t-1})$
\State $\cP_t \leftarrow$  the highest reachable point on vertical line $x = \alpha$
\State Note: this advances the rook frontier by at least $7/22$.
\Else 
\State $\cP_t \leftarrow$ the highest reachable point on vertical line $x = x(\cE_t) - 1/2$
\EndIf
\EndIf
\State Update $t \leftarrow t+1$
\EndWhile
\end{algorithmic}
\end{algorithm}

We discuss the behavior of Algorithm \ref{alg:monotone-rook} in pocket position and in chute position. We will then be able to prove Theorem \ref{thm:monotone}.

\subsubsection{Pocket Position}
\label{sec:pocket}

We prove that once the pursuer has achieved pocket position, she can capture the evader using 
the Monotone Rook's Strategy.

\begin{lemma}
\label{lemma:endgame}
If   $\cP$ is in pocket position then  the monotone rook's strategy captures the evader in $O(\Area(Q_{\cE}))$ turns.
\end{lemma}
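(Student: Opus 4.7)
The plan is to reduce to an upper pocket position ($\cE$ above $\cP$, both endpoints of the rook frontier $X(\cP)$ on the upper chain $\Pi_U$) by symmetry, and to argue in three stages: feasibility of the rook response, monotone progress of the rook frontier, and a time bound based on pocket area.

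First, I verify that $\cP$ can respond to any evader move while preserving rook position. Let $\delta = x(\cP) - x(\cE) \in [-1/2,1/2]$ be the current offset and $(a,b)$, with $a^2+b^2 \le 1$, the evader's displacement. The pursuer picks the new offset $\delta'\in[-1/2,1/2]$ to be the value closest to $\delta-a$; her required horizontal displacement is then $\Delta x_P = a+\delta'-\delta$, which always satisfies $|\Delta x_P|\le 1$. She spends the remaining budget on upward motion $\Delta y_P=\sqrt{1-\Delta x_P^2}\ge 0$, capped so as not to overshoot $\cE$. Visibility after the move is preserved because (i) in upper pocket position the lower chain $\Pi_L$ lies entirely below the rook frontier and cannot obstruct a sightline between $\cP$ on the frontier and $\cE$ above it, and (ii) $\cE$ cannot hide behind an upper feature in a single move, by a feature-size argument analogous to Lemma~\ref{lemma:hide-left}: the two vertices flanking such a concave notch on $\Pi_U$ would have to be within unit distance, violating the minimum feature size. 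I will also check that the pursuer's straight-line move stays inside $Q$, using that the strip between $y(\cP_{t-1})$ and $y(\cP_t)$ inside the pocket is free of features on the side on which she moves.

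Second, since every pursuer response has $\Delta y_P\ge 0$, the rook frontier $y(\cP_t)$ is non-decreasing, so the evader territory $Q_\cE(t)$, consisting of the portion of the pocket strictly above the current rook frontier, is monotonically shrinking. Capture is forced once $\dy \le 1$, which occurs by the time the frontier rises within unit distance of the top of the pocket; visibility plus proximity then closes the game in $O(1)$ turns.

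Third, to bound the number of turns, I classify each turn as \emph{vertical} if $\Delta y_P \ge 1/2$ (equivalently $|\Delta x_P|\le \sqrt{3}/2$) and \emph{horizontal} otherwise. Since $\sum \Delta y_P$ is at most the vertical extent $H$ of the pocket, there are at most $2H$ vertical turns. Horizontal turns occur in \emph{streaks} during which the evader moves rapidly in a fixed horizontal direction with $|a|>\sqrt{3}/2$, forcing $\cP$ to mirror his motion; because the evader is confined horizontally by the pocket, a streak at current frontier height $y$ has length $O(W(y))$, where $W(y)$ is the pocket width at $y$, before $\cE$ must reverse direction. I plan to charge each horizontal streak against the rectangular slab of pocket area it sweeps, arguing that these slabs are essentially disjoint (because after each streak the pursuer advances vertically by at least the constant gained when $|\delta-a|$ shrinks below $1/2$), yielding $O(\Area(Q_\cE))$ horizontal turns in total. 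Combined with the $O(H)$ vertical turns, and using the standing assumption $\Area(Q)\ge \diam(Q)$, this gives the desired $O(\Area(Q_\cE))$ bound.

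The main obstacle is this third step: horizontal streaks and vertical progress can interleave in complicated ways, and features on $\Pi_U$ that protrude into the pocket may temporarily shrink $W(y)$ or block tracking. My plan is to handle such features as local reductions of $W(y)$ that only help the pursuer, and to formalize the slab-charging argument by associating each horizontal turn with a newly cleared unit-area cell of the pocket, so that each cell is charged $O(1)$ turns and the bookkeeping closes.
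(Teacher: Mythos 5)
Your overall architecture (maintain rook position every turn, show the frontier only moves up, then charge horizontal stalling against the pocket's area) matches the paper's, and your streak/slab accounting in the third stage is essentially the paper's observation that the evader can make at most $O(s)$ wide horizontal moves (where $s$ is the frontier length) before the pursuer gains a fixed constant ($7/22$ in the paper) of vertical progress. But there is a genuine gap in your first stage: you claim the evader cannot become invisible in a single move, and this is false. To block the sightline $\overline{\cP\cE}$ it suffices for the evader to tuck behind a \emph{single} reflex vertex $v_\ell$ of $\Pi_U$ protruding into the pocket; no pair of vertices within unit distance is required, so the minimum-feature-size argument of Lemma~\ref{lemma:hide-left} does not apply here (that lemma uses two distinct blocking vertices only because the evader there passes horizontally from one side of $\cP$ to the other). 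The paper devotes Lemma~\ref{lemma:re-establish} precisely to this hiding move: the pursuer responds by moving to the point $\cP_{t-1} + (\dx(t-1), \sqrt{1-\dx(t-1)^2})$ directly below $\cE_{t-1}$, and the key geometric facts are that (i) this point sees $\cE_t$ because $\overline{\cE_{t-1}\cE_t}\subset Q$, and (ii) the hiding segment $\overline{\cE_{t-1}\cE_t}$ must be steeper than $\overline{\cP_{t-1}\cE_{t-1}}$, whose slope is at least $\sqrt3$ by the guard bound, so the new horizontal offset is still at most $1/2$ and rook position is restored (with a bonus advance of $\sqrt3/2$). Without this counter-move your inductive claim that rook position persists after every evader move does not close.

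A smaller point: your capture threshold $\dy\le 1$ and your assertion that the pursuer's diagonal move stays inside $Q$ both need the quantitative guard bound (the paper's Lemma~\ref{lemma:guard}, with threshold $\sqrt3/2$) to certify that the slope of $\overline{\cP_{t-1}\cE_{t-1}}$ is at least $\sqrt3$; this bound is what makes both the hiding analysis and the $7/22$ progress computation work, so you should state and prove it explicitly rather than leaving it implicit in the offset bookkeeping.
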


The proof requires some helper lemmas,  paying careful attention to  visibility and mobility constraints. 
We begin by showing that the evader cannot approach $\frontier(\cP)$ or leave $Q_{\cE}$ without being captured.

\begin{lemma}
\label{lemma:guard}
Suppose  $\cP$ and $\cE$ are in pocket position at time $t-1$. 
If the evader moves so that $| y(\cP_{t-1}) - y(\cE_{t})| \le \sqrt{3}/2$, then the pursuer can move to a point $\cP_{t}$ that captures $\cE$. In particular, $\cE$ cannot cross the rook frontier $\frontier(\cP_{t-1})$ without being captured.
\end{lemma}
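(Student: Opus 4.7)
Set up coordinates with $\cP_{t-1}$ at the origin and the rook frontier along the $x$-axis. Without loss of generality the evader is in an upper pocket, so $\cE_{t-1}=(x_e,y_e)$ with $y_e>0$, and rook position gives $|x_e|\le 1/2$. A key preliminary observation is that $y_e>\sqrt{3}/2$: under the rook strategy, the algorithm would have captured at the previous round if $\cE_{t-1}$ had been within $\sqrt{3}/2$ of the (then-current) frontier, so the fact that $\cE$ is still alive and in pocket position at time $t-1$ forces this lower bound. Write $\cE_t=(x_t,y_t)$ with $|y_t|\le\sqrt{3}/2$ by hypothesis, and split into two cases according to the sign of $y_t$.

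\emph{Case 1} ($y_t\le 0$, the evader crosses the frontier). Since $y_e>\sqrt{3}/2$ and $y_t\le 0$ we have $y_e-y_t>\sqrt{3}/2$, so $d(\cE_{t-1},\cE_t)\le 1$ gives $|x_t-x_e|^2\le 1-3/4=1/4$, hence $|x_t-x_e|<1/2$. Let $Z$ be the point where $\overline{\cE_{t-1}\cE_t}$ crosses the $x$-axis; since $Z$ lies on this segment, $|x(Z)-x_e|\le |x_t-x_e|<1/2$, so $|x(Z)|<1$. The pursuer sets $\cP_t:=Z$, reached by a move along the rook frontier (which is a segment of $Q$). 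Then $d(Z,\cE_t)\le d(\cE_{t-1},\cE_t)\le 1$ and $\overline{Z\cE_t}\subseteq \overline{\cE_{t-1}\cE_t}\subseteq Q$, yielding capture. This also gives the ``in particular'' statement.

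\emph{Case 2} ($0<y_t\le\sqrt{3}/2$). Since $|x_t|\le|x_e|+1\le 3/2$, we get $d(\cP_{t-1},\cE_t)\le\sqrt{9/4+3/4}=\sqrt{3}<2$, so a move of length $\min(1,d(\cP_{t-1},\cE_t))$ along $\overline{\cP_{t-1}\cE_t}$ brings the pursuer to within unit distance of $\cE_t$. What remains is to verify the visibility condition $\overline{\cP_{t-1}\cE_t}\subseteq Q$. I will show the stronger claim that the entire triangle $T=\triangle \cP_{t-1}\cE_{t-1}\cE_t$ lies in $Q$. Two of its sides, $\overline{\cP_{t-1}\cE_{t-1}}$ and $\overline{\cE_{t-1}\cE_t}$, are already in $Q$ by visibility of $\cE_{t-1}$ and legality of the evader move. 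All three vertices have $y\ge 0$, and because $y_e>\sqrt{3}/2\ge y_t$ the vertex $\cE_{t-1}$ is the strict apex of $T$, so at every $x$ in the horizontal range of $T$ the top of the vertical slice of $T$ lies on one of the two known-in-$Q$ sides, while the bottom lies at height $\ge 0$. Monotonicity of $Q$ gives $Q\cap\{x=c\}=[y_L(c),y_U(c)]$ with $y_L(c)\le 0\le y_U(c)$ throughout this range (since the rook frontier passes through $Q$ at each such $x$), and the two in-$Q$ sides force $y_U(x)\ge\text{top of }T$. Hence $T\subseteq Q$, and in particular $\overline{\cP_{t-1}\cE_t}\subseteq Q$.

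The main obstacle is the visibility argument in Case~2: one must rule out a feature of $\Pi_U$ dipping into the line of sight. The sandwich argument above is what does this job; the two known-visible sides of $T$ already cap $T$ from above at every vertical slice, so monotonicity of $Q$ excludes any further dip of $\Pi_U$ between $\cP_{t-1}$ and $\cE_t$. The inductive lower bound $y_e>\sqrt{3}/2$ is equally crucial for Case~1, preventing the crossing point $Z$ from escaping the pursuer's unit-distance reach.
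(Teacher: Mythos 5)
Your proposal mirrors the paper's own proof almost step for step: the same preliminary observation that $\cE_{t-1}$ must lie more than $\sqrt{3}/2$ above the frontier, the same split into ``crosses the frontier'' versus ``stays above it,'' the same lateral move to the crossing point $Z$ with the bound $|x(Z)-x(\cE_{t-1})|<1/2$, and the same $\sqrt{(3/2)^2+(\sqrt{3}/2)^2}=\sqrt{3}<2$ computation. Your Case 1 is correct and matches the paper. The one place you go beyond the paper is the visibility claim in Case 2 (the paper dismisses it with ``it is clear that $\cP_{t-1}$ can also see $\cE_{t}$''), and that added argument contains a genuine error.

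You assert that since $\cE_{t-1}$ is the highest vertex of $T$, the top of every vertical slice of $T$ lies on one of the two sides incident to $\cE_{t-1}$. That is false when $\cE_t$ lies strictly between $\cP_{t-1}$ and $\cE_{t-1}$ in $x$-coordinate and above the chord $\overline{\cP_{t-1}\cE_{t-1}}$; take $\cP_{t-1}=(0,0)$, $\cE_{t-1}=(1/2,\,0.9)$, $\cE_t=(0.1,\,0.8)$. Over $x\in[0,0.1]$ the upper envelope of $T$ is the third side $\overline{\cP_{t-1}\cE_t}$ itself, so the sandwich becomes circular: the only upper bound your two known-in-$Q$ sides supply there is $y_U(x)\ge 1.8x$ (about $0.09$ at $x=0.05$), while the segment you need to clear sits at height $0.4$. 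In that gap a reflex vertex of $\Pi_U$ can dip down (this can be arranged consistently with the unit feature-size assumption) and block $\overline{\cP_{t-1}\cE_t}$ while leaving $\overline{\cP_{t-1}\cE_{t-1}}$, $\overline{\cE_{t-1}\cE_t}$, and the frontier untouched, so the claim $T\subseteq Q$ is not merely unproven but false in general. The capture conclusion survives, but by a different move: in this sub-case $|x(\cE_t)|\le|x(\cE_{t-1})|\le 1/2$ and $0<y(\cE_t)\le\sqrt{3}/2$, so $\cP$ can walk along the frontier to $(x(\cE_t),0)$ (distance at most $1/2$) and capture straight up the vertical line, which lies in $Q$ because the vertical slice of a monotone polygon is an interval containing both $(x(\cE_t),0)$ and $\cE_t$. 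You should repair Case 2 by splitting on the horizontal order of the three points, using your envelope argument only in the two orderings where it is valid and the frontier-then-vertical move in the remaining one.
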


\begin{proof}
Without loss of generality, $\cP$ is in upper pocket position and $\cE$ is in the first quadrant of $\cP$.
By the definition of rook position, $\cP$ sees $\cE$ and  $\dx(t-1) \leq 1/2$. Therefore  $\dy(t-1) > \sqrt{3}/2$,  otherwise the evader was captured at time $t-1$. 
Suppose that the evader does not cross the rook frontier and that $| y(\cP_{t-1}) - y(\cE_{t})| \le \sqrt{3}/2$.
We know that  $y(\cE_{t}) < y(\cE_{t-1})$ and  that $\cP_{t-1}$ can see $\cE_{t-1}$. Since we are in pocket position, $\cP_{t-1}$ can also see $\cE_{t}$. Indeed, an upper feature blocking $\overline{\cP_{t-1} \cE_{t}}$ would have also blocked the move $\overline{\cE_{t-1}\cE_{t}}$.
 We have $| x(\cP_{t-1}) - x(\cE_{t}) | \leq \dx(t-1)+1 = 3/2$ and therefore
$$
\dist{\cP_{t-1}}{\cE_{t}} \le \sqrt{(3/2)^2+(\sqrt{3}/2)^2} = \sqrt{3} < 2.
$$ 
The pursuer can move directly towards $E_{t}$ to achieve $\dist{\cE_{t}}{\cP_{t}} <1$, which ends the game.  

Finally, suppose that  the evader moves to a point $\cE_{t}$ below the rook frontier $\frontier(\cP_{t-1})$. We must be careful in this case since features on the lower boundary might prevent $\cP$ from moving directly towards $\cE_{t}$. Instead, the pursuer moves laterally along the rook frontier to capture the evader. Let $Z$ be the point of intersection of $\frontier(\cP_{t-1})$ and  $\overline{\cE_{t-1} \cE_{t}}$. Since $\dy(t-1) > \sqrt{3}/2$, we have 
$$ \dist{\cP_{t-1}}{Z}  \leq | x(\cP_{t-1}) - x(\cE_{t-1})|  + | x(\cE_{t-1}) - x(Z)| \leq 1/2 +| x(\cE_{t-1}) - x(Z)| < 1.$$ 
Therefore $\cP$ can move to $Z$, and clearly $\dist{Z}{ \cE_{t}} < \dist{\cE_{t-1}}{\cE_{t}} \leq 1$, so the evader is captured.
\end{proof}

In addition to closing the distance to the evader, our rook's strategy maintains (or re-establishes) visibility  with the evader at each pursuer turn. 

\begin{lemma}
\label{lemma:pocket-viz}
Suppose that $\cP_{t-1}$ is in pocket position and that  either $x(\cE_{t} ) \leq x(\cP_{t-1}) \leq x(\cE_{t-1})$, or  $x(\cE_{t-1} ) \leq x(\cP_{t-1}) \leq x(\cE_{t})$.  Then $\cP_{t-1}$ can see $\cE_{t}$. Furthermore, if $\cE$ also crossed the horizontal line $\horizline{\cP_{t-1}}$, then he is captured immediately.  
\end{lemma}

\begin{proof}
Without loss of generality, we are in upper pocket position and $\cE$ is in the first quadrant of $\cP$. The proof is identical to the proof of Lemma \ref{lemma:hide-left}, with one small change:  the endpoints of the rook frontier are on the upper boundary, which means that the lower boundary does not impede the pursuer's movement along the rook frontier. 
\end{proof}

The next lemma handles the  case in which  $\cE$ moves to a position that is invisible to $\cP$.

\begin{lemma}
\label{lemma:re-establish}
Suppose  $\cP$ and $\cE$ are in pocket position at time $t-1$. 
If the evader moves to  $\cE_{t}$ that is invisible to  $\cP_{t-1}$ then the pursuer can move to  $\cP_{t}$ to re-establish rook position.  This move advances the rook frontier by at least $\sqrt{3}/2$.

\end{lemma}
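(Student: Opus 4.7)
I would reduce to the upper pocket case by symmetry, and have the pursuer step along the visibility line. Specifically, since $\overline{\cP_{t-1}\cE_{t-1}} \subset Q$ by the rook-position assumption, I would set $\cP_t$ to be the point on that segment at distance exactly $1$ from $\cP_{t-1}$. The case that $\cP$ overshoots $\cE_{t-1}$ cannot arise: $\cE_{t-1}$ is uncaptured, so $\dist{\cP_{t-1}}{\cE_{t-1}} > 1$.

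The frontier-advancement bound follows from a short calculation. Rook position gives $\dx(t-1) \leq 1/2$ and $\dist{\cP_{t-1}}{\cE_{t-1}}>1$ gives $\dy(t-1) > \sqrt{3}/2$, hence $\dy(t-1)^2 > 3/4 \geq 3\dx(t-1)^2$. The vertical rise is
$$
y(\cP_t)-y(\cP_{t-1}) \;=\; \frac{\dy(t-1)}{\sqrt{\dx(t-1)^2+\dy(t-1)^2}} \;\geq\; \frac{\sqrt{3}}{2},
$$
as required.

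Next I would verify the two rook-position conditions at $\cP_t$. In the upper pocket, the lower chain $\Pi_L$ lies strictly below the rook frontier and cannot occlude an upward line of sight, so the obstruction hiding $\cE_t$ from $\cP_{t-1}$ must be a reflex vertex $v$ of $\Pi_U$ (a local minimum of the upper chain). Parametrising the evader's motion as $\gamma(\lambda) = (1-\lambda)\cE_{t-1}+\lambda\cE_t$, the first $\lambda^*$ at which sight is lost has $v$ on $\overline{\cP_{t-1}\gamma(\lambda^*)}$, and in particular $v$ sits in the interior of the triangle $\cP_{t-1}\cE_{t-1}\cE_t$. Because $\cP_t$ lies strictly inside $\overline{\cP_{t-1}\cE_{t-1}}$, closer to $\cE_{t-1}$ and hence to $v$, the segment $\overline{\cP_t\cE_t}$ passes on the $\cE_t$-side of the original line of sight, clearing the obstruction. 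Containment in $Q$ then follows by applying the triangle/simple-connectedness argument to $\cP_t$, $\cE_{t-1}$, and $\cE_t$, exactly as in the proof of Lemma \ref{lemma:guard}.

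The main obstacle is the offset bound $|x(\cP_t) - x(\cE_t)| \leq 1/2$. Since $\cP_t \in \overline{\cP_{t-1}\cE_{t-1}}$ we have $|x(\cP_t) - x(\cE_{t-1})| \leq \dx(t-1) \leq 1/2$, so it suffices to bound $|x(\cE_{t-1}) - x(\cE_t)| \leq 1/2$. Here the tight pocket hypothesis (both rook-frontier endpoints on $\Pi_U$) and the minimum feature size combine: any reflex vertex $v$ capable of hiding $\cE_t$ lies in a constrained region above $\cP_{t-1}$, and $v$'s incident edges, being of length at least $1$, limit how far behind $v$ a hidden evader can lie relative to $\cE_{t-1}$. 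I expect a careful trigonometric argument using $\dy(t-1) > \sqrt{3}/2$ to then yield the half-unit bound, with a case split on whether $\cE_t$ lies to the left or right of $\vertline{\cE_{t-1}}$ producing two symmetric calculations.
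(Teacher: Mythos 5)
Your choice of $\cP_t$ --- the point on $\overline{\cP_{t-1}\cE_{t-1}}$ at distance $1$ from $\cP_{t-1}$ --- is not the paper's move, and it fails on the two claims that actually need proving. The paper takes $\cP_t = \cP_{t-1} + \bigl(\dx(t-1), \sqrt{1-\dx(t-1)^2}\,\bigr)$: it spends the full horizontal budget to make $x(\cP_t)=x(\cE_{t-1})$ \emph{exactly}, and puts the rest of the unit step into vertical rise. That exact alignment is what drives everything: (i) visibility holds because $\overline{\cP_t\cE_t}$ and $\overline{\cE_{t-1}\cE_t}$ then span the same $x$-interval with the former pointwise below the latter, hence below $\Pi_U$; and (ii) the offset condition reduces to $|x(\cE_{t-1})-x(\cE_t)|\le 1/2$, which the paper gets from Lemma \ref{lemma:hide-left} for leftward moves and from the slope of $\overline{\cP_{t-1}\cE_{t-1}}$ being at least $\sqrt{3}$ for rightward hiding moves --- not from feature size or edge lengths, as you speculate.

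Neither claim survives with your $\cP_t$. On visibility, your key sentence is backwards: the hiding vertex lies strictly \emph{inside} the wedge between the rays $\cP_{t-1}\to\cE_{t-1}$ and $\cP_{t-1}\to\cE_t$, i.e.\ exactly on the ``$\cE_t$-side of the original line of sight,'' so passing on that side is where the obstruction is, and advancing only distance $1$ along a long sight line rotates the view ray by too little. Concretely, take $\cP_{t-1}=(0,0)$, $\cE_{t-1}=(0.5,5)$, $\cE_t=(0.55,5.95)$, and an upper-chain notch with consecutive vertices $(0,4.5)$, $(0.3,3.1)$, $(0.6,7)$ (all pairwise distances exceed $1$). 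Then $\overline{\cP_{t-1}\cE_{t-1}}$ and $\overline{\cE_{t-1}\cE_t}$ lie in $Q$ while $\overline{\cP_{t-1}\cE_t}$ does not; your $\cP_t\approx(0.0995,0.995)$ gives a segment $\overline{\cP_t\cE_t}$ passing through $(0.3,3.20)$, above the notch tip $(0.3,3.1)$, so $\cE_t$ is still invisible, whereas the paper's $\cP_t=(0.5,\sqrt{3}/2)$ does see $\cE_t$. On the offset, with your $\cP_t$ one has $x(\cE_t)-x(\cP_t)=\bigl(x(\cE_t)-x(\cE_{t-1})\bigr)+\dx(t-1)\bigl(1-1/\dist{\cP_{t-1}}{\cE_{t-1}}\bigr)$, which can approach $1$; so even the bound $|x(\cE_{t-1})-x(\cE_t)|\le 1/2$ you defer to ``a careful trigonometric argument'' would not give rook position. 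Your vertical-progress computation $\dy(t-1)/\dist{\cP_{t-1}}{\cE_{t-1}}\ge\sqrt{3}/2$ is correct, but it measures progress toward a point that is neither guaranteed to see $\cE_t$ nor to be within offset $1/2$ of it.
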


\begin{figure}
\begin{center}

\begin{tikzpicture}[scale=1]


\draw (-1,0) -- (2,0);

\node[right] at (2,0) {\small $\frontier(\cP_{t-1})$};

\draw[fill] (0,0) circle (1.5pt);

\node[above left] at (0,0) {\small $\cP_{t-1}$};

\draw[fill] (1,1.735) circle (1.5pt);

\node[below right] at (1,1.735) {\small $\cP_{t}$};

\draw[fill] (1,2.1) circle (1.5pt);

\node[right] at (1.1,2.1) {\small $\cE_{t-1}$};

\draw[fill] (1.42,3.5) circle (1.5pt);

\node[right] at (1.42, 3.5) {\small $\cE_{t}$};

\draw (0,0) -- (1,2.1);

\draw (0,0) -- (1.42,3.5);

\draw (0,0)  -- (1, 1.735) -- (1.42, 3.5);

\draw (1,0) -- (1, 2.1) -- (1.42, 3.5);

\node[below] at (.75, 0) {\small $\dx(t-1)$};
\node[right] at (1, .8) {\small $\sqrt{ 1- \dx(t-1)^2}$};

\draw[thick][pattern=north west lines, pattern color=gray] (-.5, 2.5) -- (.65,1.55) -- (.85, 2.7) -- (1.1, 2.9) --(1.3,3.75);

\node at (.5,-.85) {\scriptsize (a)};

\end{tikzpicture}
\qquad \qquad
\begin{tikzpicture}[scale=.6]

\draw[thick] (-6,0) -- (1,0);

\draw[thick] (-5,0) -- (-5,5);

\draw[thick] (0,0) -- (-5,5)--(-2.5,0);

\draw[thick] (0,0) -- (-3.5,2)--(-3.5,0);

\draw[thin] (0.5,5) -- (-5,5);

\draw[fill]  (0, 0) circle (3pt);

\node[below] at (0,0) {\small $\cP_{t-1}$};

\draw[fill]  (-5, 5) circle (3pt);

\node[above] at (-5,5) {\small  $\cE_{t}$};

\draw[fill]  (-2.5, 0) circle (3pt);

\node[above right] at (-2.5,0) {\small  $Z$};

\draw[fill] (0.5,5) circle (3pt);

\node[above right] at (0.5,5) {\small $\cE_{t-1}$};

\draw[fill] (-3.5,2) circle (3pt);

\node[left] at (-3.5,2) {\small $\cP_{t}$};

\node[above] at (-3,0) {\small $a$};

\node[left] at (-3.5,1) {\small $\sqrt{3}a$};

\draw[thick] (-2.5, -.5) -- (-5, -.5);
\draw[thick] (-2.5, -.25) -- (-2.5, -.75);
\draw[thick] (-5, -.25) -- (-5, -.75);

\draw [thick] (-3.7,-.5) node[below]{\small $1/2$};
\draw [thick] (-2.5,5) node[above]{\small $\leq 1$};

\node at (-3,-2) {\scriptsize (b)};

\end{tikzpicture}

\end{center}
\caption{Counter-moves during Rook Mode.  (a) If $\cE$ hides behind a feature on $\Pi_U$ then  $\cP$ responds by moving to a point directly below the evader's previous position. (b) The pursuer response when $\cE$ moves leftward past $\cP$ and $|x(\cE_{t}) - x(\cP_{t-1})| \leq 1$.}
\label{fig:countermove}
\end{figure}

\begin{proof}
Without loss of generality, $\cP$ is in upper pocket position and $\cE$ is in the first quadrant of $\cP$. Suppose that  $0 \leq \dx(t-1) \leq 1/2$ with  $\overline{\cP_{t-1} \cE_{t-1}} \subset Q$, while $\Pi_U$ obstructs $\overline{\cP_t \cE_{t}}$. 
By Lemma \ref{lemma:pocket-viz}, $\cE_{t}$ must  be to the right of $\cP_{t-1}$.
We show that the pursuer can move to
$\cP_{t} = \cP_{t-1} + \left(\dx(t-1), \sqrt{ 1- \dx(t-1)^2} \right)$, which has visibility to $\cE_{t}$.
By construction, $x(\cP_{t}) = x(\cE_{t-1})$. We also have $y(\cP_{t}) < y(\cE_{t-1})$, since otherwise $\cE$ was caught at time $t-1$. Since 
$\overline{\cP_{t-1} \cE_{t-1}} \subset Q$ and $\frontier(\cP_{t-1}) \subset Q$, we clearly have $\overline{\cP_{t-1} \cP_{t}} \subset Q$, see Figure \ref{fig:countermove} (a). Similarly, since $\overline{\cE_{t-1} \cE_{t}} \in Q$, we also have $\overline{\cP_{t} \cE_{t}} \in Q$, so the point $\cP_{t}$ sees the point $\cE_{t}$. The pursuer's progress is $\sqrt{ 1- \dx(t-1)^2} \geq \sqrt{3}/{2}$ since $\dx(t-1) \leq 1/2$.

Finally, we show that $\cP_{t}$ is in rook position. Since $x(\cP_{t}) = x(\cE_{t-1})$, it is sufficient to  prove that $| x(\cE_{t}) - x(\cE_{t-1}) | \leq 1/2$.
Suppose that the evader moves leftward. By Lemma \ref{lemma:hide-left}, we have $x(\cP_{t-1}) < x(\cE_{t})$ and therefore $| x(\cE_{t}) - x(\cE_{t-1}) | < | x(\cP_{t-1}) - x(\cE_{t-1}) |  \leq 1/2$. 
Suppose that the evader moves rightward. 
We claim that the slope of $\overline{\cE_{t-1}\cE_{t}}$ must be greater than $\sqrt{3}$.  
Since $\cE_{t}$ is not visible to $\cP_{t-1}$, the slope of $\overline{\cE_{t-1}\cE_{t}}$
must be strictly larger than the slope of $\overline{\cP_{t-1}\cE_{t-1}}$, as shown in Figure \ref{fig:countermove} (a). However, $\cP$ was in rook position with $\dx(t-1) \leq 1/2$ and therefore $\dy(t-1) \geq \sqrt{3}/{2}$, as otherwise the evader was caught at time $t-1$. Therefore the slope of  $\overline{\cP_{t-1}\cE_{t-1}}$  is at least $\sqrt{3}$, so $\dx(t) \leq 1/2$.

The argument for $-1/2 \leq \dx(t-1) \leq 0$ is entirely analogous.
\end{proof}

Our final  lemma shows that the pursuer can advance the rook frontier when $\cE$ does not significantly increase his horizontal distance from $\cP$.

\begin{lemma}
\label{lemma:7/22}
Suppose  $\cP$ and $\cE$ are in pocket position at time $t-1$.  If the evader moves so that $\cE_{t}$ is visible to $\cP_{t-1}$ and
$|x(\cE_{t}) - x(\cP_{t-1}) | \leq 1$ then the pursuer can re-establish rook position  while also  advancing the rook frontier by at least $7/22$.
\end{lemma}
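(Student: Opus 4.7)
I would translate coordinates so that $\cP_{t-1}$ sits at the origin with the rook frontier $\frontier(\cP_{t-1})$ along the horizontal axis and the pocket above. Writing $\cE_t = (a,b)$, the hypotheses give $|a| \leq 1$, and because the preceding $\sqrt{3}/2$ branch of Algorithm~\ref{alg:monotone-rook} did not fire we also have $b > \sqrt{3}/2$. The algorithm's choice of $\alpha$ produces $|\alpha| \leq 1/2$ and $|\alpha - a| = 1/2$, so the candidate target $Z = (\alpha, y^*)$ automatically satisfies the horizontal rook condition $|x(Z) - x(\cE_t)| = 1/2$, where $y^*$ is the largest value with $\overline{\cP_{t-1} Z} \subset Q$ and $|Z| \leq 1$.

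Verifying that moving to $Z$ re-establishes rook position reduces to showing $Z$ sees $\cE_t$. I would reuse the reasoning of Lemma~\ref{lemma:re-establish}: the triangle $\cP_{t-1}\cE_{t-1}\cE_t$ has all three sides in $Q$ (two are visibility segments and the third is the evader's unit step), so simple connectivity of $Q$ forces the closed triangle to lie in $Q$, and every point of the triangle is visible from $\cE_t$ by convexity. A short case analysis based on whether $\alpha$ lies between $x(\cE_{t-1})$ and $x(\cE_t)$ either places $Z$ directly inside this triangle or reduces the visibility check to the slope-comparison argument from the proof of Lemma~\ref{lemma:re-establish}.

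The substantive content of the lemma is the lower bound $y^* \geq 7/22$. Because $\cP$ is in pocket position, $\Pi_L$ lies strictly below the rook frontier, so only features on $\Pi_U$ can obstruct the upward move; without obstruction we have $y^* = \sqrt{1 - \alpha^2} \geq \sqrt{3}/2$, comfortably above $7/22$. Otherwise the segment $\overline{\cP_{t-1} Z}$ is blocked at some height $y^* < \sqrt{1-\alpha^2}$ by an edge of $\Pi_U$ incident to a vertex $v$ on or just outside the ray from the origin through $Z$. Visibility from $\cP_{t-1}$ to $\cE_t$ forces $v$ to lie strictly above the ray $\overline{\cP_{t-1}\cE_t}$, and the minimum feature-size hypothesis forces the two neighbors of $v$ on $\Pi_U$ to sit at distance at least one from $v$ while themselves not blocking visibility to $\cE_{t-1}$ or $\cE_t$.

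The hard part is the ensuing extremal geometric calculation: parameterize $v$ and its two neighbors on $\Pi_U$, and among all placements consistent with the visibility and feature-size constraints, find the one that drops the height at which the obstructing edge through $v$ crosses the vertical line $x = \alpha$ as low as possible. I expect the worst case to occur when $|\alpha| = 1/2$, when $b$ is as close to $\sqrt{3}/2$ as the reachability constraint $|\cE_t - \cE_{t-1}| \leq 1$ permits, and when a neighbor of $v$ sits exactly at distance one from $v$ tangent to the visibility ray to $\cE_t$; substituting this configuration into the crossing-height formula should yield the constant $7/22$. Once $y^* \geq 7/22$ is in hand, moving $\cP$ to $Z$ restores rook position and advances the rook frontier by the claimed amount.
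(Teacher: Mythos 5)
Your target point and, more importantly, your route to the constant $7/22$ differ from the paper's, and the quantitative heart of your argument is missing. In the paper's proof the pursuer does not climb the vertical line $x=\alpha$: she sets $Z=(\alpha,\,y(\cP_{t-1}))$ on the rook frontier (so $d(Z,\cP_{t-1})\leq 1/2$) and moves to the point on $\overline{Z\cE_t}$ at distance one from $\cP_{t-1}$. Because Lemma \ref{lemma:guard} forces $\cE_t$ to lie at least $\sqrt{3}/2$ above the frontier while $|x(\cE_t)-x(Z)|=1/2$, the segment $\overline{Z\cE_t}$ has slope of absolute value at least $\sqrt{3}$; writing $a=|x(\cP_t)-x(Z)|$, the unit step length gives $1\leq(1/2+a)^2+3a^2$, hence $a\geq(\sqrt{13}-1)/8$ and vertical progress at least $\sqrt{3}a>7/22$. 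The constant is produced entirely by this Pythagorean computation; the minimum feature size plays no role in it. Your proposal instead promises to extract $7/22$ from an extremal placement of an obstructing vertex of $\Pi_U$ and its unit-distant neighbors, but that calculation is never performed (``I expect the worst case\dots'', ``should yield the constant''), and since $7/22$ does not in fact arise from any such configuration, this step cannot be completed as described. Ending with the assertion that an unspecified optimization ``should'' output the stated constant is a gap, not a proof.

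There is also a geometric obstruction to climbing the line $x=\alpha$. When $|x(\cE_t)-x(\cP_{t-1})|<1/2$, the abscissa $\alpha$ lies on the opposite side of $\cP_{t-1}$ from $\cE_t$, and the feature-size hypothesis (which constrains only vertex-to-vertex distances) does not prevent a single long edge of $\Pi_U$ from dipping to height $\epsilon$ above the rook frontier at some abscissa strictly between $\alpha$ and $x(\cP_{t-1})$ without obstructing the views of $\cE_{t-1}$ or $\cE_t$ from $\cP_{t-1}$. Such an edge caps your $y^*$ at a height proportional to $\epsilon$, far below $7/22$, and it also blocks the sight line from $(\alpha,y^*)$ to $\cE_t$, so the proposed move fails both of its objectives. (The paper's diagonal move along $\overline{Z\cE_t}$ is itself delicate in this sub-case, but it at least does not spend the entire movement budget marching toward the obstruction.) Your visibility reduction via the triangle $\cP_{t-1}\cE_{t-1}\cE_t$ is sound as far as it goes, but the ``short case analysis'' placing the target inside that triangle is exactly what fails here, since $(\alpha,y^*)$ need not lie in it.
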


\begin{proof}
 Without loss of generality, $\cP$ is in upper pocket position and $\cE$ is in the first quadrant of $\cP$, so that
 $0 \leq x(\cE_{t-1}) - x(\cP_{t-1}) \leq 1/2$. 
Let $\alpha$ be the closer of $x(\cE_t) \pm 1/2$ to $x(\cP_{t-1})$, and let $Z$ be the point with $x(Z)=\alpha$
that is closest to $\cP_{t-1}$, so that $d(Z, \cP_{t-1}) \leq 1/2$.  Let $\cP_{t}$ be the point on $\overline{Z \cE_{t}}$ at distance one from $\cP_{t-1}$. A leftward evader move is shown in Figure \ref{fig:countermove} (b), and the rightward move results in an equivalent situation. The visibility of
$\cE_{t}$ from $\cP_{t-1}$ ensures that both of $\overline{\cP_{t-1} \cP_{t}}$ and $\overline{\cP_{t} \cE_{t}}$ are in $Q$. By Lemma \ref{lemma:guard}, $\cE_{t}$ is at least $\sqrt{3}/2$ above $\frontier(\cP_{t-1})$, so the absolute value of the slope of $\overline{Z\cE_{t}}$ is at least $\sqrt{3}$. In addition, $|x(P_{t-1}) - x(Z)| \leq 1/2$ and
$|x(\cE_{t})-x(\cP_{t})|\le|x(\cE_{t})-x(Z)|=1/2.$ 
Setting $a = |x(P_{t}) - x(Z)|$, the distance from $\cP_{t}$ to $\frontier(\cP_{t-1})$ is at least $\sqrt{3} a$.
By the Pythagorean theorem, 
$1 \leq (1/2+a)^2 + 3a^2$, or equivalently, $a \geq (\sqrt{13} -1 )/8.$ Therefore, the pursuer has advanced the rook frontier by $\sqrt{3}a > 7/22$.
\end{proof}

We are now ready to prove {Lemma \ref{lemma:endgame}}.

\bigskip

\begin{proofof}{Lemma \ref{lemma:endgame}}
For simplicity of notation, we start at $t=0$.
Let $s$ be the length of the rook frontier $\frontier(\cP_0)$. We show that after at most $2s$ turns, either the evader is caught or the pursuer advances the rook frontier by at least $7/22$. We  repeat this process until the height of the evader territory is at most $\sqrt{3}/2$, so that the evader is caught by Lemma \ref{lemma:guard}. The total number of turns to catch the evader is $O(\Area(Q_{\cE}))$, as explained below.

Let $Q_{\cE}$ denote the initial evader territory, bounded by $\Pi_U$ and $\frontier(\cP_0)$. 
If $s \leq 1$ then $\cP$ can move to the midpoint of the frontier on his next turn and then advance the frontier by 1 on all successive turns until catching $\cE$. So we consider $s > 1$. Without loss of generality, we assume that $x(\cP_0)  \leq x(\cE_0) \leq x(\cP_0) + 1/2$ and $y(E_0) > y(P_0) + \sqrt{3}/2$. We show that after at most $2s$ moves, either the evader is caught or the pursuer advances the rook frontier by at least $7/22$.

Suppose that $\cP_{t-1}$ is in rook position with respect to $\cE_{t-1}$ and that the evader moves to $\cE_{t}$. There are three cases to consider. First, if
$\cE_{t-1}$ is not visible to $\cP_{t-1}$, then the pursuer moves according to Lemma \ref{lemma:re-establish}, advancing the frontier by $\sqrt{3}/2 > 7/22$. Second, if $|x(\cE_{t}) - x(\cP_{t-1})| \leq 1$ then the pursuer moves according to Lemma \ref{lemma:7/22}, advancing the frontier by at least $7/22$.
Third, suppose that $1 < |x(\cE_{t}) - x(\cP_{t-1})| \leq 3/2$, which only occurs when $\cE$ moves leftward by at least $1/2$. In this case, the pursuer responds by moving to the highest point reachable on the vertical line through $x = x(\cE_{t})-1/2$. (If $|x(\cE_{t}) - x(\cP_{t-1})| =3/2$ then this is a purely horizontal move, but otherwise the pursuer makes some vertical progress.) The evader can only make $2 s$ such moves before he is forced to make a move covered by the previous two cases. Therefore, after at most $2 s$ moves, the pursuer advances the frontier by at least $7/22$.

The worst-case scenario for the capture time is when the evader repeatedly runs the width of the region using horizontal steps of distance 1. This zig-zagging forces the pursuer to trace a path of length $O(\Area(Q_{\cE}))$. 
\end{proofof}


\subsubsection{Chute Position}

We require one additional lemma concerning chute position. 

\begin{lemma}
\label{lemma:pursuit}
Suppose that the pursuer is in chute position. Using Monotone Rook's Strategy, $\cP$ can either capture $\cE$, or she can update her checkpoint to the right. 
\end{lemma}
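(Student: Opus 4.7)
The plan is to dispatch the chute case by analyzing the two possible long-run outcomes of rook play. Without loss of generality, assume $\cP$ is in lower chute position at time $t-1$; let $c$ be the current checkpoint (the left endpoint of $\frontier(\cP_{t-1})$, on the lower chain $\Pi_L$) and let $a$ be the right endpoint (on the upper chain $\Pi_U$), so that $x(c) < x(a)$.

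First I would dispose of the case in which no escape move ever occurs during the rook phase. In this case $\cE$'s options are identical to those available in pocket position: each of his moves is answered by one of Lemma \ref{lemma:guard}, Lemma \ref{lemma:re-establish}, or Lemma \ref{lemma:7/22}, so either $\cE$ is captured immediately or the rook frontier advances toward the evader by at least $7/22$. Since the strip between the starting rook frontier and the boundary has finite area, after boundedly many turns the frontier must advance past the vertex $a$, transitioning the configuration into a pocket position; Lemma \ref{lemma:endgame} then completes the capture.

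The substantive case is when $\cE$ makes an escape move on some turn $t$, illustrated in Figure \ref{fig:blocked}. Here $\cE$ moves rightward in a way that uses $\Pi_U$ near $a$ to block the pursuer's horizontal rook response. The pursuer's reply is to slide along $\frontier(\cP_{t-1})$ as far right as she can, which places her at a point $\cP_t$ on or immediately adjacent to $a$; in particular $x(\cP_t) > x(c)$. The pursuer then abandons rook mode and invokes Algorithm \ref{alg:monotone-path} from $\cP_t$ to build a new search path. Because $\cP_t$ lies on (or touches) $\Pi_U$ at $x$-coordinate at least $x(a)$, the first checkpoint $c'$ produced by the ensuing search loop satisfies $x(c') \geq x(a) > x(c)$, which is the desired rightward update.

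The main obstacle I anticipate is verifying that the transition back to search mode preserves the cleared region, i.e.\ $Q_{\cP}(t-1) \subseteq Q_{\cP}(t)$. This decomposes into two checks. First, $\cE$ cannot have slipped across the old guarded frontier (the portion of $\frontier(\cP_{t-1})$ up to $c$) during the escape move; this is immediate from Lemma \ref{lemma:guard}. Second, the new guarded frontier must bound a region containing the old pursuer territory. The latter follows from the fact that the pursuer moves from $\cP_{t-1}$ to $\cP_t$ along the rook frontier itself, so $\cP$ never exposes $Q_{\cP}(t-1)$ to incursion, combined with $x(\cP_t) > x(c)$ to ensure the new guarded frontier is positioned strictly rightward of the old one.
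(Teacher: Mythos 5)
Your proposal is correct and follows essentially the same route as the paper: if no escape move occurs, the advancing rook frontier converts the chute into a pocket and Lemma \ref{lemma:endgame} finishes the capture, while an escape move is answered by moving along the frontier to the blocking point and restarting the search, whose first step installs a new checkpoint at or to the right of that blocking point. Your added check that the old guarded frontier stays protected during the transition is exactly the (implicit) content of the paper's remark that ``the entire rook frontier is now guarded.''
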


\begin{proof}
Suppose that  $\cP$ is in upper chute position (a lower chute is analogous).   If the evader never makes an escape move, we claim that the pursuer will establish an upper pocket position. Observe that  
$y(\cP_t) > y(\cP_{t-1})$ whenever $\cE$ does not move completely horizontally, or when $\cE$ switches his horizontal direction. Each time that $y(\cP_t) > y(\cP_{t-1})$,  the pursuer increases the height the rook frontier. If $\cE$ never makes an escape move, then   right endpoint of the rook frontier must eventually transition from the lower chain  to the upper  chain. We are now in upper pocket position, so  the pursuer catches the evader by Lemma \ref{lemma:endgame}.

Suppose that $\cE$ makes an escape move. That is,  $\cE$ makes a rightward move above a lower  feature that obstructs the pursuer's rook response, see Figure \ref{fig:blocked} (a). Let $Z \in \Pi_U$ be the point that obstructs the pursuer's desired movement. In response, $\cP$ re-enters Search Mode. Her new search path, which is constructed via Algorithm \ref{alg:monotone-path}, starts from her current location, and then continues rightward, as shown in Figure \ref{fig:blocked} (b). After her first move, she encounters the blocking feature, so her checkpoint $c$ updates to  the blocking point $Z$, or some point  on the blocking feature even further to the right. In other words, the entire rook frontier is now guarded. This updates the checkpoint compared to the last time that $\cP$ was in Search Mode. 
\end{proof}


\begin{figure}

\begin{center}

  \begin{tikzpicture}[scale=1,y=-1cm]

	\draw (1.7,1.3) -- (2.1,1.9) -- (2.6, .2) -- (3.2, 2) -- (3.5, 1) -- (3.75,1.8) -- (3.9,.5) --(4.3,-.2) -- (5, 2.1) -- (5.5,-.75) -- (6.25,2.2) -- (6.8,.7);
	\draw   (1.7,-1) -- (1.9,0.5) -- (3,-1.75) -- (3.5,-.75) -- (4.2,-1.5) -- (4.9, -.5) -- (5.5,-1.6) -- (5.75,-1) -- (6,-1.7) -- (6.5,0);

\draw[very thick] (2.15,0) -- (4.175, 0);
\draw[fill] (2.15,0.0)  circle (1.5pt);
\node[below] at (2.2,0.1)  {$c$};
\draw[fill] (4.175,0.0)  circle (1.5pt);
\node[below] at (4.2,0.1)  {$Z$};

\draw[fill] (3.7, 0)  circle (2pt);
\node[below] at (3.5, .05)  {\small $\cP_{t-1}$};

\draw[fill] (4.1, -.5)  circle (2pt);
\node[above] at (4.2, .-.5)  {\small $\cE_{t-1}$};

\draw[fill] (4.6, .05)  circle (2pt);
\node[below] at (4.9, .05)  {\small $\cE_{t}$};

\draw[-latex] (4.2, -.4) -- (4.55, -.05);

\node at (4,2) {\scriptsize (a)};

\begin{scope}[shift={(6.5,0)}]

	\draw  (1.7,1.3) -- (2.1,1.9) -- (2.6, .2) -- (3.2, 2) -- (3.5, 1) -- (3.75,1.8) -- (3.9,.5) --(4.3,-.2) -- (5, 2.1) -- (5.5,-.75) -- (6.25,2.2) -- (6.8,.7);
	\draw    (1.7,-1) -- (1.9,0.5) -- (3,-1.75) -- (3.5,-.75) -- (4.2,-1.5) -- (4.9, -.5) -- (5.5,-1.6) -- (5.75,-1) -- (6,-1.7) -- (6.5,0) -- (6.8,-.58);

\draw[very thick] (2.15,0) -- (4.175, 0) -- (4.3,-.2) -- (5.40351,-0.2) -- (5.5,-.75) -- (6.27941,-.75) --(6.5,0) -- (6.8,0);

\draw[fill] (3.7, 0)  circle (2pt);
\node[below] at (3.5, .05)  {\small $\cP_{t-1}$};

\draw[fill] (4.3,-.2)  circle (2pt);
\node[above] at (4,-.3)  {\small $c = \cP_{t}$};
\draw[-latex] (3.8, -.075) -- (4.2,-.2);

\draw[fill] (4.6, .05)  circle (2pt);
\node[below] at (4.9, .05)  {\small $\cE_{t}$};

\node at (4,2) {\scriptsize (b)};

\end{scope}

  \end{tikzpicture}

\caption{Transition from Rook Mode back to Search Mode. (a) $\cE$ makes an escape move, so that  $\cP$'s responding rook move is obstructed by point $Z$. (b) Instead, $\cP$ counters by re-entering Search Mode with an updated search path and updated checkpoint $c$. }

\label{fig:blocked}

\end{center}

\end{figure}

%

\subsection{Catching the Evader}\label{sec:time}

We can now prove Theorem \ref{thm:monotone}. 

\bigskip

\begin{proofof}{Theorem \ref{thm:monotone}}
The pursuer alternates between Search Mode and Rook Mode until she captures the evader: see Figure \ref {fig:example-monotone} for an example pursuer trajectory. Once the pursuer is in Search Mode, she will establish rook position by Lemma \ref{lemma:search}. The evader can hide at most $n$ times (once for each vertex of $Q$), so we switch modes $O(n)$ times. Indeed, $\cE$  cannot hide behind the same vertex twice because $\cP$ does not exit Search Mode until $0 \leq x(\cE) - x(\cP) \leq 1/2$ and $\cP$ has visibility of $\cE$.  
Once $\cP$ is in Rook Mode, $\cE$ must make an escape move in order to hide.  Therefore $\cE$ must hide behind a new vertex to the right of the previous hiding spot. 

Overall, the pursuer spends at most $O(n(Q) + \diam(Q))$ turns in Search Mode: the factor of $n$ appears since $\cP$ defensively stops every time she passes the $x$-coordinate of a vertex of $Q$, or is blocked by a boundary segment. Note that $\cP$ can only pass by a vertex in Search Mode at most once (as any future passings of that vertex must occur in Rook Mode), and each of the $n$ boundary segments can obstruct the pursuer at most once. Therefore the number of truncated steps is $O(n)$.

Each time $\cP$ switches to Rook Mode, she either captures $\cE$, or $\cE$ makes an escape move. In the latter case, the pursuit continues to the right of the current player positions. Crucially, throughout the pursuit, $\cP$ never visits the same point twice. Once in Rook Mode, $\cP$ guards her pursuer frontier by Lemma \ref{lemma:guard}, so the evader cannot cross it. Any vertical move by the pursuer advances this frontier, further limiting the pursuer territory. Furthermore, by Lemma \ref{lemma:7/22}, $\cP$ is able to advance the rook frontier by at least $7/22$ once every $\diam(Q)$ moves. Therefore, the entire pursuit path taken by $\cP$ has length $O(\Area(Q))$. 

Every time  the pursuer switches from Rook Mode back to Search Mode, she updates her checkpoint and the evader territory shrinks. Therefore, the pursuer ultimately traps the evader in a pocket position and achieves capture by Lemma \ref{lemma:endgame}.
Combining the time bounds for both Search Mode and Rook Mode, the pursuer captures the evader in $O(n(Q) + \Area(Q) )$ turns.
\end{proofof}

This completes our discussion of pursuit in a monotone polygon. In the next  section, we describe how to adapt this basic algorithm to  scallop polygons.

%

\section{Pursuit in a Scallop Polygon}
\label{sec:scallop}

A polygon $Q$ is a \style{scallop polygon} when it can be swept by rotating a line $L$ through some center point $C$ outside the polygon such that the intersection $L \cap Q$ is a line segment. An example of a scallop polygon is shown in Figure \ref{fig:mono-scallop} (c). 
We  use polar coordinates $(r:\theta)$ where $C$ is located at the origin, and the polygon is located in the upper half plane. We will sweep our polygon in the clockwise direction, so we index the vertices of our polygon by decreasing angle. More precisely, let the $n$ vertices of $Q$ be $v_k = (r_k : \theta_k)$ for $1 \leq k \leq n$ where $r_k \geq 0 $ for all $k$ and and $\pi > \theta_1 > \theta_2 > \cdots > \theta_n \geq 0$. (Recall  that our vertices and our sweep center are in general position, so all these angles are distinct.) For a point $Z \in Q$,  let $\sweepline{Z}$ denote the sweep line through $Z$, and  let $\theta(Z)$ denote the angle of $\sweepline{Z}$.

We begin with an overview of the pursuer's strategy. 
Algorithm \ref{alg:monotonePursuit} still describes the  pursuit algorithm: the pursuer alternates between Search Mode and Rook Mode until the evader is captured. 
Similar to searching in a monotone polygon, the Scallop Search Strategy guards an increasing region as the pursuer traverses clockwise through the polygon, measured by the decreasing angle of the sweep line. Due to the changing angle, search path construction is more involved and  the transition from Rook Mode back into Search Mode requires more care. Finally, the evader has two new ways to foil rook position: a hiding move and a blocking move. So the pursuer needs a recovery mode to respond to these disruptions.

\subsection{Overview of Scallop Search Strategy}
\label{sec:scallop-search-overview}

We give a high-level description  of the search strategy in a scallop polygon, and define the required terminology, focusing on the adaptations required to generalize the strategy for monotone search. 
The algorithms and proofs are in Section \ref{sec:scallop-search}.

\begin{figure}

\begin{center}

\subfigure[]
{

\begin{tikzpicture}[scale=.8]

\foreach \ang in {160, 80, 70, 60, 50} {
\draw[dotted] (\ang:0) -- (\ang:5);
}

\foreach \ang in {150, 145, 130, 120, 115, 110, 100, 90} {
\draw[dotted] (\ang:0) -- (\ang:6);
}

\draw (160:4) -- (150: 2) -- (130:3) -- (120: 2) -- (100:3.7) -- (80:1) -- (70:2) -- (60: 3.05) -- (50: 1) -- (35: 3);

\draw (160:4) -- (150:6) -- (145: 3.5) -- (130: 6) -- (115:4) -- (110:6) -- (90: 5) -- (80:3.6) -- (60:4.5) -- (35:3);

\draw[fill] (0,0) circle (3pt);

\node at (.5,0) {$C$};


\begin{scope}[shift={(160:4)}]
\draw[very thick] (0:0) -- (70:.7);
\draw[very thick] (55:.7) -- (55:.975);
\draw[very thick] (30:2.56) -- (30:2.84);
\end{scope}

\draw[very thick] (150:4.07) -- (150:3.86);
\draw[very thick] (150:4.07) -- (150:3.86);
\draw[very thick] (146:3.87) -- (145:3.5);

\begin{scope}[shift={(145:3.5)}]
\draw[very thick] (0:0) -- (55:.95);
\draw[very thick] (40:.9) -- (40:1.5);
\end{scope}

\draw[very thick] (130:3.64) -- (130:3.355);
\draw[very thick] (120:3.45) -- (120:3.05);
\draw[very thick] (115:3.1) -- (115:2.88);
\draw[very thick] (110:2.915) -- (110:2.795);

\begin{scope}[shift={(130:3)}]
\draw[very thick] (25:.77) -- (25:1.04);
\draw[very thick] (20:1.03) -- (20:1.18);
\end{scope}

\draw[very thick] (107:2.83) -- (100:3.7);

\begin{scope}[shift={(100:3.7)}]
\draw[very thick] (10:0) -- (10:.665);
\draw[very thick] (0:.63) -- (0:1.235);
\draw[very thick] (-10:1.25) -- (-10:1.9);
\draw[very thick] (-20:1.85) -- (-20:2.42);
\end{scope}

\draw[very thick] (90:3.78) -- (90:3.63);
\draw[very thick] (81:3.7) -- (80:3.6) -- (80:3.46);
\draw[very thick] (70:3.55) -- (70:3.18);
\draw[very thick] (60:3.25) -- (60:3.05);

\begin{scope}[shift={(60:3.05)}]
\draw[very thick] (-30:0) -- (-30:.55);
\draw[very thick] (-40:.53) -- (-40:1.21);
\end{scope}

\draw[very thick] (50:3.11) -- (50:2.98);
\draw[very thick] (37.35:3.085) -- (35:3);

\end{tikzpicture}
}
\subfigure[]
{
\begin{tikzpicture}[scale=1]

\draw[dotted] (160:1.5) -- (160:5);

\foreach \ang in {150, 145, 130, 120, 115, 110, 100} {
\draw[dotted] (\ang:1.5) -- (\ang:6);
}

\draw (160:4) -- (150: 2) -- (130:3) -- (120: 2) -- (100:3.7) -- (90:2.35);

\draw (160:4) -- (150:6) -- (145: 3.5) -- (130: 6) -- (115:4) -- (110:6) -- (90: 5) ;

\begin{scope}[shift={(160:4)}]
\draw[very thick] (0:0) -- (70:.7);
\draw[dashed] (0:0) -- (55:.7);
\draw[very thick] (55:.7) -- (55:.975);
\draw[dashed] (0:0) -- (30:2.56);
\draw[very thick] (30:2.56) -- (30:2.84);
\end{scope}

\draw[very thick] (150:4.07) -- (150:3.86);
\draw[very thick] (150:4.07) -- (150:3.86);
\draw[very thick] (146:3.87) -- (145:3.5);

\begin{scope}[shift={(145:3.5)}]
\draw[very thick] (0:0) -- (55:.95);
\draw[dashed] (0:0) -- (40:.9);
\draw[very thick] (40:.9) -- (40:1.5);
\end{scope}

\draw[very thick] (130:3.64) -- (130:3.355);
\draw[very thick] (120:3.45) -- (120:3.05);
\draw[very thick] (115:3.1) -- (115:2.88);
\draw[very thick] (110:2.915) -- (110:2.795);

\begin{scope}[shift={(130:3)}]
\draw[dashed] (0:0) -- (25:.77);
\draw[very thick] (25:.77) -- (25:1.04);
\draw[dashed] (0:0) -- (20:1.03);
\draw[very thick] (20:1.03) -- (20:1.18);
\end{scope}

\draw[very thick] (107:2.83) -- (100:3.7);

\begin{scope}[shift={(100:3.7)}]
\draw[very thick] (10:0) -- (10:.665);

\end{scope}

\draw[fill]  (160:4)  circle (1.5pt);
\node[below] at (160:4) {$c_1$};

\draw[fill]  (130:3) circle (1.5pt);
\node at (136:2.9) {$c_2$};

\draw[fill]  (100:3.7)  circle (1.5pt);
\node[above left] at (100:3.7)  {$c_3$};

\draw[fill]  (145: 3.5) circle (1.5pt);
\node at (140: 3.68) {$a$};

\end{tikzpicture}
}

\end{center}

\caption{The search path of a scallop polygon. (a) At each spoke line, we adjust the path to our new frame of reference  to guard the checkpoint vertex. (b)
Guarding upper boundary vertex $a$ is only temporary: later on, we revert back to guarding  $c_1$ directly. Later still, the pursuer updates her checkpoint to $c_2$, which also 
guards the previous checkpoint $c_1$.   }
\label{fig:scallop-search}

\end{figure}

\begin{definition}
A \style{search path} in scallop polygon $Q$ is a path $\sp$ constructed by Algorithm \ref{alg:scallop-path}  Create Scallop Search Path.
\end{definition}

Figure \ref{fig:scallop-search} (a) shows an example search path. The search path construction algorithm is similar to Algorithm \ref{alg:monotone-path} for monotone polygons. The primary difference is how we handle the notions of ``horizontal'' and ``vertical.'' During Search Mode, we change our frame of reference every time we pass a vertex of the polygon, matching the sweeping nature of the polygon itself. 
The secondary difference 
is that only points on the lower boundary are taken as checkpoints. The search path can include vertices on $\Pi_U$. However, we may have to relinquish these upper boundary points as our frame of reference rotates. Figure \ref{fig:scallop-search} (b) shows an example of an upper vertex  that goes from guarded to unguarded as $\cP$ traverses $\Pi$ down the sweep line. Therefore, we need to distinguish checkpoints on the lower boundary from the  temporary checkpoints  on the upper boundary. 

\begin{definition}
Let $Q$ be a scallop polygon with center point $C$.
Let $v_1, v_2, \ldots, v_n$ be the vertices of $Q$, ordered by decreasing angle. For $1 \leq k \leq n$, the \style{spoke line} $S_k = S(v_k)$ is the line through $C$ and $v_k$. The \style{transverse line} $T_k = T(v_k)$ is the line through $v_k$ that is perpendicular to $S_k$. Taken together,  $(T_k, S_k)$ define a \style{frame of reference}. A pursuer using this frame of reference has a well-defined notion of up, down, left and right.
\end{definition}

During Search Mode, the pursuer adjusts her frame of reference so that horizontal and vertical match the last vertex $v_{\ell}$ that she has passed. Namely, for point $P$ where $\theta(v_{\ell}) \geq \theta(P) > \theta(v_{\ell+1})$, the frame of reference $(X,Y)$ at $P$ will be the translate of $(T_{\ell}, S_{\ell})$, the transverse and spoke lines for vertex $v_{\ell}$. For convenience, we express this translation equivalence relation as $(X,Y) \sim (T_{\ell}, S_{\ell})$.
Before defining checkpoints on $\Pi_L$ and auxiliary  vertices on $\Pi_U$ (their temporary analogs), we must adapt our notion of guarding for a scallop polygon.
Figure \ref{fig:scallop-guard} offers simple examples of guarded vertices.

\begin{definition}
\label{def:scallop-guarded}
Consider a pursuer $\cP$  a scallop polygon $Q$ where $\theta(v_{\ell}) \geq \theta(\cP) > \theta(v_{\ell+1})$ which  is using frame of reference $(X,Y) \sim (T_{\ell}, S_{\ell})$ centered at $\cP$. Suppose that
$
x(\cE) \geq x(\cP) - 1/2
$
with respect to frame $(X,Y)$.

Let $a \in \Pi_U$ be a  vertex on the upper boundary  such that $\theta(a)  \geq \theta(\cP)$,  and consequently $x(a) \leq x(\cP)$. Then $\cP$ \style{guards}  $a \in \Pi_U$  with respect to $(X,Y)$ when  $a \in \horizlineneg{\cP}$. In other words,  $y(a) =  y(\cP)$ and the line segment between $a$ and $\cP$ is contained in $Q$.

Let $c \in \Pi_L$ be a point on the lower boundary  with $\theta(c)  \geq \theta(\cP)$,  
such that $c$ is contained in the subpolygon $Q' \subset Q$ created by $\horizlineneg{\cP}$. 
Then   $\cP$ \style{guards}  $c$ with respect to $(X,Y)$  when one of the following holds:
\begin{enumerate}
\item[(a)]  $c \in \horizlineneg{\cP}$. In other words, $y(c) =  y(\cP)$ and the line segment between $c$ and $\cP$ is contained in $Q$. 
\item[(b)] $y(c) \leq y(\cP)$  and either
\begin{enumerate}
\item[(i)] $y(\cE) > y(\cP)$ and $\cP \in  \Pi_L$ is on the lower boundary of $Q$, or
\item[(ii)] $\horizlineneg{\cP}$ intersects the highest local maximum $c' \in  \Pi_L$ with $x(c) < x(c') < x(\cP)$.
\end{enumerate}
\item[(c)] $y(c) \geq y(\cP)$ and $\horizlineneg{\cP}$ intersects the lowest local minimum $a \in \Pi_U$ with $x(c) < x(a) < x(\cP)$.
\end{enumerate}
\end{definition}

\begin{figure}[ht]

\begin{tikzpicture}[scale={.9}]

\begin{scope}

\foreach \ang in {150, 110, 90, 70, 50} {
\draw[dotted] (\ang:.25) -- (\ang:2.15);
}

\draw  (150: 2) -- (90:.5) -- (50: 1.8) ;

\draw (150: 2)  -- (110:2.1) -- (70:1.85) -- (50: 1.8);

\draw[fill] (150:2) circle (1.5pt);
\node[below left]  at (143:1.75) {{\scriptsize $Q' \!\! =$} {\small $ \! c$}};

\draw[fill] (90:.5) circle (1.5pt);
\node[below]  at (90:.5) {\small $v_{\ell}$};


\begin{scope}[shift={(150:2)}]
\draw[thick, dashed] (0:1.9) -- (0:0);
\draw[fill] (0:1.9) circle (2pt);
\node[above right]  at (0:1.65) {\small $\cP$};
\end{scope}


\node at (-.2,-.65) {\small  (a)};

\end{scope}

\begin{scope}[shift={(3.5,0)}]

\fill[gray!33] (150:2) -- (130:1) -- (0,1.25) -- (-1.65,1.25);

\foreach \ang in { 50} {
\draw[dotted] (\ang:.5) -- (\ang:1.75);
}

\foreach \ang in {150, 130, 120, 90, 80} {
\draw[dotted] (\ang:.5) -- (\ang:2.15);
}

\foreach \ang in {120, 90} {
\draw[dotted] (\ang:.5) -- (\ang:2.5);
}

\draw[fill] (90:2.5) circle (1.5pt);
\node[above]  at (90:2.5) {\small $v_{\ell}$};

\draw  (150: 2) -- (130:1) --  (80:1.5) -- (50: 1.5) ;

\draw (150: 2) -- (120:2.5) -- (90:2.5) -- (50: 1.5);

\draw[fill] (150:2) circle (1.5pt);
\node[below left]  at (150:2) {\small $c$};

\draw[fill] (0,1.25) circle (2pt);
\node[above left]  at (0,1.25) {\small $\cP$};
\draw[thick, dashed] (-1.65,1.25) -- (0,1.25) ;

\node  at (-.9,1.025) {\scriptsize $Q'$};

\node at (-.2,-.65) {\small  (b.i)};

\end{scope}

\begin{scope}[shift={(7,0)}]

\fill[gray!33] (150:2) -- (110:1.75)  -- (-1.48,1.63) --cycle;

\foreach \ang in {150, 120,110, 90, 70, 50} {
\draw[dotted] (\ang:.5) -- (\ang:2.15);
}

\foreach \ang in {120,110, 90} {
\draw[dotted] (\ang:.5) -- (\ang:2.5);
}

\draw  (150: 2) -- (110:1.75) --  (70:1) -- (50: 2) ;

\draw (150: 2) -- (120:2.5) -- (90:2.5) -- (50: 2);

\draw[fill] (90:2.5) circle (1.5pt);
\node[above]  at (90:2.5) {\small $v_{\ell}$};

\draw[fill] (150:2) circle (1.5pt);
\node[below left]  at (150:2) {\small $c$};

\draw[fill] (.2,1.63) circle (2pt);
\node[above right]  at (0,1.63) {\small $\cP$};
\draw[fill] (-.6,1.63) circle (1.5pt);
\draw[thick, dashed] (-1.48,1.63) -- (.2,1.63);

\node[above]  at (-.4,1.63) {\small $c'$};

\node  at (-1.3,1.425) {\scriptsize $Q'$};

\node at (-.2,-.65) {\small  (b.ii)};

\end{scope}

\begin{scope}[shift={(10.5,0)}]

\fill[gray!33] (150: 2) -- (130:2.1) -- (120:1) -- (-1.4,.85);

\foreach \ang in {150, 140, 130, 120, 90, 60, 50} {
\draw[dotted] (\ang:.25) -- (\ang:2.15);
}

\foreach \ang in {90} {
\draw[dotted] (\ang:.25) -- (\ang:2.5);
}

\draw  (150: 2) -- (140:.5) --  (60:.5) -- (50: 2) ;

\draw (150: 2) -- (130:2.1) -- (120:1)  -- (90:2.5) -- (50: 2);

\draw[fill] (90:2.5) circle (1.5pt);
\node[above]  at (90:2.5) {$\small v_{\ell}$};

\draw[fill] (150:2) circle (1.5pt);
\node[below left]  at (150:2) {\small $c$};
\node[above right]  at (120:1) {\small $a$};

\draw[fill] (.2,.85) circle (2pt);
\node[above] at (.2,0.85) {\small $\cP$};
\draw[fill] (-.5,.85) circle (1.5pt);
\draw[thick, dashed] (-1.4,.85) -- (.2,.85);

\node  at (-1.25,1.15) {\scriptsize $Q'$};

\node at (-.2,-.65) {\small  (c)};

\end{scope}

\end{tikzpicture}

\caption{Pursuer $\cP$ guards vertex $c \in \Pi_L$ of subpolygon $Q'$ with respect to the current frame $(X,Y) \sim (T_{\ell}, S_{\ell})$, provided that $x(\cE) \geq x(\cP) - 1/2$. 
The subfigure labels match the cases listed in Definition \ref{def:scallop-guarded}. 
Subfigure (b.i) requires that $y(\cE) > y(\cP)$ as well.
Subfigure (b.ii) also contains  guarded vertex $c' \in \Pi_L$. Subfigure (c) also contains  guarded vertex $a \in \Pi_U$.
}
\label{fig:scallop-guard}

\end{figure}

As the pursuer in Search Mode moves through the polygon, she updates her notion of ``vertical'' to match the last polygon vertex that she passed by. Therefore $\cP$ enters an adjustment phase every time that she reaches a spoke line $S_k$, where she transitions her frame of reference from $(T_{k-1},S_{k-1})$ to $(T_k, S_k)$, see Figure \ref{fig:scallop-search}(b). Whenever $\cP$ encounters a spoke line during search, the pursuer halts.  She then traverses down the spoke line until she reaches a point that guards the current checkpoint $c \in \Pi_L$  with respect to her new frame of reference. 
She then adopts frame $(T_k,S_k)$ and continues forward. Throughout this transition, she must be aware of both her old and new frames of reference, in case $\cE$ tries to invade the pursuer territory.

Having offered a high-level description of our search path, we can now articulate the update rule for the checkpoint and the auxiliary vertex.

\begin{definition}
\label{def:scallop-checkpoint}
The searching pursuer's \style{checkpoint} $c_t$ and \style{auxiliary vertex} $a_t$ are defined recursively. At $t=0$, we have $c_0 = v_1$ and $a_0 = \emptyset$. For $t>0$, the update rule 
depends on the movement from $\cP_{t-1}$ to $\cP_{t}$. There are four movement types leading to eight different update rules.  Examples are  shown in  Figure  \ref{fig:scallop-update-checkpoint}.

\begin{enumerate}
\item[(a)] The pursuer moved horizontally with respect  to her current frame. If we reach the lower boundary $\cP_t \in \Pi_L$ then update $c_t=\cP_t$ and reset $a_t=\emptyset$. Otherwise, keep $c_t=c_{t-1}$ and $a_t = a_{t-1}$.
\item[(b)] The pursuer moved along the lower boundary. Update $c_t = c_{t-1}$ and keep $a_t = \emptyset$.
\item[(c)] The pursuer moved along the upper boundary. 
\begin{enumerate}
\item[(i)] We did not reach a vertex. Keep $c_t = c_{t-1}$ and $a_t = a_{t-1}$.
\item[(ii)] We reached vertex $v_k$. Keep $c_t = c_{t-1}$. If $c_{t-1}$ is guarded by frame $(X',Y')=(T_{k},S_{k})$   
then  update $a_t = v_{k}$. Otherwise keep $a_t = a_{t-1}$. 
\end{enumerate}
\item[(d)] The pursuer moved vertically  along  spoke line $S_k$ in order to transition from frame $(X,Y) \sim (T_{k-1}, S_{k-1} )$ to frame $(X',Y') \sim (T_k,S_k)$. If checkpoint $c_{t-1}$ is not yet guarded by  $\cP_t$ in frame $(X', Y')$, then there are no updates (and $\cP$ will continue to move downward). If $c_{t-1}$ is guarded in $(X',Y')$, then  we update our checkpoint depending on the method of guarding, see Definition \ref{def:scallop-guarded}.
\begin{enumerate}
\item[(i)] $c_{t-1} \in X'_{-}(\cP_t).$ Keep $c_t = c_{t-1}$ and reset $a_t = \emptyset$. The  new  frame of reference directly guards the current checkpoint.
\item[(ii)] $y'(c) \leq y'(P)$ and
\begin{enumerate}
\item[($\alpha$)] $\cP_t \in \Pi_L$. Set $c_t = \cP_{t}$ and reset $a_t = \emptyset$. The pursuer is located on her  new checkpoint.
\item[($\beta$)] $\horizlineneg{\cP_t}$ contains the highest local maximum $v \in  \Pi_L$ with $x'(c) < x'(v) < x'(P)$. Set $c_t = v$ and reset $a_t = \emptyset$. This updated checkpoint protects the previous one.
\end{enumerate}
\item[(iii)] $\horizlineneg{\cP_t}$ contains the lowest local minimum $v \in  \Pi_U$ with $x'(c) < x'(v) < x'(P)$. Keep $c_t = c_{t-1}$ and set $a_t = v$. The new auxiliary vertex protects the previous checkpoint.
\end{enumerate}

\end{enumerate}
\end{definition}

\begin{figure}

\begin{tikzpicture}[scale=.9]

\begin{scope}


\draw  (150: 2) -- (70:.5) -- (50: 1.8) ;

\draw (150: 2)  -- (110:2.1) -- (60:1.85) -- (50: 1.8);

\begin{scope}[shift={(110:2.1)}]

\draw[dotted] (0,-1.5) -- (0,.25);

\end{scope}

\draw[fill] (150:2) circle (1.5pt);
\node[below]  at (150:2) {\small $c$};


\begin{scope}[shift={(150:2)}]

\draw[thick, dashed] (0:0) -- (0:1.85);
\draw[fill] (0:1.5) circle (2pt);
\draw[fill] (0:1.85) circle (2pt);
\node[above]  at (0:1.45) {\small $\cP$};
\node[above]  at (0:2) {\small $\cP'$};
\end{scope}

\node at (-.2,-.15) {\small  (a)};

\end{scope}

\begin{scope}[shift={(3.5,0)}]




\draw  (150: 2) -- (130:1) --  (80:1.5) -- (50: 1.5) ;

\draw (150: 2) -- (120:2.5) -- (75:2.5) -- (50: 1.5);

\begin{scope}[shift={(130:1)}]

\draw[dotted] (0,-.25) -- (0,1.75);

\end{scope}


\draw[fill] (-.225,1.1) circle (2pt);
\node[below right]  at (-.25,1) {\small $\cP =c$};
\draw[thick, dashed] (-1.6,1.1) -- (-.25,1.1) ;

\draw[fill] (.1,1.33) circle (2pt);
\node[above]  at (.1,1.4) {\small $\cP' = c'$};
\draw[thick, dashed] (-1.6,1.33) -- (.1,1.33) ;

\node at (-.2,-.15) {\small  (b)};

\end{scope}

\begin{scope}[shift={(7,0)}]



\draw  (150: 2) -- (110:1.75) --  (70:1) -- (50: 1.5) ;

\draw (150: 2) -- (120:2.5) -- (90:2.5) -- (50: 1.5);

\begin{scope}[shift={(90:2.5)}]

\draw[dotted] (0,-1.5) -- (0,.25);

\end{scope}

\node[above]  at (-.6,1.63) {\small $c$};

\draw[fill] (.63,1.63) circle (2pt);
\node[above right]  at (.63,1.63) {\small $\cP$};

\draw[fill] (.8,1.37) circle (2pt);
\node[ right]  at (.8,1.5) {\small $\cP'$};

\draw[fill] (-.6,1.63) circle (1.5pt);
\draw[thick, dashed] (-1.48,1.63) -- (.63,1.63);
\draw[thick, dashed] (-.25,1.37) -- (.8,1.37);

\node at (-.2,-.15) {\small  (c.i)};

\end{scope}

\begin{scope}[shift={(10.5,.5)}]

\draw  (165: 1.25) -- (140:.25) -- (70:.5)  -- (50: 1.5) ;

\draw (165: 1.25) -- (140: 2) -- (130:2.1) -- (120:1)  -- (90:1.5) -- (50: 1.5);

\draw[fill] (140:2) circle (1.5pt);
\node[above left]  at (140:2) {\small $c$};
\node[below]  at (120:1.0) {\small $\cP' = a'$};

\node[above] at (-.65,1.2) {\small $\cP$};
\draw[fill] (-.75,1.1) circle (2pt);
\draw[thick, dashed] (-1.47,1.1) -- (-.75,1.1);

\draw[fill] (-.5,.85) circle (2pt);
\draw[thick, dashed] (-1.37,.85) -- (-.5,.85);

\begin{scope}[shift={(130: 2.1)}]

\draw[dotted] (0,.25) -- (0,-1.5);

\end{scope}

\node at (-.2,-.65) {\small  (c.ii)};

\end{scope}

\begin{scope}[shift={(0,-4.5)}]

\draw  (130: 1.5)  -- (100:3) -- (70: 2.5) --  (50: 1.75);

\draw (130: 1.5) -- (90:2) -- (60:1) -- (50: 1.75);

\foreach \ang in {90,  70} {
\draw[dotted] (\ang:1) -- (\ang:3);
}

\draw[thick,dashed]  (-.75,2) -- (.75,2);

\draw[fill] (.75,2) circle  (2pt);
\node[above] at (0.65, 2) {\small $\cP$};

\begin{scope}[shift={(90:2)},rotate=-20]

\draw[thick,dashed] (-.7,0) -- (.7,0);

\draw[fill] (.7,0) circle  (2pt);
\node[below] at (0.85, 0) {\small $\cP'$};

\end{scope}

\node[above left]  at (0,2) {\small $c$};
\draw[fill] (0,2) circle (1.5pt);

\node at (-.2,-.15) {\small  (d.i)};

\end{scope}

\begin{scope}[shift={(3,-4.5)}]

\draw  (130: 1.5)  -- (100:3.25) -- (55: 3) --  (45: 2.25);

\draw (130: 1.5)  -- (90:1.85) -- (50:1.75) -- (45: 2.25);

\foreach \ang in {90,  55} {
\draw[dotted] (\ang:1) -- (\ang:3);
}

\begin{scope}[shift={(90:1.85)}]
\draw[thick,dashed]  (-.8,0) -- (1.4,0);

\node[ left]  at (0,.15) {\small $c$};
\draw[fill] (0,0) circle (1.5pt);

\node[above left]  at (1.3,0) {\small $\cP$};
\draw[fill] (1.3,0) circle (2pt);

\end{scope}

\node[below right]  at (60:1.55) {\small $\cP'=c'$};
\draw[fill] (55:1.725) circle (2pt);

\begin{scope}[shift={(55:1.725)}, rotate=-35]

\draw[thick, dashed] (0,0) -- (-2.1,0);

\end{scope}

\node at (-.2,-.15) {\small  (d.ii.$\alpha$)};

\end{scope}

\begin{scope}[shift={(6.75,-4.5)}]

\draw  (130: 1.6)  -- (110:3) -- (60: 2.75) --  (45: 2);

\draw (130: 1.6) -- (105:2.25) -- (95:1) -- (90:2) -- (55:1) -- (45: 2);

\foreach \ang in {90,  60} {
\draw[dotted] (\ang:1) -- (\ang:3);
}

\begin{scope}[shift={(105:2.25)}]

\node[ left]  at (0,0) {\small $c$};
\draw[fill] (0,0) circle (1.5pt);

\end{scope}

\begin{scope}[shift={(60:2.3)}]
\draw[thick,dashed]  (-1.6,0) -- (0,0);

\node[above left]  at (0,0) {\small $\cP$};
\draw[fill] (0,0) circle (2pt);

\end{scope}

\begin{scope}[shift={(90:2)},rotate=-30]

\draw[thick,dashed] (-1.19,0) -- (1,0);

\node[ right]  at (1,0) {\small $\cP'$};
\draw[fill] (1.0,0) circle (2pt);

\end{scope}

\node[above right]  at (0,2) {\small $c'$};
\draw[fill] (0,2) circle (1.5pt);

\node at (-.2,-.15) {\small  (d.ii.$\beta$)};

\end{scope}

\begin{scope}[shift={(10,-4.5)}]

\draw  (130: 1.6)  -- (105:3) -- (95:2.75) -- (90:1.5) --  (60: 2.75) --  (45: 2);

\draw (130: 1.6) -- (105:2.25) -- (100:.75) --  (55:1) -- (45: 2);

\foreach \ang in {90,  60} {
\draw[dotted] (\ang:1) -- (\ang:3);
}

\begin{scope}[shift={(105:2.25)}]

\node[ left]  at (0,0) {\small $c$};
\draw[fill] (0,0) circle (1.5pt);

\end{scope}

\begin{scope}[shift={(90:1.5)},rotate=-30]

\draw[thick,dashed] (-.5,0) -- (.78,0);

\node[ left]  at (.78,-.1) {\small $\cP'$};
\draw[fill] (.78,0) circle (2pt);

\end{scope}

\begin{scope}[shift={(90:1.5)}]


\node[ right]  at (.9,0) {\small $\cP$};
\draw[fill] (.9,0) circle (2pt);
\draw[thick,dashed]  (-.375,0) -- (.9,0);

\end{scope}

\node[below]  at (90:1.5) {\small $a$};
\draw[fill] (90:1.5) circle (1.5pt);

\node at (-.2,-.15) {\small  (d.iii)};

\end{scope}

\end{tikzpicture}

\caption{Examples of checkpoint and auxiliary vertex updates when $\cP$ moves to $\cP'$. The subfigure labels match the cases listed in Definition \ref{def:scallop-checkpoint}. An updated checkpoint is denoted by $c'$. An updated auxiliary vertex is denoted by $a'$.}

\label{fig:scallop-update-checkpoint}

\end{figure}

We conclude this descriptive section with a final pair of definitions adapted from the monotone case. The first allows us to talk about regions within the polygon, and the second provides a measure for proving that the pursuit ends in finite time.

\begin{definition}
\label{def:sweepable-terms}
The terminology of Definition \ref{def:search-terms} for search in a monotone polygon also applies to search in a scallop polygon (search frontier, guarded frontier, pursuer territory, evader territory, guarded region).
\end{definition}

\begin{definition}
\label{def:scallop-progress}
The searching pursuer $\cP$ \style{makes progress} in the scallop polygon whenever
\begin{enumerate}
\item she updates her checkpoint on the lower boundary from $c_t=(r:\theta)$ to $c_{t'}=(r':\theta')$ where $t' > t$ and $\theta' < \theta$, or
\item she updates her search frame of reference from $(X,Y)$ to $(X',Y')$ where
$\theta(Y') < \theta(Y)$.
 \end{enumerate}

\end{definition}

\subsection{Overview of Scallop Rook Strategy}

In this brief section,  we describe the adaptations needed for rook strategy in a scallop polygon. The algorithms and proofs can be found in Section  \ref{sec:scallop-rook}.
The goal of searching is to establish rook position in the pursuer's current frame of reference. We start with a straight-forward adaptation of Definition \ref{def:rook} to the scallop setting.

\begin{definition}
Let $Q$ be a scallop polygon and let $(X,Y)$ be the pursuer's current frame of reference. 
The pursuer $\cP$ is in \style{rook position} when $|x(\cP) - x(\cE)| \leq 1/2$ and the line segment $\overline{\cP \cE} \subset Q$, so that $\cP$ sees $\cE$ .  
\end{definition}

Once she is in rook position, the pursuer transitions into Rook Mode. She executes rook's strategy while maintaining her current frame of reference (even when  passing over spoke lines).   There are now three ways for $\cE$ to force $\cP$ back into Search Mode.  As in the monotone case, $\cE$ can make an escape move to the right. An evader in a scallop polygon has two additional gambits: 
he can hide behind a feature or he can use a feature to block  the pursuer. 
Figure \ref{fig:sweep-gambit} shows a hiding move and a blocking move for an evader moving to the left after the pursuer has achieved rook position.

\begin{figure}[ht]

\begin{center}

\begin{tikzpicture}[scale=1.5]

\begin{scope}

\fill[color=gray!20] (140:2.85) -- (133:4.25)  -- (-2.18,3.1) -- cycle; 

\draw  (160:3.15) -- (145: 4) -- (140:2.85) -- (133:4.25)  -- (120:3.6)  -- (110:2.75) -- (100:1);

\draw[thick] (-3,1.25) -- (-.3,1.25);

\draw[fill] (-1.9,1.25) circle (1pt);
\draw[fill] (-1.55,1.25) circle (1pt);

\draw[-latex] (-1.6,1.15) -- (-1.85, 1.15);

\draw[fill] (-1.7,2.5) circle (1pt);
\draw[fill] (-2.05,2.5) circle (1pt);
\draw[fill] (-2.4,2.5) circle (1pt);

\draw[-latex] (-2.05,2.5) -- (-2.35, 2.5);
\draw[-latex] (-1.7,2.5) -- (-2.00, 2.5);


\begin{scope}[shift={(140:2.85)}]
	\draw[dashed] (0,-1) -- (0,1.5);
	\draw[fill] (0,0) circle (1 pt);
	\node[right] at (0,0) {\small $v_{\ell}$};
\end{scope}

\node[above right] at (-2.1,1.25) {\small $\cP_{t-1}$};
\node[below right] at (-1.6,1.25) {\small $\cP_{t-2}$};
\node[below right] at (-1.8,2.5) {\small $\cE_{t-2}$};
\node[above right] at (-2.2,2.5) {\small $\cE_{t-1}$};
\node[above] at (-2.4,2.5) {\small $\cE_t$};

\node[below] at (-1.5,.75) {\scriptsize (a)};

\end{scope}


\begin{scope}[shift={(4.5,0)}]

\fill[color=gray!20]  (147:2.275) -- (145: 4)  -- (130:4)  -- (124:3.475) -- cycle;

\draw[dashed] (-1.93,1) -- (-1.93,3);

\draw  (160:3.5) -- (150:4) -- (148:2) -- (145: 4)  -- (130:4)  -- (110:2.75) -- (105:2.25) -- (90:2.35);

\draw (110:1) -- (85:1.5);

\draw[thick] (-1.95,1.25) -- (-.07,1.25);

\draw[fill] (-1.35,1.25) circle (1pt);
\draw[fill] (-1.75,1.25) circle (1pt);

\draw[fill] (-1.93,1.25) circle (1pt);

\node[below right] at (-1.4,1.25) {\small $\cP_{t-2}$};
\node[above right] at (-1.85,1.25) {\small $\cP_{t-1}$};
\node[above left] at (-1.89,1.3)  {\small $b$};

\draw[-latex] (-1.4,1.15) -- (-1.7, 1.15);

\begin{scope}[shift={(.15,0)}]

\draw[fill] (-1.6,2.1) circle (1pt);
\draw[fill] (-2,2.1) circle (1pt);
\draw[fill] (-2.4,2.1) circle (1pt);

\draw[-latex] (-1.6,2.1) -- (-1.95, 2.1);
\draw[-latex] (-2,2.1) -- (-2.35, 2.1);

\node[below right] at (-1.7,2.1) {\small $\cE_{t-2}$};
\node[above right] at (-2.1,2.1) {\small $\cE_{t-1}$};
\node[above] at (-2.35,2.1) {\small $\cE_t$};

\end{scope}

\node[below] at (-1.6,.75) {\scriptsize (b)};

\end{scope}

\end{tikzpicture}

\end{center}

\caption{While $\cP$ is in Rook Mode, she maintains a single frame of reference until $\cE$ makes an escape move, a hiding move, or a blocking move. (a)  $\cE$ makes a hiding move at time $t$ behind hiding vertex $v_{\ell}$. The hiding pocket is shown in gray.
(b) $\cE$ makes a blocking move at time $t$ in an upper pocket. The blocking point $b$ obstructs the pursuer's movement. The blocking pocket is shown in gray. }
\label{fig:sweep-gambit}

\end{figure}

\begin{definition}
\label{def:hiding}
Suppose that $\cP$ and $\cE$ are in  rook position in a scallop polygon. An evader \style{hiding move} disrupts rook position by using a feature on the  boundary to obstruct the visibility of the pursuer.  The vertex $v_{\ell}$  that obstructs the pursuer's view is the \style{hiding vertex}.
The \style{hiding pocket} is the area obscured by the hiding vertex $v_{\ell}$, and bounded by vertical line $Y(v_{\ell})$ in the rook frame of reference.
\end{definition}

\begin{definition}
\label{def:blocking}
Suppose that $\cP$ and $\cE$ are in upper pocket position or in upper chute position.
An evader \style{blocking move} disrupts rook position by using a feature on the upper boundary to obstruct the movement of the pursuer.  The \style{blocking point} $b$ is the point on the upper feature that blocks the pursuer's move. The \style{blocking pocket} is the area bounded by  the upper boundary and the 
vertical line $Y(b)$ in the rook frame of reference
\end{definition}

A hiding move can use an upper feature (when $\cP$ is below $\cE$) or a lower feature (when $\cP$ is above $\cE$). Furthermore, a hiding move can occur in either pocket position or chute position. On the other hand, a blocking move can only occur in upper pocket position and in upper chute position. In these upper positions, the features radiate inwards, so $\cP$ is blocked before $\cE$ see Figure \ref{fig:sweep-gambit}(b). Meanwhile, the reverse holds for a lower position: $\cE$ will be blocked before $\cP$ because the lower features radiate outwards. 

Next, we compare a blocking move with an escape move. After a blocking move, the evader is also confined by the same boundary that obstructs the pursuer. In other words, $\cE$  finds himself in a pocket. Meanwhile,  after an escape move, $\cE$ can move further down the chute and away from the pursuer. In response to either a hiding move or a blocking move, $\cP$ enters Cautious Search Mode, see Algorithm \ref{alg:cautious-search}. This cautious search strategy reverts back to Rook Mode (using the previous rook frame) if $\cE$ ever moves to the right of $\cP$ with respect to that rook frame.

Finally, we note that after an escape move, the return to Search Mode must also be handled with care. To prevent recontamination,  the pursuer must keep track of two frames of reference; for an example, see Algorithm \ref{alg:cautious-search} below.
In the following subsections, we consider Scallop Search Mode, Scallop Rook Mode, and the delicate transitions between the two.

\subsection{Search  Strategy for Scallop Polygons}
\label{sec:scallop-search}

Algorithms \ref{alg:scallop-path} and  \ref{alg:scallop-search} adapt the monotone searching of Algorithms \ref{alg:monotone-path} and \ref{alg:monotone-search} to our new setting.  Our first search path starts at $v_1$ using frame of reference $(T_1,S_1) = (T(v_1), S(v_1))$.


\begin{algorithm}
\caption{ Create Scallop Search Path}
\label{alg:scallop-path}
\begin{algorithmic}[1]
\Require Starting point $p \in Q$ and starting frame of reference $(X,Y)$
\Require Vertices of $Q$ are $v_1, v_2, \ldots , v_n$ where $\theta(v_{i}) > \theta(v_{i+1})$ for $1 \leq i < n$
\Require $\spokeline{i}$ is the spoke line through vertex $v_i$ and center $C$, $1 \leq i \leq n$
\Require $\transline{i}$ is the transverse line through $v_i$ (perpendicular to $\spokeline{i}$),  $1 \leq i \leq n$
\State Set checkpoint $c$ and auxiliary vertex $a$ as per Definition
\ref{def:scallop-checkpoint}
\State Note: if $p=v_1$, then checkpoint $c \leftarrow v_1$ and $a \leftarrow  \emptyset$.

\While {$p \neq v_n$}
\State Move $p$ along horizontal axis $X$  until reaching either  $\partial Q$ or a spoke line
\If{$p$ is on lower boundary $\Pi_L$}
\While {$X_+(p) = \{ p \}$}
\State Note: moving right in frame $(X,Y)$ still requires  moving along $\Pi_L$
\State Traverse   along $\Pi_L$ until reaching next vertex $v_k \in \Pi_L$
\State Update checkpoint $c \leftarrow v_k$
\EndWhile
\ElsIf  {$p$ is on upper boundary $\Pi_U$}
\State Traverse along $\Pi_U$   until reaching the next spoke line $S_k$.
\Else
\State Note: $p$ in on spoke line  $S_k$, possibly at vertex $v_k$
\State Update frame $(X,Y)  \leftarrow (\transline{k}, \spokeline{k})$
\While {$p$ does not guard $c$ in frame $(X,Y)$ as per Definition \ref{def:scallop-guarded}}
\State Move down spoke line $S_k$
\EndWhile
\State Update checkpoint $c$ and auxiliary vertex $a$ as per Definition
\ref{def:scallop-checkpoint}
\EndIf
\EndWhile

\end{algorithmic}
\end{algorithm}

\begin{definition}
A \style{search path} in scallop polygon $Q$ is a path $\sp$ constructed by the Create Scallop Search Path algorithm.
\end{definition}

\begin{algorithm} 
\caption{Scallop Search Strategy}
\label{alg:scallop-search}
\begin{algorithmic}[1]
\Require Vertices of $Q$ are $v_1, v_2, \ldots , v_n$ where $\theta(v_{i}) > \theta(v_{i+1})$ for $1 \leq i < n$
\Require $\Pi$ is the search path (from Create Scallop Search Path)
\Require Initial frame of reference $(X,Y)$  is the current frame of reference of $\cP_{t-1}$ 
\Require $x(\cP_{t-1}) \leq x(\cE_{t-1})$, but $\cE_{t-1}$ might be invisible to $\cP_{t-1}$ 

\While { {\sc not} $\big(\cE_{t-1}$ is visible and  $0 \leq x(\cE_{t-1})-x(\cP_{t-1}) \leq  1/2 \big)$ }

\State Set checkpoint $c$ and auxiliary vertex $a$ as per Definition
\ref{def:scallop-checkpoint}

\While{$\cP_{t-1}$ is not on a spoke line}
\State Evader moves from $\cE_{t-1}$ to $\cE_t$
\State  $\cP_t \leftarrow$ one monotone search move (Algorithm \ref{alg:monotone-search}) with respect to $(X,Y)$, but stopping if she reaches the next spoke line
\State $t \leftarrow t+1$
\EndWhile

\State Evader moves from $\cE_{t-1}$ to $\cE_t$

\State $\spokeline{k} \leftarrow$  the spoke line that $\cP$ has encountered
\State $(X',Y') \leftarrow$  the frame of reference for $\spokeline{k}$
\If {$\cE_t$ is in the second quadrant of the $(X',Y')$ frame}
\State $(X,Y)  \leftarrow (X'',Y'')$ where $Y'' = \overline{\cP \cE}$ is the new vertical direction 
\State Exit to start a new Scallop Rook's Strategy (Algorithm \ref{alg:upperRook}) with respect to  new reference frame
\EndIf

\State $p \leftarrow$  the lowest point on $\Pi \cap \spokeline{k}$

\While {$\cP$ has not reached $p$}
\If {$\cE_t$ is in rook position with respect to frame $(X,Y)$ or frame $(X',Y')$}
\State Exit to start a new Scallop Rook's Strategy using the appropriate reference frame
\EndIf
\State $\cP_t \leftarrow$ step downwards towards $p$
\State $t \leftarrow t+1$
\If {$\cP_{t-1} \neq p$}
\State Evader moves from $\cE_{t-1}$ to $\cE_t$
\EndIf
\EndWhile

\State Note: $\cP_{t-1} = p$ and $\cP$ can now adopt her new reference frame
\State $(X,Y) \leftarrow (X', Y')$

\EndWhile

\end{algorithmic}
\end{algorithm}

~\begin{lemma} \label{lemma:sweep-search}
If  $\cP$ follows the scallop search strategy then $\cE$ cannot step into the pursuer territory without being caught. Furthermore, $\cP$ either achieves rook position or captures $\cE$ in finite time.
\end{lemma}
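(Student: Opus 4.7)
The plan is to lift Lemma \ref{lemma:search} from the monotone setting by viewing the scallop search as a sequence of monotone searches on the angular wedges between consecutive spoke lines, glued together by descents along the spoke lines themselves. At each time $t$ I will maintain three loop invariants: (i) $\cE_t \notin Q_{\cP}(t)$, (ii) the current checkpoint $c_t$ is guarded by the search frontier in the active reference frame, and (iii) $x(\cP_t) \leq x(\cE_t)$ with respect to that frame. Initially $\cP_0 = v_1$ and these hold trivially.

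Within a single angular wedge bounded by $\spokeline{k}$ and $\spokeline{k+1}$, expressed in the rotated coordinates $(\transline{k}, \spokeline{k})$, Algorithm \ref{alg:scallop-search} executes exactly the monotone Algorithm \ref{alg:monotone-search}. So the case analysis of Lemma \ref{lemma:search} applies essentially verbatim: any evader move either (a) achieves rook position (which triggers an exit), (b) is absorbed by a compensating horizontal move, or (c) forces capture via the scallop analog of Lemma \ref{lemma:hide-left}, which I would prove by the same ``minimum feature size plus upper-chain obstruction'' argument used in the monotone case. Thus the invariants persist until $\cP$ reaches the next spoke line $\spokeline{k+1}$.

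The delicate step is the descent along $\spokeline{k+1}$ to the point $p$ at which the new transverse line $\transline{k+1}$ through $p$ guards the checkpoint $c$ in the new frame. Two issues arise. First, the evader could try to exploit the frame change to enter the old or new pursuer territory; I will show that $\spokeline{k+1}$ itself serves as a barrier in the style of the guarded frontier in Lemma \ref{lemma:hide-left}, so any crossing attempt while $\cP$ descends forces immediate capture or visibility in one of the two active frames. Second, auxiliary upper-chain points that were temporarily guarded in the old frame may be released in the new frame; the checkpoint $c \in \Pi_L$ is preserved by construction of $p$, and I need to argue that every newly released region lies geometrically above $\transline{k+1}(p)$ and therefore belongs to the new evader territory rather than to the old pursuer territory. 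During the descent the algorithm also tests, at every step, whether $\cE_t$ has fallen into rook position with respect to either frame, in which case $\cP$ exits to scallop rook mode.

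The main obstacle will be the second issue above: verifying rigorously that the bookkeeping of two simultaneous frames during the spoke transition never permits the evader to sneak into any previously cleared region. I expect to close this via a careful geometric case analysis based on which quadrant (in each frame) the evader occupies just before and after the transition, together with the convexity of $L \cap Q$ guaranteed by the sweepability hypothesis. Termination is then immediate: there are only $n$ spoke lines, each crossed at most once in search mode, and within each wedge the monotone bound of Lemma \ref{lemma:search} gives a finite number of steps before either rook position is attained, the evader is caught, or the next spoke line is reached.
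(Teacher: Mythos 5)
Your wedge-by-wedge decomposition, the reduction to Lemma \ref{lemma:search} between consecutive spoke lines, the double bookkeeping of frames during the descent, and the termination count over the $n$ spoke lines all match the paper's proof. The place where you defer to a ``careful geometric case analysis'' is, however, exactly where the paper's one nontrivial idea lives, and your framing of the danger is slightly off target. You describe the threat as the evader trying to cross into cleared territory \emph{while} $\cP$ descends the spoke line. The harder case arises \emph{before} the descent even begins: when $\cP$ arrives at $Z_1$ on $\spokeline{k+1}$, the evader may already sit in the angular gap between the two frames --- to the right of $\spokevertline{k}^{+}$ (so not in rook position in the old frame) but to the left of $\spokevertline{k+1}^{-}$ (so ``behind'' $\cP$ in the new frame, violating your invariant (iii) the instant the new frame is adopted). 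No crossing attempt is involved, and the resolution is not capture. The paper resolves it by observing that this gap region contains no vertices of $Q$ (there are none strictly between $v_k$ and $v_{k+1}$), so such an evader is necessarily visible, and $\cP$ may adopt the line $\overline{\cP\cE}$ itself as an \emph{intermediate} vertical axis and enter rook mode immediately. That this still guards the checkpoint is not automatic: it relies on the search path having been constructed so that \emph{every} direction between $\spokeline{k}$ and $\spokeline{k+1}$ through $Z_1$ protects $c$. This is also why Algorithm \ref{alg:scallop-search} explicitly tests whether $\cE$ lies in the third quadrant of the new frame and, if so, sets $Y'' = \overline{\cP\cE}$. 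Without this step your invariant (iii) cannot be restored at the frame change, and the induction breaks.

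Your second worry, about upper-chain points being released at the transition, is real but is handled essentially by definition rather than by geometry: in a scallop polygon only lower-chain intersections are checkpoints, the upper-chain ones are explicitly transient auxiliary points excluded from the guarded-territory invariant, and the descent endpoint $Z_2$ is chosen precisely so that the new horizontal line through it protects the lower-chain checkpoint (or the top vertex of an obstructing lower feature). So that part of your plan closes easily; the missing ingredient is the intermediate-axis maneuver for the angular gap.
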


\begin{figure}

\begin{center}

\begin{tikzpicture}[scale=1.1]

\begin{scope}

\draw[gray] (-1.25,1.5) -- (1,1.5) -- (1,5.2) -- (-1.25,5.2) --cycle;

\draw[dashed] (0:0) -- (150:4);

\draw[dashed] (0:0) -- (135:5.25);

\draw[dashed] (0:0) -- (120:4);

\draw[dashed] (0:0) -- (90:5.25);

\draw[dashed] (0:0) -- (45:3.5);

\draw[dashed] (0:0) -- (30:3);

\draw (150: 1) -- (150:3) -- (135:5) -- (120:2) -- (90:5) -- (45:3) -- (30:2) -- cycle;

\begin{scope}[shift={(150:3)}, rotate=60]
	\coordinate (a1) at (0,0);
	\coordinate (a2) at (.818,0);
\end{scope}

\begin{scope}[shift={(150:3)}, rotate=45]
	\coordinate (a3) at (.759,0);
	\coordinate (a4) at (1.125, 0);
\end{scope}

\coordinate (a5) at(120:2);

\begin{scope}[shift={(120:2)}, rotate=30]
	\coordinate (a5) at (0,0);
	\coordinate (a6) at (1.163, 0);
\end{scope}

\begin{scope}[shift={(120:2)}]
	\coordinate (a7) at (1,0);
	\coordinate (a8) at (2.725,0);
\end{scope}

\coordinate (a9) at (45:1.15);
\coordinate (a10) at (30:2);

\draw[thick] (a1) -- (a2) -- (a3) -- (a4) -- (a5) -- (a6) -- (a7) -- (a8) -- (a9) -- (a10);

\node[right] at (120:1.9) {$v_k$};
\node[right] at (90:5) {$v_{k+1}$};

\node[right] at (90:3) {$S_{k+1}$};

\draw[fill] (a6) circle (1.5pt);
\draw[fill] (a7) circle (1.5pt);

\node[above left] at (a6) {$Z_1$};
\node[above right] at (a7) {$Z_2$};

\end{scope}

\node at (0,-.5) {\scriptsize (a)};

\end{tikzpicture}
\qquad \qquad
\begin{tikzpicture}[scale=1.5]

\draw[gray] (-1.25,1.5) -- (1,1.5) -- (1,5.2) -- (-1.25,5.2) --cycle;

\begin{scope}

\clip(-1.25,1.5) rectangle (1,5.2);

\draw[dashed] (0:0) -- (150:4);

\draw[dashed] (0:0) -- (135:5.25);

\draw[dashed] (0:0) -- (120:4);

\draw[dashed] (0:0) -- (90:5.25);

\draw[dashed] (0:0) -- (45:4);

\draw[dashed] (0:0) -- (30:4);

\draw (150: 1) -- (150:3) -- (135:5) -- (120:2) -- (90:5) -- (45:3) -- (30:2) -- cycle;

\begin{scope}[shift={(150:3)}, rotate=60]
	\coordinate (a1) at (0,0);
	\coordinate (a2) at (.818,0);
\end{scope}

\begin{scope}[shift={(150:3)}, rotate=45]
	\coordinate (a3) at (.759,0);
	\coordinate (a4) at (1.125, 0);
\end{scope}

\coordinate (a5) at(120:2);

\begin{scope}[shift={(120:2)}, rotate=30]
	\coordinate (a5) at (0,0);
	\coordinate (a6) at (1.163, 0);
\end{scope}

\begin{scope}[shift={(120:2)}]
	\coordinate (a7) at (1,0);
	\coordinate (a8) at (2.725,0);
\end{scope}

\coordinate (a9) at (45:1.15);
\coordinate (a10) at (30:2);

\draw[thick] (a1) -- (a2) -- (a3) -- (a4) -- (a5) -- (a6) -- (a7) -- (a8) -- (a9) -- (a10);

\node[right] at (120:1.9) {$v_k$};

\draw[fill] (a6) circle (1.5pt);
\draw[fill] (a7) circle (1.5pt);

\begin{scope}[shift=(a6), rotate=30]

\draw[dashed] (.15,-2.2) -- (.15,2.75);
\draw[dashed] (0,-2.2) -- (0,2.75);
\draw[dashed] (-.15,-2.2) -- (-.15,2.75);

\end{scope}

\begin{scope}[shift=(a6)]

\draw[dashed] (.15,-2) -- (.15,3);
\draw[dashed] (-.15,-2) -- (-.15,3);

\end{scope}

\end{scope}

\node at (0,5.35) {\small $Y_{k+1}$};
\node at (.5,4.95) {\small $Y_{k+1}''$};
\node at (-.4,4.95) {\small $Y_{k+1}'$};

\node at (-1.5,4.9)  {\small $Y_{k}''$};
\node at (-1.5,4.5) {\small $Y_{k}$};
\node at (-1.5,4.1) {\small $Y_{k}'$};


\fill[gray!20] (-.475,3.45) -- (-.15,4.5) -- (-.15,2.88);

\fill[gray!20] (1,1.74) -- (.52,1.74) -- (.17,2.35) -- (.17,4.76) -- (1,3.63);

\node at (-.3, 3.6) {\small $\mathcal{A}$};
\node at (.6, 3) {\small $\mathcal{B}$};

\node at (0,1) {\scriptsize (b)};

\end{tikzpicture}

\end{center}

\caption{Changing frame during the search. (a) The pursuer $\cP$ starts at $Z_1$ and wants  to move to $Z_2$.
(b) $\cP$ starts with vertical axis $Y_k$ and hopes to switch to vertical axis $Y_{k+1}$.  If $\cE \in \mathcal{A}$, then $\cP$ can pick a vertical axis between $\spokevertline{k}$ and $\spokevertline{k+1}$ and immediately establish rook position. Otherwise,  $\cE \in \mathcal{B}$, and $\cP$ can travel from $z_1$ to $z_2$ and safely transition from vertical axis $\spokevertline{k}$ to vertical axis $\spokevertline{k+1}$ (or establish rook position along the way if $\cE$ makes an unwise move). }

\label{fig:scallop-switch}

\end{figure}

\begin{proof}
As the pursuer travels between spoke lines $\spokeline{k}$ and $\spokeline{k+1}$, the proof  is unchanged from that of Lemma \ref{lemma:search}. Indeed, the polygon region between the spoke lines $\spokeline{k}$ and $\spokeline{k+1}$ is monotone with respect to the horizontal axis $T_k$ (perpendicular to spoke $\spokeline{k}$). Furthermore, the guarded vertex remains guarded while $\cP$ moves through this region. We therefore need only prove that $\cE$ cannot step into the pursuer territory during the change of frame that occurs at spoke lines.

Suppose that $\cP$ reaches spoke line $\spokeline{k+1}$  at point $Z_1$: an example is shown in Figure \ref{fig:scallop-switch}.  With respect to the polar coordinates centered at $C$, the search path takes $\cP$ from her current position $Z_1 = (\rho_1: \theta)$ to $Z_2 = (\rho_2: \theta)$ where $\rho_1 > \rho_2$. The point $Z_2$ is chosen so that $v_k$ is guarded by the  horizontal line through $Z_2$ with respect to vertical $\spokeline{k+1}$.  The simplest case of protecting $v_k$ is when this horizontal line intersects $v_k$. However a feature on the lower boundary might obstruct the line through $v_k$; in this case, $Z_2$ is chosen so that the new horizontal line intersects the vertex on the top of this obstructing feature.

Before proceeding from $Z_1$ to $Z_2$, the pursuer $\cP$ first checks whether a simple change of frame will establish rook position. Let $\spokevertline{k+1} = \spokeline{k+1}$ and let $\spokevertline{k+1}'$ and $\spokevertline{k+1}''$ denote the lines located $\pm 1/2$ from $\spokeline{k+1}$. Let $\spokevertline{k}$ denote the line through $Z_1$ parallel to spoke line $\spokeline{k}$, and define $\spokevertline{k}',  \spokevertline{k}''$ similarly. Since $\cP$ has not established rook position, $\cE$ must be to the right of $\spokevertline{k}''$. However, $\cE$ could be to the left of $\spokevertline{k+1}'$ (in region $\mathcal{A}$ of Figure \ref{fig:scallop-switch}(b)).
In this case, $\cE$ must be visible to $\cP$ since there are no vertices between $x_k$ and $x_{k+1}$. Therefore $\cP$ can change her vertical axis to be the line through $\cP$ and $\cE$. This guards the checkpoint vertex $c$ (since the new horizontal intersects the upper boundary above $c$), and establishes rook position. 

Otherwise, $\cE$ is located to the right of both $\spokevertline{k}''$ and $\spokevertline{k+1}''$ (in region $\cB$ of Figure \ref{fig:scallop-switch}(b)). 
The pursuer moves down the segment $\overline{Z_1 Z_2}$, keeping track of \emph{two} coordinate systems: she uses $\spokevertline{k+1}$ above her position and $\spokevertline{k}$ below her position. If the evader steps within $1/2$ distance to either of these lines, then $\cP$ responds by establishing rook position with respect to the appropriate frame. Once again, this guards the checkpoint vertex $c$. Finally, if $\cP$ reaches $Z_2$ without establishing rook position, she commits to the $Y_{k+1}$-frame (which is also the $\spokeline{k+1}$ frame). If her current search frontier intersects any lower boundary vertices to the left, then $\cP$ updates her checkpoint or auxiliary point, if necessary. (In Figure \ref{fig:scallop-switch}, we continue to guard auxiliary point $v_k$.) Having adopted her new frame of reference, she continues her search.
\end{proof}

%

\subsection{Rook Strategy for Scallop Polygons}
\label{sec:scallop-rook}


The  rook positions in a scallop polygon are the same as those in Figure \ref{fig:rook-position}. 
Algorithm \ref{alg:upperRook} lists the rook's strategy for upper pockets and upper chutes. Lower pockets and lower chutes are handled similarly: the rook strategy for those cases  is listed in Algorithm \ref{alg:lowerRook} below.

Rook's strategy in a scallop polygon is more challenging than the monotone case because the evader can now use hiding moves and blocking moves, see Definitions
\ref{def:hiding} and \ref{def:blocking}.
These moves interrupt the pursuer's rook position, so we need a recovery phase to handle these setbacks. This recovery uses the modified search algorithm listed in Algorithm \ref{alg:cautious-search} which  remains aware of the rook frame of reference, even as the search frame rotates. After recovery in a pocket, $\cP$ will return to using rook moves, perhaps with an updated frame of reference. 
Recovery in an upper (lower) chute must deal with the case where $\cE$ makes an escape move  using the lower (upper) boundary to block the pursuer's rightward movement. In response to such an escape move, the pursuer  will either re-establish a rook position, or initiate a new search phase. Finally, we note once again that the mode transitions are  also more difficult  due to the shifting frame of reference.

\begin{algorithm}
\caption{ Upper Rook's Strategy (for Upper Pocket and Upper Chute)}
\label{alg:upperRook}
\begin{algorithmic}[1]
\Require Initial frame of reference $(X,Y)$  is the current frame of reference of $\cP$ 
\Require $|x(\cP_{t-1})  -x(\cE_{t-1}) | \leq  1/2$ and $y(\cP_{t-1}) < y(\cE_{t-1})$ and $\cP_{t-1}$ sees $\cE_{t-1}$
\State Note: the rook frontier $\frontier(\cP_{t-1})$ lies below $\cE_{t-1}$
\State Note: the leftmost endpoint of $Q_{\cE_{t-1}}$ is in $\Pi_U \cap \frontier(\cP_{t-1})$
\While {$\cE$ is not captured}
\State Evader moves from $\cE_{t-1}$ to $\cE_t$
\If{$\cE$ made a blocking move}
\State Let $Z \in \Pi_U \cap \frontier(\cP_{t-1})$ be the blocking point
\State $\cP_t \leftarrow Z$, which is reachable in one step
\State Enter Cautious Scallop Search Strategy 
\ElsIf{$\cE$ made a hiding move}
\State Let $v_{\ell} \in \Pi_U$ be the hiding vertex. 
\State Note: assume  $x(\cE_t) < x(\cP_{t-1})$; handling $x(\cE_t) > x(\cP_{t-1})$ is similar
\While {$\cE$ is hidden by $v_{\ell}$,}
\State Note:
 $x(\cE_t) < x(v_{\ell})$ and ($x(v_{\ell}) < x(\cP_{t-1})$ or $y(\cP_{t-1}) < y(v_{\ell})$)
\If{$x(v_{\ell}) < x(\cP_{t-1})$}
\State $\cP_t \leftarrow $ move left towards $x(v_{\ell})$ and up with  remaining  budget
\State Note: it takes at most two rounds to reach $x(\cP) = x(v_{\ell})$
\Else
\State $\cP_t \leftarrow$ move upwards towards $v_{\ell}$
\EndIf
\State $t \leftarrow t+1$
\State Evader moves from $\cE_{t-1}$ to $\cE_t$
\EndWhile
\If {$\cE_t$ is still in the hiding pocket and $\cP_{t-1} = v_{\ell}$}
\State Enter Cautious Scallop Search Strategy
\EndIf
\State Note: $\cE_t$ stepped to the right of $\cP_{t-1}$ and is no longer hidden
\State $\cP_t \leftarrow$ one monotone rook move (Algorithm \ref{alg:monotone-rook})
\ElsIf  {$\cE$ made an escape move}
\State Let $Z \in \Pi_L$ be the blocking point
\State $\cP_t \leftarrow Z$, which is reachable in one step
\If {$\theta(\cE_t) \geq \theta(\cP_t)$}
\State Update vertical direction to be $\overline{\cP_t \cE_t}$
\State Start new Upper Rook's Strategy with this updated frame of reference
\Else
\State Exit  (in order to start a new Scallop Search Strategy from $\cP_t=Z$)
\EndIf
\Else
\State $\cP_t \leftarrow $ one monotone rook move (using Monotone Rook's Strategy)
\EndIf
\State $t \leftarrow t+1$
\EndWhile

\end{algorithmic}
\end{algorithm}

\begin{algorithm}
\caption{ Cautious Scallop Search Strategy}
\label{alg:cautious-search}
\begin{algorithmic}[1]
\Require $\cP$ starts at a point $Z$ on the boundary
\Require Previous upper/lower rook frame of reference is $(X_0, Y_0)$ 
\Require Previous checkpoint is $c$ 
\Require $\cE$ is to the left of $\cP$ in frame $(X_0, Y_0)$;  a right move is handled analogously
\State Create a new search path from $Z$ that protects $c$ (using Create Scallop Search Path)
\While {$\cE$ remains to the left of $\cP$ with respect to previous rook frame $(X_0, Y_0)$}
\State $\cP$ makes one Scallop Search Strategy move
\If {$\cP$ is in rook position with respect to the search frame}
\State Enter a new Upper/Lower Rook's Strategy
\EndIf
\EndWhile
\State Note: $\cE$ has stepped to the right of $\cP$ with respect to $(X_0,Y_0)$
\State Exit (back to the previous Upper/Lower Rook's Strategy)
\end{algorithmic}
\end{algorithm}


\begin{figure}[ht]

\begin{center}

\begin{tikzpicture}[scale=1.5]

\fill[color=gray!20] (140:2.85) -- (133:4.25)  -- (-2.18,3.1) -- cycle; 

\draw  (165:3.25) -- (160:3.15) -- (145: 4) -- (140:2.85) -- (133:4.25)  -- (120:3.6)  -- (110:2.75) -- (90:3) -- (80:1) -- (79:3);

\draw[thick] (-3,1.25) -- (.15,1.25);

\draw[fill] (-1.95,1.25) circle (1pt);
\draw[fill] (-2.18,1.55) circle (1pt);

\draw[fill] (-2.05,2.3) circle (1pt);
\draw[fill] (-2.4,2.4) circle (1pt);

\draw[-latex] (-2.05,2.3) -- (-2.37, 2.39);

\draw[-latex] (-1.95,1.25) -- (-2.17,1.53);

\draw[dashed] (90:.75) -- (90:3.25);

\begin{scope}[shift={(140:2.85)}]
	\draw[dashed] (0,-1) -- (0,1.5);
	\draw[fill] (0,0) circle (1 pt);
	\node[right] at (0,0) {\small $v_{\ell}$};
\end{scope}

\node[above right] at (-2,1.25) {\small $\cP_{t-1}$};
\node[left] at (-2.18,1.55) {\small $\cP_t$};
\node[above] at (-1.8,2.3) {\small $\cE_{t-1}$};
\node[above] at (-2.4,2.375) {\small $\cE_t$};

\draw[fill] (90:3) circle (1pt);
\node[above right] at (90:3) {\small $v_k$};
\node[left] at (90:2) {\small $\spokeline{k}$};

\node[below] at (-1.2,.75) {\scriptsize (a)};

\begin{scope}[shift={(4.5,0)}]

\fill[color=gray!20] (129.5:2.83) -- (129.5:4.02) -- (120:3.6) -- cycle;

\draw  (165:3.25) -- (160:3.15) -- (145: 4) -- (140:2.85) -- (133:4.25)  -- (120:3.6)  -- (105:2.5) -- (90:3) -- (80:1) -- (79:3);

\draw[thick] (-3,1.25) -- (.15,1.25);

\draw[fill] (-2,2.8) circle (1pt);
\draw[fill] (-1.8,2.19) circle (1pt);

\node[right] at (-1.8,2.19) {\small $\cP$};
\node[above] at (-2,2.8) {\small $\cE$};

\draw[dashed] (-1.8,.75) -- (-1.8,3.35);

\draw[dashed] (140:1.2) -- (140:4.5);
\draw[dashed] (133:1.2) -- (133:4.5);
\draw[dashed] (129.5:1.2) -- (129.5:4.5);
\draw[dashed] (120:1.2) -- (120:3.85);

\draw[dashed] (90:.75) -- (90:3.25);

\begin{scope}[shift={(140:2.85)}]
	\draw[fill] (0,0) circle (1 pt);
	\node[below left] at (0,0) {\small $v_{\ell}$};
\end{scope}

\begin{scope}[shift={(140:2.85)}, rotate=140]
	\draw[very thick] (0,0) -- (0,-.35);
\end{scope}

\begin{scope}[shift={(133:2.83)}, rotate=133]
	\draw[very thick] (.05,0) -- (0,0) -- (0,-.66);
	\draw[very thick, -latex]  (0,0) -- (0,-.6);
\end{scope}

\begin{scope}[shift={(120:2.68)}, rotate=120]
	\draw[very thick] (.24,0) -- (0,0) -- (0,-.5);
\end{scope}

\draw[fill] (90:3) circle (1pt);
\node[above right] at (90:3) {\small $v_k$};
\node[left] at (90:2) {\small $\spokeline{k}$};

\node[below] at (-1.2,.75) {\scriptsize (b)};

\end{scope}

\end{tikzpicture}

\end{center}

\caption{(a) $\cE$ makes a hiding move into the hiding pocket (shaded) above $v_{\ell}$. $\cP$ responds by moving below $v_{\ell}$ and then  upwards  until reaching $v_{\ell}$.  If $\cE$ moves to the right of $\cP$ then she re-enters Rook Mode. (b) Once $\cP = v_{\ell}$, $\cP$ starts a new cautious search phase. The evader is confined to the gray region:  if $\cE$ moves to the right of $\cP$ in the original rook frame, then $\cP$ reverts  to that Rook Mode.}
\label{fig:hidepocket}

\end{figure}

We will consider upper pocket position first.

\begin{lemma} 
\label{lemma:upperpocket}
If $\cP$ and $\cE$ are in upper  pocket position, then upper rook's strategy captures the evader in $O(\Area(Q_{\cE})+ n(Q_{\cE}))$ turns, where $n(Q_{\cE})$ is the number of vertices in the pocket $Q_{\cE}$.
\end{lemma}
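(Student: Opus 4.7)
The plan is to adapt the endgame argument of Lemma \ref{lemma:endgame} to the upper pocket, with two new evader gambits to neutralize: hiding moves behind a vertex on $\Pi_U$ and blocking moves along $\Pi_U$. Algorithm \ref{alg:upperRook} reuses the monotone rook response (Algorithm \ref{alg:monotone-rook}) as its default, so whenever $\cE$ neither hides nor blocks, the analysis of Lemmas \ref{lemma:guard}, \ref{lemma:re-establish}, and \ref{lemma:7/22} carries over within the pursuer's fixed rook frame of reference. In particular, $\cE$ cannot cross $\frontier(\cP)$ without capture, and every non-hiding rook response advances the rook frontier by at least $7/22$ amortized over $O(\diam(Q_\cE))$ turns. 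The frame of reference does not need to rotate during pure rook mode inside a pocket because the rook frontier lies strictly below $\Pi_U$ and the frontier endpoints are the leftmost and rightmost points of the evader territory at that height.

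Next I handle the two new gambits. For a hiding move behind an upper vertex $v_\ell$: by the algorithm, $\cP$ reaches $v_\ell$ in at most two turns (one horizontal correction to $x = x(v_\ell)$, then one vertical step up). Once $\cP = v_\ell$, either $\cE$ has already stepped out of the hiding pocket to the right of $\cP$, in which case one monotone rook move restores rook position, or $\cE$ remains trapped in a strictly smaller sub-pocket whose roof is the two edges of $\Pi_U$ incident to $v_\ell$; in this case, $\cP$ enters Cautious Scallop Search (Algorithm \ref{alg:cautious-search}) on that sub-pocket and by Lemma \ref{lemma:sweep-search} either captures $\cE$ or re-establishes a new rook position whose frontier is strictly above the previous one. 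A symmetric argument handles a blocking move: $\cP$ reaches the blocking point $Z \in \Pi_U$ in a single step, after which the feature containing $Z$ is cleared and the evader territory is confined to a strictly smaller sub-pocket.

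The charging argument is the heart of the time bound. Each hiding or blocking event consumes a distinct vertex of $\Pi_U \cap Q_\cE$, since after $\cP$ passes such a vertex the corresponding upper feature lies entirely within the pursuer territory and cannot be reused. This caps the number of hiding/blocking events at $n(Q_\cE)$. Between consecutive events, the pursuer is in rook mode on a subregion of $Q_\cE$, where the total path length is bounded by the area of that subregion by the same zig-zag argument as in the proof of Lemma \ref{lemma:endgame}. Summing over the disjoint sub-pockets produced by successive hiding/blocking events, the total rook-mode trajectory length is $O(\Area(Q_\cE))$. Each Cautious Search recovery contributes $O(1)$ turns for the move to $v_\ell$ or $Z$, plus defensive stops at vertex $x$-coordinates that together add $O(n(Q_\cE))$ over the course of the pursuit. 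Combining these gives the claimed $O(\Area(Q_\cE) + n(Q_\cE))$ bound.

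The main obstacle I anticipate is the bookkeeping for Cautious Search: I must verify that when $\cP$ is in the middle of recovery after a hiding or blocking move and $\cE$ slips back past $\cP$ in the \emph{old} rook frame $(X_0, Y_0)$, the pursuer can seamlessly revert to that old rook phase without having lost any cleared territory. This relies on the invariant that the checkpoint established by the search path from $Z$ (or $v_\ell$) dominates the old checkpoint in the $(X_0, Y_0)$ frame, which in turn follows because $v_\ell$ (respectively $Z$) lies on $\Pi_U$ above the old rook frontier, so any search frontier anchored there protects at least what the old frontier protected. Verifying this invariant carefully across arbitrary sequences of nested hiding and blocking events is the delicate part of the argument.
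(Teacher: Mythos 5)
Your overall architecture matches the paper's: default monotone rook moves inside the pocket, special handling of hiding and blocking moves via a cautious search that tracks two frames of reference, and a charge of each gambit against a vertex of the pocket. However, there are two connected gaps. First, you assert that $\cP$ reaches the hiding vertex $v_\ell$ in at most two turns. That is not what Algorithm \ref{alg:upperRook} does and is not true in general: $\cP$ needs one or two turns merely to align $x(\cP)=x(v_\ell)$, after which she must climb vertically until she actually reaches $v_\ell$, and that ascent can take a number of turns proportional to the height of the pocket. Second, and more seriously, your charging argument assumes each hiding event consumes a distinct vertex of $\Pi_U$. But a vertex is only ``consumed'' once $\cP$ has actually reached it; during the (possibly long) ascent the evader can step back out to the right of $v_\ell$, forcing $\cP$ to break off, and then hide behind the very same vertex again. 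Nothing in your write-up rules out this loop repeating, so the cap of $n(Q_{\cE})$ hiding events is unjustified as stated.

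The paper closes exactly this hole: when $\cE$ emerges to the right of $v_\ell$ during the ascent, $\cP$ re-enters rook mode with a \emph{leftward} offset, which forces at least $\sqrt{3}/2$ of vertical progress on the rook frontier. Each hide-and-reappear cycle behind $v_\ell$ therefore costs the evader a fixed amount of vertical room, so the number of such cycles is bounded, and all of the associated pursuer motion is absorbed into the $O(\Area(Q_{\cE}))$ term via the observation that the trajectory never revisits a point. You need to add this (or an equivalent potential argument) before the $O(\Area(Q_{\cE})+n(Q_{\cE}))$ bound follows. Your closing ``delicate invariant'' about reverting to the old rook frame without losing territory is indeed the right concern, and the paper's resolution is precisely this guaranteed vertical progress together with the fact that reversion is triggered only when $\cE$ crosses to the right of $\cP$ in the old frame, where the old frontier is still guarded.
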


\begin{proof}
As in the proof of Lemma \ref{lemma:endgame}, we show that $\cP$ consistently shrinks the size of the pocket. Furthermore, the pursuer trajectory never visits the same point twice, capturing the evader in $O(\Area(Q_{\cE})+ n(Q_{\cE}))$ turns.

Let $\spokeline{k}= S(v_k)$ be the spoke line perpendicular to the rook frontier.  Suppose that $\cP$ and $\cE$ are in upper pocket position and that $\cP$ is executing the  strategy of Algorithm \ref{alg:upperRook}. If $\cE$ never makes a hiding move or a blocking move, then $\cP$ captures $\cE$ in $O(\Area(Q_{\cE}))$ turns by the same argument used for  Lemma \ref{lemma:endgame}.
So let's suppose that $\cE$ hides behind a feature in the upper pocket. 
We consider the case when $x(\cP_{t-1}) - 1/2 \leq x(\cE_{t-1}) \leq x(\cP_{t-1})$ just before this hiding move; the case $x(\cP_{t-1}) \leq x(\cE_{t-1}) \leq x(\cP_{t-1}) + 1/2$ is argued similarly, swapping the roles of left and right.

Let $v_{\ell}$ be the hiding vertex that obscures the evader. First, we consider the case  $\ell \leq k$, see Figure \ref{fig:hidepocket}(a). The evader cannot have stepped to the right of $\cP$  by an argument similar to the proof of Lemma \ref{lemma:hide-left} (with left and right reversed). 
The pursuer's high-level strategy is to reach the hiding vertex and then traverse a new search path.  However, if $\cE$ ever moves to the right of $\cP$ in the original rook frame, then she  reverts to her previous rook's strategy.
First,  $\cP$ tries to achieve $x(\cP) = x(v_{\ell})$, which takes one or two rounds. In the turn that she achieves $x(\cP) = x(v_{\ell})$, she uses any additional movement budget to move upwards towards the hiding vertex to regain visibility into the hiding pocket, see Figure \ref{fig:hidepocket}(a). 

After that, the pursuer moves upwards until reaching the hiding vertex $v_{\ell}$. In the meantime, if $\cE$ moves to the right of the hiding vertex $v_{\ell}$, then  $x(\cP) \leq x(\cE) \leq x(\cP)+1$ and $\cE$ must be visible to $\cP$ due to the scallop nature of the polygon. The pursuer responds by  returning Rook Mode using vertical $\spokeline{k}$.  In particular, she uses a leftward offset, which means that she will make at least $\sqrt{3}/{2}$ vertical process. This vertical progress is crucial, since it prevents the evader from repeatedly hiding and reappearing  above $v_{\ell}$ ad infinitum. Indeed, once $\cP$ reaches $v_{\ell}$, the evader can no longer hide behind this vertex.

Suppose that $\cE$ remains in the hiding pocket until $\cP$ reaches  $v_{\ell}$. The pursuer switches to the cautious scallop search strategy for Algorithm \ref{alg:cautious-search}, see  Figure \ref{fig:hidepocket}(b). She   draws a new search path starting at $v_{\ell}$, and searches along this path while  keeping track of two frames of reference: her old rook frame (with vertical $\spokeline{k}$) and the current search frame.  If $\cE$ ever steps to the right of $\cP$ with respect to the old rook frame, then $\cP$ reverts back to the original Rook Mode. Note that in this case, $\cP$ has made vertical progress. Meanwhile, if the evader never steps to the right of $\cP$ in the old rook frame, then Search Mode will terminate with rook position in an updated frame of reference. Furthermore, $\cE$ will be trapped in an even smaller pocket than before. Finally, note that $\cP$ must attain rook position before reaching spoke line $\spokeline{k}$: if $\cE$ is to the right of $\spokeline{k}$ then $\cE$ is to the right of $\cP$ in the original rook frame, so $\cP$ will have already reverted to the previous Rook Mode.

\begin{figure}[ht]

\begin{center}

\begin{tabular}{cc}

\begin{tikzpicture}[scale=1.5, yscale=1,xscale=-1]

\fill[color=gray!20] (135:3.1) -- (133:4.25)  -- (-2.18,3.1) -- cycle; 

\draw   (160:3.15) -- (155:3.25) -- (135:3.1) -- (133:4.25)  -- (120:3.6)  -- (105:2.75) -- (80:1) ;

\draw[thick] (-2.95,1.25) -- (.05,1.25);

\draw[fill] (-2.05,1.25) circle (1pt);
\draw[fill] (-2.18,1.6) circle (1pt);
\draw[-latex] (-2.05,1.25) -- (-2.17,1.58);

\draw[fill] (-1.95,2.6) circle (1pt);
\draw[fill] (-2.35,2.6) circle (1pt);

\draw[-latex] (-1.95,2.6) -- (-2.32, 2.6);

\draw[dashed] (-.71,.95) -- (-.71,3);
\draw[fill] (-.71,2.65) circle (.75pt);
\node[above left] at (-.71,2.65) {\small $v_k$};
\node[right] at (-.71,2) {\small $\spokeline{k}$};

\begin{scope}[shift={(135:3.1)}]
	\draw[dashed] (0,-1.25) -- (0,1.15);
	\draw[fill] (0,0) circle (.75 pt);
	\node[right] at (0,0) {\small $v_{\ell}$};
\end{scope}

\node[below] at (-1.9,1.25) {\small $\cP_{t-1}$};
\node[right] at (-2.18,1.55) {\small $\cP_t$};
\node[above] at (-1.9,2.6) {\small $\cE_{t-1}$};
\node[above] at (-2.35,2.6) {\small $\cE_t$};

\node[below] at (-1.2,.75) {\scriptsize (a)};

\end{tikzpicture}

&

\begin{tikzpicture}[scale=3, yscale=1,xscale=-1]

\fill[color=gray!20] (-2.2,1.75) -- (-2.65,2)  -- (-2.2,2.04) -- cycle;

\begin{scope}[shift={(-2.4,1.25)}]
\draw[gray] (0,0) -- (64:.9);
\draw[gray] (0,0) -- (68:.87);
\end{scope}

\draw (-3,1.2) -- (-2.2,1.75) -- (-2.65,2) --   (-1.5, 2.1) -- (-1.08, 1.2);

\draw[thick] (-2.925,1.25) -- (-1.1,1.25);

\draw[fill] (-2.4,1.25) circle (.5pt);
\draw[fill] (-2.2,1.5) circle (.5pt);


\draw[fill] (-2.225,1.6) circle (.5pt);
\draw[fill] (-2.18,1.9) circle (.5pt);



\draw[dashed] (-1.5, 2.25) -- (-1.5,1);
\draw[fill] (-1.5, 2.1) circle (.5pt);
\node[above left] at (-1.5, 2.1) {\small $v_k$};
\node[right] at (-1.5, 1.7) {\small $\spokeline{k}$};

\begin{scope}[shift={(-2.2,1.75)}]
	\draw[dashed] (0,-.7) -- (0,.5);
	\draw[fill] (0,0) circle (.5 pt);
	\node[right] at (0,0) {\small $v_{\ell}$};
\end{scope}

\node[below right] at (-2.25,1.25) {\small $\cP_{t-1}$};
\node[left] at (-2.25,1.45) {\small $\cP_t$};
\node[left] at (-2.225,1.65) {\small $\cE_{t-1}$};
\node[right] at (-2.18,1.95) {\small $\cE_t$};

\node[above] at (-2.3,1.23) {\small $\phi$};

\node[below] at (-2,1) {\scriptsize (b)};

\node[below] at (-.8,1) {\scriptsize \phantom{(b)}};

\end{tikzpicture}

\end{tabular}

\caption{Hiding moves behind a vertex $v_{\ell}$ where $\ell > k$ when $x(\cE_{t-1}) < x(\cP_{t-1})$. The pursuer can reach the hiding vertex in one move. (a) $\cE$ steps to the right into the hiding pocket. (b) $\cE$ steps to the left when hiding which means that  the  angle  from horizontal to $\overline{\cP_{t-1} \cE_{t-1}}$ satisfies $\phi > \pi/3$. }

\label{fig:hide-right}

\end{center}

\end{figure}

A hiding move behind a vertex $v_{\ell}$  where $\ell > k$ is argued similarly.  Recall that we have $x(\cP_{t-1}) -1/2 \leq x(\cE_{t-1}) \leq x(\cP_{t-1}).$ 
We claim that $\cP$ can achieve $x(\cP_{t}) = x(v_{\ell})$ in her first responding move. The argument then proceeds as above, swapping left and right when $x(\cP_{t-1}) < x(v_{\ell})$.

We now prove that $x(\cP_{t}) = x(v_{\ell})$; the two cases are shown  in  Figure \ref{fig:hide-right}. 
First, suppose that the evader steps to the right $x(\cE_{t-1}) < x(\cE_{t}) \leq x(\cP_{t-1}) + 1$. The hiding vertex $v_{\ell}$ must  satisfy $| x(v_{\ell}) - x(\cP_{t-1})| < 1$, so $\cP$ can attain $x(\cP_{t}) = x(v_{\ell})$ and move upward with any remaining movement budget.  
Second, suppose that the evader does not move to the right, so that $x(\cE_{t-1}) \geq x(\cE_{t})$. Recall that $|\cE_{t-1} - \cP_{t-1}| > 1$ (since the evader was not captured at time $t-1$) and that $| x(\cE_{t-1}) - x(\cP_{t-1})| \leq 1/2$. Therefore $|y(\cE_{t-1}) - y(\cP_{t-1})| \geq \sqrt{3}/2$  and the angle $\phi$ of $\overline{\cP_{t-1} \cE_{t-1}}$ with the horizontal is at least $\pi/3$. In order to be obscured by vertex $v_{\ell}$, the evader must increase this angle $\phi$, so  the evader's leftward movement is less than $1/2$. This means that the hiding vertex $v_{\ell}$ satisfies $x(\cP_{t-1}) - x(v_{\ell}) < 1$.  Once again,  $\cP$ can attain $x(\cP_{t}) = x(v_{\ell})$ and move upward with any remaining movement budget.

The pursuer deals with a blocking move in much the same way. Without loss of generality, suppose that $\cP$ is blocked to the left, see Figure \ref{fig:blockpocket}.
 She moves to the blocking point $Z$ and then draws a new search path from $Z$.
She then enters a modified search strategy, keeping track of both the old rook frame and her current search frame. During the search, if $\cE$  steps to the right of $\cP$ with respect to the old rook frame, then $\cP$ immediately re-enters Rook Mode with a leftward offset (making $\sqrt{3}/{2}$ vertical process). Otherwise, $\cP$ establishes Rook Mode with respect to a new frame of reference, and the evader region has been reduced.


\begin{figure}[t]

\begin{center}

\begin{tikzpicture}[scale=1.4]

\draw  (150:4) -- (148:2) -- (145: 4) -- (135:3.4) -- (130:4)  -- (110:2.75) -- (90:3) -- (80:1) -- (79:3);

\draw[thick] (-1.95,1.25) -- (.15,1.25);

\draw[fill] (-1.75,1.25) circle (1pt);
\draw[fill] (-1.925,1.25) circle (1pt);
\draw[fill] (-2,2.1) circle (1pt);
\draw[fill] (-2.4,2.2) circle (1pt);

\draw[-latex] (-2,2.1) -- (-2.35, 2.19);

\draw[-latex] (-1.75,1.35) -- (-1.925,1.35);

\draw[dashed] (90:.5) -- (90:3.25);

\node[above right] at (-1.8,1.25) {\small $\cP_{t-1}$};
\node[below left] at (-1.925,1.25) {\small $\cP_t = Z$};
\node[right] at (-2,2.1) {\small $\cE_{t-1}$};
\node[below] at (-2.45,2.2) {\small $\cE_t$};

\draw[fill] (90:3) circle (1pt);
\node[above right] at (90:3) {\small $v_k$};
\node[left] at (90:2.2) {\small $\spokeline{k}$};

\node[below] at (-1,.5) {\scriptsize (a)};

\begin{scope}[shift={(4.5,0)}]

\fill[color=gray!20]    (130:4)  -- (-1.44,2.74) -- (-1.44,1.75) ;

\draw (150:4) -- (148:2) -- (145: 4) -- (135:3.4) -- (130:4)  -- (110:2.75) -- (90:3) -- (80:1) -- (79:3);

\draw[dashed] (148:1) -- (148:4.25);
\draw[dashed] (145:1) -- (145:4.25);
\draw[dashed] (135:1) -- (135:4.25);
\draw[dashed] (130:1) -- (130:4.25);
\draw[dashed] (110:1) -- (110:3);
\draw[dashed] (90:1) -- (90:3.25);

\draw[thick]  (-1.95,1.25) -- (.15,1.25);

\draw[fill] (130:2.25) circle (1pt);

\draw[fill] (123:3.3) circle (1pt);

\draw[dashed] (-1.44,1.2) -- (-1.44,2.85);

\node at (125:2.3) {\small $\cP$};

\node[below] at (121:3.3) {\small $\cE$};

\draw[fill] (-1.925,1.25) circle (1pt);
\node[below left] at (-1.925,1.25) {\small $Z$};

\draw[fill] (90:3) circle (1pt);
\node[above right] at (90:3) {\small $v_k$};
\node[left] at (90:2.2) {\small $\spokeline{k}$};

\begin{scope}[shift={(-1.93,1.25)}, rotate=148]
	\draw[very thick] (0,0) -- (0,-.09);
\end{scope}

\begin{scope}[shift={(145:2.3)}, rotate=145]
	\draw[very thick] (0,0) -- (0,-.4);
\end{scope}

\begin{scope}[shift={(135:2.25)}, rotate=135]
	\draw[very thick]  (.1,0) -- (0,0) -- (0,-.205);
\end{scope}

\begin{scope}[shift={(130:2)}, rotate=130]
	\draw[very thick] (.255,0) -- (0,0) -- (0,-.735);
	\draw[very thick,-latex] (0,0) -- (0,-.6);	
\end{scope}

\begin{scope}[shift={(110:1.83)}, rotate=110]
	\draw[very thick] (.3, 0) -- (0,0) -- (0,-.66);
\end{scope}

\begin{scope}[shift={(90:1.25)}, rotate=90]
	\draw[very thick] (.7025,0) -- (0,0) -- (0,-.15);
\end{scope}

\node[below] at (-1,.5) {\scriptsize (b)};
\end{scope}

\end{tikzpicture}

\end{center}

\caption{(a) The evader makes a blocking move from $\cE_{t-1}$ to $\cE_t$ in an upper pocket. The upper boundary prevents $\cP$ from re-establishing rook position, so she moves to blocking point $\cP_t$. (b) $\cP$ uses the cautious scallop search strategy, while $\cE$ is in the shaded region. If $\cE$ steps to the right of $\cP$ in the original rook frame, then $\cP$ will revert back to the old rook phase. $\cP$ will achieve rook position before reaching spoke line $S_k$.}

\label{fig:blockpocket}

\end{figure}


During this pursuit, $\cP$ never visits the same point twice. She makes methodical progress in each rook phase. Every time that the evader hides, the pursuer eventually enters another rook phase, and the evader can only block or hide $n(Q_{\cE})$ times. Therefore the  capture time is $O(\Area(Q_{\cE})+ n(Q_{\cE}))$.
\end{proof}

When the evader territory is an an upper (lower) chute, then $\cE$ can make an escape move, using the lower (upper) chain to obstruct $\cP$, see Figure \ref{fig:scallopUpperRecovery}. As in the monotone case, this is the only evader gambit in an upper (lower) chute that involves the lower (upper) boundary.

\begin{lemma}
\label{lemma:upperChute}
Suppose that $\cP$ and $\cE$ are in upper chute position. Using the upper rook's strategy, $\cP$ will either capture $\cE$, or make progress on her search path.
\end{lemma}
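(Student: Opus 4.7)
The plan is to mirror the structure of Lemma \ref{lemma:pursuit} from the monotone case, but with extra care for the scallop frame transitions. In an upper chute, the evader has the same hiding and blocking gambits available as in an upper pocket (both involving features of $\Pi_U$), together with one genuinely new gambit: an \emph{escape move} that uses the lower chain $\Pi_L$ to obstruct the pursuer's rook response. For hiding and blocking moves, I would directly reuse the analysis from Lemma \ref{lemma:upperpocket}: the pursuer reaches the hiding/blocking vertex in at most two steps and then enters the cautious search of Algorithm \ref{alg:cautious-search}, which either returns $\cP$ to the original rook mode with at least $\sqrt{3}/2$ vertical progress, or establishes a new rook position in an updated frame with a strictly smaller evader territory. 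Either way, $\cP$ has made progress.

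The main new work is handling the escape move. Suppose $\cE$ moves from $\cE_{t-1}$ to $\cE_t$ in such a way that some point $Z \in \Pi_L$ on the right edge of the rook frontier obstructs the pursuer's intended rook reply. I would first show that $d(\cP_{t-1}, Z) \leq 1$, using the rook offset bound $|x(\cP_{t-1})-x(\cE_{t-1})| \leq 1/2$ together with the fact that $Z$ lies on the rook frontier $\frontier(\cP_{t-1})$ at the right endpoint opposite to $\cP_{t-1}$; the distance is then bounded by translating the geometry into the local Cartesian frame determined by the current spoke line. Hence $\cP$ can move to $Z$ in a single step. I would then split on the angular test in Algorithm \ref{alg:upperRook}.

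If $\theta(\cE_t) \geq \theta(\cP_t)$, then $\cE_t$ still lies in the angular sector the sweep has not yet reached, and the ray $\overline{\cP_t \cE_t}$ is a legal vertical axis for a new scallop frame. A short check shows that in this new frame $\cP_t$ already satisfies the upper rook conditions (visibility holds because $\cE_t$ was visible during the escape move, and the offset is zero by construction), so we simply restart the upper rook strategy. The evader territory has shrunk because $Z$ lies strictly sweep-ahead of the previous rook frontier's right endpoint. If instead $\theta(\cE_t) < \theta(\cP_t)$, then $\cE_t$ has angularly overtaken $\cP_t$, and $\cP$ initiates a fresh scallop search from $\cP_t = Z$. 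Here I would invoke Lemma \ref{lemma:sweep-search} to conclude that this search either captures $\cE$ or reaches rook position. The crucial point is that the new checkpoint is $Z$, which lies strictly sweep-ahead of the previous checkpoint on $\Pi_L$, so the pursuer territory has strictly expanded --- this is exactly the search-path progress claimed by the lemma.

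The main obstacle will be the geometric bookkeeping in the escape-move case: verifying $d(\cP_{t-1}, Z) \leq 1$ and that $Z$ genuinely sits sweep-ahead of the old checkpoint requires working simultaneously in the old rook frame (perpendicular to some spoke $\spokeline{k}$) and in the polar coordinates $(r:\theta)$ centered at $C$. The scallop hypothesis $L \cap Q$ convex, together with the minimum feature size, is what makes these local Cartesian and global polar estimates compatible; I would lean on the fact that between consecutive spoke lines the region is monotone (as used in the proof of Lemma \ref{lemma:sweep-search}) to reduce each sub-claim to a planar calculation already handled in Section \ref{sec:mono-pursuit}.
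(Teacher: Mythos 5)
Your proposal follows essentially the same route as the paper's proof: if no escape move occurs the chute resolves into a pocket handled by Lemma \ref{lemma:upperpocket}, and otherwise the pursuer steps to the blocking point $Z \in \Pi_L$ in one round and branches on the angular test, either adopting $\overline{\cP_t\cE_t}$ as a new vertical axis to restart rook mode or launching a fresh search from $Z$ with $Z$ as the updated checkpoint. The extra bookkeeping you flag (verifying $d(\cP_{t-1},Z)\leq 1$ and that $Z$ is sweep-ahead of the old checkpoint) is left implicit in the paper but is consistent with its argument.
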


\begin{proof}
If $\cE$ never makes an escape move, then $\cP$ will establish pocket position and capture the evader by Lemma \ref{lemma:upperpocket}. Suppose that $\cE$ makes a rightward escape move, meaning that the lower boundary blocks the pursuer's  rook move. Her recovery phase consists of a single round. The pursuer steps rightward to the blocking point $Z$, see Figure \ref{fig:scallopUpperRecovery}.
If spoke line $\sweepline{\cE}$ is between the vertical axis and spoke line $\sweepline{\cP}$, then $\cP$ chooses the line between $\cP$ and $\cE$ as her new vertical axis and enters Rook Mode. This updates the rook frontier and advances her checkpoint.
Otherwise, $\cP$ starts a new search phase from her current location $z \in \Pi_L$. This guards her previous checkpoint: $\theta(\cP) > \theta(\cE)$ means that $\cE$ is to the right of $\cP$ in the new search frame. She updates her checkpoint to $Z$, and begins searching. 
\end{proof}

\def\scalloplowerblockenv
{
\draw  (-.75, 2) -- (-.29, 1.75) -- (0,3.25) -- (55:3.35) -- (50:3.75);

\draw  (123:2.25) -- (90:1) -- (55:2.5) -- (35: 2) -- (25:2.75);

\draw[dotted] (99:.9) -- (99:3.5);
\draw[dotted] (90:.9) -- (90:3.5);
\draw[dotted] (55:1) -- (55:4);
\draw[dotted] (35:1.5) -- (35:3);
}

\def\scalloplowerblockpath
{
\begin{scope}[shift={(90:2.75)}, rotate=0]
\draw (0,0)-- (1.6,0);
\path (1.6,0)  coordinate (P1);
\end{scope}

\begin{scope}[shift={(P1)}, rotate=-30]
\draw (0,0)-- (0,-.8) -- (.3, -.8);
\path (.3, -.8)  coordinate (P2);
\end{scope}

\begin{scope}[shift={(P2)}, rotate=-37]
\draw (0,0)-- (0,-.2) -- (.7, -.2) ;
\path (.7, -.2)  coordinate (P3);
\end{scope}

\draw (P3) -- (35:2.05) -- (25:3);

}

\begin{figure}[h]

\begin{center}

\begin{tikzpicture}[scale=1.0]

\begin{scope}

\fill[gray!20]  (-.29, 1.75)  -- (0,3.25) -- (55:3.35) -- (50:3.75) -- (25:2.75) -- (35: 2) -- (55:2.5)  -- (1, 1.75);

\scalloplowerblockenv

\draw[thick] (-1.05, 1.75)  -- (1, 1.75) ;
\draw[fill] (.8, 1.75) circle (1.5 pt);
\draw[fill] (1, 1.75) circle (1.5 pt);
\draw[fill]  (1.1, 2) circle (1.5 pt);
\draw[fill]  (-.29, 1.75)    circle (1.5 pt);
\draw[fill] (1.3, 2.5) circle (1.5 pt);
\draw[fill] (1.55, 2.25) circle (1.5 pt);

\draw[-latex] (1.1,2) -- (1.32, 2.45);
\draw[-latex] (1.1,2) -- (1.53, 2.23);

\node[above] at (.6, 1.75)  {\scriptsize $\cP_{t-1}$};
\node[below] at (1.1, 1.75)  {\scriptsize $Z$};
\node[above] at (.8,2) {\scriptsize $\cE_{t-1}$};
\node[above left] at (1.3, 2.5) {\scriptsize $\cE_{t}$};
\node[right] at (1.55, 2.25) {\scriptsize $\cE_{t}'$};
\node[below] at  (-.29, 1.75)    {\scriptsize $a$};

\node at (.5, .5) {\scriptsize (a) };

\end{scope}

\begin{scope}[shift={(4.5,0)}]

\begin{scope}[shift={(1,1.75)}, rotate=-22]

\path (-1.3,0) coordinate (QQ1);
\path (90:1.2) coordinate (QQ2);

\end{scope}

\fill[gray!20]    (QQ1) -- (0,3.25) -- (55:3.35) -- (50:3.75) -- (25:2.75) -- (35: 2) -- (55:2.5)  -- (1, 1.75);

\draw[dashed] (60:1) -- (60:4);

\draw (1,1.75) -- (QQ1);

\draw (1,1.75) -- (QQ2);

\scalloplowerblockenv

\draw[thick] (-1.05, 1.75)  -- (1, 1.75) ;
\draw[fill] (1, 1.75) circle (1.5 pt);
\draw[fill]  (-.29, 1.75)    circle (1.5 pt);
\draw[fill] (1.3, 2.5) circle (1.5 pt);

\node[above] at (.9, 1.75)  {\scriptsize $\cP_{t}$};
\node[above left] at (1.3, 2.5) {\scriptsize $\cE_{t}$};
\node[below] at  (-.29, 1.75)    {\scriptsize $a$};

\node at (.5, .5) {\scriptsize (b)};

\end{scope}

\begin{scope}[shift={(9,0)}]

\scalloplowerblockenv

\draw[very thick] (1,1.75) -- (55:2.5);

\begin{scope}[shift={(55:2.5)}, rotate=-35]
\path (.93,0) coordinate (QQ3);
\draw[very thick] (0,0) -- (QQ3);
\end{scope}

\begin{scope}[shift={(QQ3)}, rotate=-55]
\path (.25,-.28) coordinate (QQ4);
\draw[very thick]  (0,0) -- (0,-.3) -- (QQ4);
\end{scope}

\draw[very thick]  (QQ4) -- (25:2.75);

\draw[thick] (-1.05, 1.75)  -- (1, 1.75) ;
\draw[fill] (1, 1.75) circle (1.5 pt);
\draw[fill]  (-.29, 1.75)    circle (1.5 pt);
\draw[fill] (1.55, 2.25) circle (1.5 pt);

\node[above] at (.9, 1.75)  {\scriptsize $\cP_{t}$};
\node[right] at (1.55, 2.25) {\scriptsize $\cE_{t}'$};
\node[below] at  (-.29, 1.75)    {\scriptsize $a$};

\node at (2.1,1.85) {\scriptsize $\Pi$};

\draw[dashed] (60:1) -- (60:4);

\node at (.5, .5) {\scriptsize (c) };

\end{scope}

\end{tikzpicture}

\end{center}

\caption{Recovery from an escape move in an upper chute. (a) The evader makes one of two rightward blocking moves. The pursuer responds by moving to the blocking point $Z$. (b) If $\cE$ is to the left of the radial line through $Z$, then $\cP$ immediately enters Rook Mode using the line between $\cP$ and $\cE$ as the new vertical. (c) If $\cE$ is to the right of this line, then $\cP$ starts a new search phase. }

\label{fig:scallopUpperRecovery}

\end{figure}

Pursuit in a lower pocket is analogous to pursuit in an upper pocket. 
Algorithm  \ref{alg:lowerRook} lists the pursuer's rook strategy for lower pockets and lower chutes. 
We make two observations. First, the outward radial nature of lower features means that the evader cannot make a blocking move. Second, handling an escape move from a lower chute  is slightly different than for an upper chute. Suppose that $\cE$ makes a rightward escape move, meaning that the upper boundary blocks the pursuer's responding rook move, see Figure \ref{fig:scallopLowerRecovery}(a). In response, the  pursuer steps rightward to the blocking point $Z$. 
She then (cautiously) adopts the frame of reference $(T(Z),S(Z))$.
If she is  above her most recent search path $\Pi$ with respect to this new frame, then she cannot start a new search path from her current location, see Figure \ref{fig:scallopLowerRecovery}(b). Instead,
she travels down the sweep line $\sweepline{Z}$ until reaching the unique point $Z' \in \Pi \cap S(Z)$. Note that this point guards $c$ with respect to the frame $(T(v_{\ell}), S(v_{\ell}))$ where $\theta(v_{\ell}) \geq \theta(Z) > \theta(v_{\ell+1})$.  While moving along $\sweepline{Z}$, she keeps track of two frames of reference, just as if she is making a usual frame transition during Search Mode, as shown in Figure \ref{fig:scallopLowerRecovery}(c). This protects the last checkpoint of her previous search phase.

\begin{algorithm}
\caption{ Lower Rook's Strategy (for Lower Pocket and Lower Chute)}
\label{alg:lowerRook}
\begin{algorithmic}[1]
\Require Initial frame of reference $(X,Y)$  is the current frame of reference of $\cP$ 
\Require $|x(\cP_{t-1})  -x(\cE_{t-1}) | \leq  1/2$ and $y(\cP_{t-1}) > y(\cE_{t-1})$ and $\cP_{t-1}$ sees $\cE_{t-1}$
\State Note: $\frontier(\cP_{t-1})$ is above $\cE_{t-1}$ and
 leftmost endpoint of $Q_{\cE_{t-1}}$ is in $\Pi_L \cap \frontier(\cP_{t-1})$
\While {$\cE$ is not captured}
\State Evader moves from $\cE_{t-1}$ to $\cE_t$

\If{$\cE$ made a hiding move}
\State Let $v_{\ell} \in \Pi_U$ be the hiding vertex. 
\State Note: assume  $x(\cE_t) < x(\cP_{t-1})$; handling $x(\cE_t) > x(\cP_{t-1})$ is similar
\While {$\cE$ is hidden by $v_{\ell}$}
 \State Note: $x(\cE_t) < x(v_{\ell})$ and ($x(v_{\ell}) < x(\cP_{t-1})$ or $y(\cP_{t-1}) > y(v_{\ell})$)

\If{$x(v_{\ell}) < x(\cP_{t-1})$}
\State $\cP_t \leftarrow $ move left towards $x(v_{\ell})$ and down with  remaining  budget
\State Note: it takes at most two rounds to reach $x(\cP) = x(v_{\ell})$
\Else
\State $\cP_t \leftarrow$ move downwards towards $v_{\ell}$
\EndIf
\State $t \leftarrow t+1$
\State Evader moves from $\cE_{t-1}$ to $\cE_t$
\EndWhile
\If {$\cE_t$ is still in the hiding pocket and $\cP_{t-1} = v_{\ell}$}
\State Enter Cautious Scallop Search Strategy
\EndIf
\State Note: $\cE_t$ stepped to the right of $\cP_{t-1}$ and is no longer hidden
\State $\cP_t \leftarrow$ one monotone rook move (Algorithm \ref{alg:monotone-rook})
\ElsIf  {$\cE$ made an escape move}
\State Let $Z \in \Pi_U$ be the blocking point where $\theta(v_{\ell}) \leq \theta(Z) < \theta(v_{\ell+1})$
\State $\cP_t \leftarrow Z$, which is reachable in one step
\State Set $(X', Y')  = (T(Z), S(Z))$ to be our provisional frame of reference.
\State Set $Z' \leftarrow$ unique point in $\Pi \cap S(Z)$. 
\While {$y'(\cP) > y'(Z')$ 
and
 not $(\cP$ in $(X,Y)$-rook position with $y'(\cE) \leq y'(\cP)$) 
 and
  not $(\cP$ in $(X',Y')$-rook position with $y'(\cE) \geq y'(\cP)$) 
} 
\State $P_t \leftarrow$ move down $Y' = S(Z)$ towards $Z'$
\State $t \leftarrow t+1$
\State Evader moves from $\cE_{t-1}$ to $\cE_t$
\EndWhile
\If {$\cP$ is  in $(X,Y)$-rook position with $y'(\cE) \leq y'(\cP)$}
\State $\cP_t \leftarrow$ one monotone rook move (Algorithm \ref{alg:monotone-rook})
\ElsIf {$\cP$ is  in $(X',Y')$-rook position with $y'(\cE) \geq y'(\cP)$}
\State Exit (in order to start a new Upper Rook's Strategy)
\Else
\State Exit (in order to start a new Scallop Search Strategy from $\cP_t$)
\EndIf
\EndIf

\EndWhile

\end{algorithmic}
\end{algorithm}


\def\scallopblockenv
{
\draw (-1, 1.75) -- (-.5,3.25) -- (60:3.35) -- (53:3.1) -- (50:4.25);

\draw (-.5, 1) -- (90:2.75) -- (35: 2) -- (25:3);

\draw[fill] (60:3.35) circle (1pt); 
\node[above] at (62:3.35) {\small $v_{\ell}$};

\draw[dotted] (99:.9) -- (99:3.5);
\draw[dotted] (90:.9) -- (90:3.5);
\draw[dotted] (60:1) -- (60:4);
\draw[dotted] (53:1.1) -- (53:4);
\draw[dotted] (35:1.5) -- (35:3.3);
}

\def\scallopblockpath
{
\begin{scope}[shift={(90:2.75)}, rotate=0]
\draw[very thick] (0,0)-- (1.6,0);
\path (1.6,0)  coordinate (P1);
\end{scope}

\begin{scope}[shift={(P1)}, rotate=-30]
\draw[very thick] (0,0)-- (0,-.8) -- (.3, -.8);
\path (.3, -.8)  coordinate (P2);
\end{scope}

\begin{scope}[shift={(P2)}, rotate=-37]
\draw[very thick] (0,0)-- (0,-.2) -- (.7, -.2) ;
\path (.7, -.2)  coordinate (P3);
\end{scope}

\draw[very thick] (P3) -- (35:2.05) -- (25:3);

\node at (.8,2.55) {\scriptsize $\Pi$};

}

\begin{figure}[ht]

\begin{center}

\begin{tikzpicture}[scale=1.0]

\begin{scope}

\scallopblockenv

\draw[thick] (.1,2.65) -- (1.8,2.65);

\draw[fill] (1.6,2.65) circle (1.5 pt);
\draw[fill]  (1.7,2) circle (1.5 pt);
\draw[fill]  (2.2,2.2) circle (1.5 pt);
\draw[fill] (0, 2.75) circle (1.5 pt);

\draw[fill] (1.8,2.65) circle (1.5 pt);

\draw[-latex] (1.7,2) -- (2.15, 2.175);

\node[below] at (1.55,2.65)  {\scriptsize $\cP_{t-1}$};
\node[below] at (1.8,2) {\scriptsize $\cE_{t-1}$};
\node[above] at (2.4,2.2) {\scriptsize $\cE_{t}$};
\node[left] at (0,2.75) {\scriptsize $c$};
\node[above] at (1.9,2.65)  {\scriptsize $Z$};

\node at (.5, .5) {\scriptsize (a)};

\end{scope}

\begin{scope}[shift={(4.5,0)}]

\scallopblockenv

\draw[fill] (1.8,2.65) circle (1.5 pt);

\draw[fill]  (2.2,2.2) circle (1.5 pt);
\draw[fill] (0, 2.75) circle (1.5 pt);

\draw[fill] (56:2.4) circle (1.5 pt);

\draw[dashed] (56:1.2) -- (56:3.85);
\draw[-latex] (56:3.1) -- (56:2.48);

\node[above] at (1.95,2.7)  {\scriptsize $\cP_{t}$};
\node[above] at (2.4,2.2) {\scriptsize $\cE_{t}$};
\node[left] at (0,2.75) {\scriptsize $c$};
\node[left] at (58:2.4) {\scriptsize $Z'$};

\node at (.5, .5) {\scriptsize (b)};

\scallopblockpath

\end{scope}

\begin{scope}[shift={(9,0)}]

\begin{scope}

\clip (-1, 1.75) -- (-.5,3.25) -- (60:3.35) -- (53:3.1) -- (50:4.25) -- (25:3) -- (35: 2) -- (90:2.75) -- (-.5, 1);

\begin{scope}[shift={(56:2.7)}]

	\draw[fill=gray!20] (0,0) -- (0,-3) -- (-34:2);	
	\draw[fill=gray!50]   (-34:2) -- (0,0) -- (56:3);

\end{scope}
\end{scope}

\scallopblockenv

\draw[fill] (56:2.7) circle (1.5 pt);
\draw[fill] (56:2.4) circle (1.5 pt);

\draw[fill] (0, 2.75) circle (1.5 pt);

\draw[dashed] (56:1.2) -- (56:3.85);
\draw (56:3.2) -- (56:2.41);

\node[above] at (51:2.75)   {\scriptsize $\cP$};
\node[left] at (0,2.75) {\scriptsize $c$};
\node[left] at (58:2.4) {\scriptsize $Z'$};

\node at (1.85,1.7) {\scriptsize $\cA$};
\node at (2.3,2.2) {\scriptsize $\cB$};

\node at (.5, .5) {\scriptsize (c)};

\scallopblockpath

\scallopblockpath

\end{scope}

\end{tikzpicture}

\end{center}

\caption{Recovery from  an escape move in a lower chute. (a) The evader makes an escape  move. (b) The pursuer moves to the blocking point $Z$ and then along the sweep line $\sweepline{Z}$ heading to point $Z'$ on the previous search path $\Pi$.  (c) While traversing $\sweepline{Z}$, she uses the previous rook frame in  region $\cA$, and she uses her current frame (where $\sweepline{Z}$ is vertical) in $\cB$.}
\label{fig:scallopLowerRecovery}

\end{figure}

\begin{lemma} 
\label{lemma:lowerpocket}
If $\cP$ and $\cE$ are in lower  pocket position, then lower rook's strategy captures the evader in $O(\Area(Q_{\cE})+ n(Q_{\cE}))$ turns, where $n(Q_{\cE})$ is the number of vertices in the pocket $Q_{\cE}$.
\end{lemma}

\begin{proof}
The proof is analogous to the proof of Lemma \ref{lemma:upperpocket} for upper pocket position. We provide a summary here,  refering the reader to that proof for more detail. If $\cE$ never makes a hiding move then $\cP$ captures $\cE$ in $O(\Area(Q_{\cE}))$ turns by Lemma \ref{lemma:endgame}. If $\cE$ hides behind a feature on the lower boundary, then the argument of Lemma \ref{lemma:upperpocket} still applies, swapping the roles of ``up'' and ``down.'' Meanwhile, the pursuer cannot use a lower feature to block the pursuer: this feature would also block the evader. (So there are only two cases for a lower pocket instead of three.) The pursuer never visits the same point twice, and she makes methodical progress in every rook phase. The evader can hide $O(n(Q_{\cE}))$ times, each of which is followed by another rook phase. Therefore the capture time  is $(\Area(Q_{\cE}) + n(Q_{\cE}))$. 
\end{proof}

\begin{lemma}
\label{lemma:lowerChute}
Suppose that $\cP$ and $\cE$ are in lower chute position. Using the lower rook's strategy, $\cP$ will either capture $\cE$, or make progress on her search path.
\end{lemma}

\begin{proof}
If $\cE$ never makes an escape move, then $\cP$ will establish pocket position and capture the evader by Lemma \ref{lemma:lowerpocket}. Suppose that $\cE$ makes a rightward escape move, meaning that the upper boundary blocks the pursuer's rook move. Let $(X,Y)$ be the rook frame of reference and let $c$ be the current checkpoint. Let $Z \in \Pi_U$ be the blocking point where $\theta(v_{\ell}) \geq \theta(Z) > \theta(v_{\ell+1})$. In response to the blocking move,  the pursuer steps rightward to  $Z$, see Figure \ref{fig:scallopLowerRecovery}(a). Let $(X',Y') = (T(Z), S(Z))$ be the transverse line and sweep line through point $Z$. Let $Z'$ be the unique intersection point $\Pi \cap Y'$ where $\Pi$ is the previous search path. 

Next, the pursuer cautiously moves down $Y'$ towards point $Z'$. At each step, she checks whether $\cP$ and $\cE$ are in rook position with respect to two different frames. If $\cP$ is in $(X,Y)$-rook position and $y'(\cE) \leq y'(\cP)$ (region $\mathcal{A}$ in Figure \ref{fig:scallopLowerRecovery}(c)),
then $\cP$ returns to Rook Mode with frame $(X,Y)$.  If $\cP$ is in $(X',Y')$-rook position and $y'(\cE) \geq y'(\cP)$
(region $\mathcal{B}$ in Figure \ref{fig:scallopLowerRecovery}(c)) then $\cP$  starts a new Rook Mode using frame $(X',Y')$. Otherwise, she reaches the point $Z'$, adopts frame $(X_{\ell}, Y_{\ell}) = (T(v_{\ell}, S(v_{\ell}))$ and enters a new Search Mode starting from $Z'$. In each of these three cases, the pursuer continues to guard checkpoint $c$ in her updated frame of reference. 

Finally, we claim that if the pursuer exits Lower Rook Mode into Search Mode (due to an escape move), then she will make progress (by updating her checkpoint or her frame of reference) as described in Definition \ref{def:scallop-progress}. 
Let  $\phi \leq \theta(c)$ be the angle corresponding to the rook frame $(X,Y)$.
There are two cases to consider: $\phi > \theta(v_{\ell})$ and $\phi = \theta(v_{\ell})$.

\begin{figure}[ht]

\begin{center}

\begin{tikzpicture}[scale=1.25]

\begin{scope}[rotate=30]

\draw (110:1) -- (90:2.25)  -- (20:.5) --  (10:2);

\draw (110:2.5) -- (85:3) -- (75:2) -- (65:3) -- (30:2) -- (25:3);

\draw[dotted] (90:.25) -- (90:3);
\draw[dotted] (85:.25) -- (85:3);
\draw[dotted] (75:.25) -- (75:3);
\draw[dotted] (65:.25) -- (65:3);
\draw[dotted] (30:.25) -- (30:3);
\draw[dotted] (20:.25) -- (20:3);

\begin{scope}[shift={(90:2.25)}]


\coordinate (a1) at (0,0);
\coordinate (a2) at (.2,0);

\end{scope}

\begin{scope}[shift={(90:2.25)}, rotate=-10]


\coordinate (a3) at (.2,0) ;
\coordinate (a4) at (.47,0);

\end{scope}

\begin{scope}[shift={(75:2)}, rotate=-15]


\coordinate (a5) at (0,0) ;
\coordinate (a6) at (.35,0) ;

\end{scope}

\begin{scope}[shift={(75:2)}, rotate=-25]


\coordinate (a7) at (0.35,0) ;
\coordinate (a8) at (1.225,0);

\coordinate (b1label) at (.875,0);

\coordinate (e1) at (1.05, -.4);

\coordinate (b1) at (1, 0);
\coordinate (b2) at (1, -.1);
\coordinate (b3) at (-.2, -.1);
\coordinate (b4) at (-.2, -.2);
\coordinate (b5) at (1.4, -.2);

\coordinate (c1) at (1.24, -.7);
\coordinate (c2) at (.5, -.7);
\coordinate (c3) at (.5, -.8);
\coordinate (c4) at (1.5, -.8);
\coordinate (c5) at (1.5, -.9);
\coordinate (c6) at (.7, -.9);
\coordinate (c7) at (.7, -1.0);
\coordinate (c8) at (1.71, -1.0);

\end{scope}

\begin{scope}[shift={(90:2.25)}, rotate=-60]


\end{scope}

\begin{scope}[shift={(30:2)}, rotate=-60]

\coordinate (a9) at (0,0) ;
\coordinate (a10) at (0,-.88);
\coordinate (a11) at (.2,-.88);

\end{scope}

\begin{scope}[shift={(90:2.25)}, rotate=-70]


\coordinate (a12) at (2.11,0) ;
\coordinate (a13) at (2.18,0);

\end{scope}

\coordinate (a14) at (10:2);

\draw[thick] (a1) -- (a2)  -- (a3) -- (a4)  -- (a5) --  (a6) -- (a7) -- (a8) --  (a9)  -- (a10) -- (a11) -- (a12) -- (a13) -- (a14);



\draw[fill] (a1) circle (.75pt);
\node[left] at (a1)  {\small $c$};

\draw[fill] (a5) circle (.75pt);
\node[ left] at (a5)  {\small $a$};


\draw[fill] (b1) circle (.75pt);
\node[ below] at (b1label)  {\small $\cP$};

\draw[fill] (e1) circle (.75pt);
\node[ below] at (e1)  {\small $\cE$};

\draw[fill] (65:3) circle (.5 pt);
\node[right] at (65:3) {\small $v_{\ell}$};

\node[below] at (0,0) {\small (a)};

\node[above right] at (a6) {$\Pi$};

\end{scope}

\begin{scope}[shift={(5,0)}, rotate=30]

\draw (110:1) -- (90:2.25)  -- (20:.5) --  (10:2);

\draw (110:2.5) -- (85:3) -- (75:2) -- (65:3) -- (30:2) -- (25:3);

\draw[dotted] (90:.25) -- (90:3);
\draw[dotted] (85:.25) -- (85:3);
\draw[dotted] (75:.25) -- (75:3);
\draw[dotted] (65:.25) -- (65:3);
\draw[dotted] (30:.25) -- (30:3);
\draw[dotted] (20:.25) -- (20:3);

\begin{scope}[shift={(90:2.25)}]


\coordinate (a1) at (0,0);
\coordinate (a2) at (.2,0);

\end{scope}

\begin{scope}[shift={(90:2.25)}, rotate=-10]


\coordinate (a3) at (.2,0) ;
\coordinate (a4) at (.47,0);

\end{scope}

\begin{scope}[shift={(75:2)}, rotate=-15]


\coordinate (a5) at (0,0) ;
\coordinate (a6) at (.35,0) ;

\end{scope}

\begin{scope}[shift={(75:2)}, rotate=-25]


\coordinate (a7) at (0.35,0) ;
\coordinate (a8) at (1.225,0);

\coordinate (b1) at (1, 0);
\coordinate (b2) at (1, -.1);
\coordinate (b3) at (-.2, -.1);
\coordinate (b4) at (-.2, -.2);
\coordinate (b5) at (1.4, -.2);

\coordinate (p11) at (1.15, -.2);
\coordinate (p11label) at (1.1, -.2);

\coordinate (e11) at (1.1, -.7);
\coordinate (e12) at (1.4, -.8);
\coordinate (e12a) at (1.375, -.8);
\coordinate (e12label) at (1.5, -.8);

\coordinate (cc1) at (1.3, -.6);
\coordinate (cc2) at (1.8, -.6);
\coordinate (cc3) at (1.8, -.7);

\coordinate (c1) at (1.24, -.7);
\coordinate (c2) at (.5, -.7);
\coordinate (c3) at (.5, -.8);
\coordinate (c4) at (1.3, -.8);
\coordinate (c5) at (1.3, -.9);
\coordinate (c6) at (.4, -.9);
\coordinate (c7) at (.4, -1.0);
\coordinate (c8) at (.7, -1.0);

\coordinate (e2) at (.8, -1.3);

\end{scope}

\begin{scope}[shift={(90:2.25)}, rotate=-60]


\end{scope}

\begin{scope}[shift={(30:2)}, rotate=-60]

\coordinate (a9) at (0,0) ;
\coordinate (a10) at (0,-.88);
\coordinate (a11) at (.2,-.88);

\end{scope}

\begin{scope}[shift={(90:2.25)}, rotate=-70]


\coordinate (a12) at (2.11,0) ;
\coordinate (a13) at (2.18,0);

\end{scope}

\coordinate (a14) at (10:2);

\draw[thick] (a1) -- (a2)  -- (a3) -- (a4)  -- (a5) --  (a6) -- (a7) -- (b1);

\draw[thick]  (b5) --  (a9)  -- (cc1) -- (a10) -- (a11) -- (a12) -- (a13) -- (a14);

\draw[thick, gray] (b1) -- (b2) -- (b3) -- (b4)  -- (b5);


\draw[fill] (a1) circle (.57pt);
\node[left] at (a1)  {\small $c$};

\draw[fill] (a5) circle (.75pt);
\node[ left] at (a5)  {\small $a$};

\draw[fill] (b5) circle (.75pt);
\node[above] at (b5)  {\small $Z$};

\node[below] at (0,0) {\small (b)};

\draw[fill] (p11) circle (.75pt);
\draw[fill] (e11) circle (.75pt);
\draw[fill] (e12) circle (.75pt);

\node[below] at (p11label)  {\small $\cP$};
\node[ left] at (e11)  {\small $\cE$};
\node[above] at (e12label)  {\small $\cE'$};

\draw [-latex] (e11) -- (e12a);

\draw[fill] (65:3) circle (.5 pt);
\node[right] at (65:3) {\small $v_{\ell}$};



\end{scope}

\end{tikzpicture}

\caption{Transition back to search after a lower chute when $\phi = \theta(v_{\ell})$. (a) The pursuer enters Rook Mode while in a lower search on  search path $\Pi$ (black trajectory) while guarding checkpoint $c$ and auxiliary vertex $a$. (b) While in $\cP$ is in Rook Mode (grey trajectory), $\cE$ makes an escape move using blocking point $Z$. The search angle has not been updated, so $Z \in \Pi$, allowing  $\cP$ to immediately re-enter Search Mode. }

\label{fig:lower-chute-search}

\end{center}

\end{figure}

Suppose that $\phi > \theta(v_{\ell})$, see Figure \ref{fig:scallopLowerRecovery}(b). After reaching blocking point  $Z \in  \Pi_U$, the pursuer moves down $S(Z)$ until reaching the search path $\Pi$. At that point, she adopts the frame $(X_{\ell}, Y_{\ell})$ which updates her search angle, and makes progress according to Definition \ref{def:scallop-progress}. (Note that if $\cE$ creates Rook Position while $\cP$ descends to $\Pi$, then $\cP$ does not exit Rook Mode.)

Now suppose that $\phi = \theta(v_{\ell})$. This means that there are no polygon vertices with angles between   $\theta(v_{\ell})$ and $\theta(Z)$. As a consequence, the point $Z$ is actually  on the search path $\Pi$ (so there is no need to descend along spoke line $S(Z)$), see Figure \ref{fig:lower-chute-search}(b). Furthermore, the search path $\Pi$ will continue to coincide with $\Pi_U$ until it reaches vertex $v_{\ell+1}$.  
We claim that the point $Z$ cannot be a blocking point more than once. If the evader immediately moves to recreate rook position, then the pursuer  moves from $Z$ to advance the rook frontier in rook frame $(X,Y)$ by at least $7/22$ as per Algorithm \ref{alg:monotone-rook}. Otherwise, $\cP$ moves one unit along $\Pi$ (which coincides with the upper boundary) or reaches $\theta(v_{\ell+1})$, whichever is closer. 

In a finite number of moves, 
$\cP$ will be at or below the height of the point on $\Pi_U$ at angle $\theta(v_{\ell+1})$. If she is in Search Mode, then she makes progress by updating her angle to $\theta(v_{\ell+1})$. If she is in Rook Mode, then her next blocking  point will have angle less that $\theta(v_{\ell+1})$ and so she will update her angle when she transitions back to Search Mode. 
\end{proof}

\subsection{Catching the Evader}

The pursuit algorithm for a scallop polygon is identical to Algorithm \ref{alg:monotonePursuit}, where we use our scallop algorithms in place of the monotone algorithms. We now prove that this algorithm succeeds in capturing the evader in  a scallop polygon.

\bigskip

\begin{proofof}{Theorem \ref{thm:scallop}}
The proof is similar to that of Theorem \ref{thm:monotone}, and relies on Lemmas \ref{lemma:upperpocket}, \ref{lemma:upperChute},  \ref{lemma:lowerpocket} and \ref{lemma:lowerChute}.
The pursuer alternates between Search Mode and Rook Mode (which includes the Cautious Search Mode). The transition from Rook Mode to Search Mode only occurs after an escape move. Every boundary edge or vertex can only be involved in one such escape transition, so we switch modes $O(n)$ times. The search path is at most twice the diameter (by the Pythagorean theorem the search path from $v_1$ to $v_n$ is at most twice the length of the shortest path between these vertices). During  Search Mode, the pursuer never visits the same point twice. Indeed, when $\cE$ makes an escape move,  $\cP$ starts a new search path from the blocking point. 
The pursuer pauses when passing the spoke lines, accounting for at most $n$ steps.
In summary, accounting for all the Search Modes, the pursuer spends $O(\diam(Q) + n)$ turns  searching.

Next, we argue that $\cP$ visits each point at most twice.  Consider the entire pursuer trajectory from the start until the capture of the evader, and partition the pursuer's path according to each search phase and rook phase. It is clear that none of the search paths intersect one another.
While making rook moves in a fixed rook frame of reference, the pursuer path does not intersect itself. However, it is possible for the pursuer's path to intersect itself when considering the rook phases before and after an escape move. For an escape move in an upper chute, no point is revisited.  When $\cP$ is blocked by the lower boundary, she immediately decides whether she can attain rook position via updating her frame of reference. If she can, then this new frame of reference lies above the previous rook frame. Otherwise, she enters Search Mode, progressing into unexplored territory.

An  escape move for a lower chute may lead to $\cP$ revisiting some points, but no point will be visited more than twice. Suppose that $\cE$ makes an escape move by blocking $\cP$ at point $Z \in \Pi_U$. The pursuer responds by moving to $Z$ and then along spoke line $\sweepline{Z}$ with the goal of reaching the search path $\Pi$ and transitioning her frame of reference.  Consider the turn in which she next attains rook position (either with respect to the old rook frame or with respect to her intended new frame). If $\cE$ lies below $\Pi$, then she enters Rook Mode in previously unexplored territory, so no point is revisited. If $\cP$ attains Rook Mode with $\cE$ above $\Pi$, then the evader territory may include points to the left of $\sweepline{z}$ and above $\Pi$,  see Figure \ref{fig:scallopLowerRecovery}(c). Therefore this region can be revisited during the new rook phase.  However, we are now in upper pocket position or upper chute position, which means that $\cP$ will not revisit any points a third time: the next escape move by the evader cannot involve these points in any way. 

In a given rook phase, the pursuer makes consistent vertical progress, and she never visits the same point twice. Therefore she spends a total of $O(\Area(Q))$ turns in Rook Mode. Eventually, the evader is trapped in a pocket region, where he is caught by the pursuer.
\end{proofof}

%

\section{Conclusion}
\label{sec:conclusion}

We have considered a  pursuit-evasion game in monotone and scallop polygons.  
We have shown that a line-of-sight pursuer has a deterministic winning strategy in these environments. Line-of-sight pursuit must alternate between  searching and chasing. We have taken advantage of the existence a natural traversal path in our environments, which allows the pursuer to guard the polygon from left to right. In particular, we associate a unique frame of reference to each point in a scallop polygon by using the sweep line as the vertical axis. 

We believe that our techniques  can be used to show that a line-of-sight pursuer has a winning strategy in all \style{strictly sweepable polygons}. The general idea is as follows. First, we partition any strictly sweepable  polygon into a series of  monotone and scallop polygons. The  pursuer traverses  the polygon, using  the  search/rook algorithms  that are appropriate for the current sub-environment.  To avoid capture, the evader must escape by moving into the next polygon in the decomposition. However, the most delicate task is to handle the transition between subpolygons: we must  ensure that  the pursuer prevents  recontamination of previous  cleared areas. This is far more complicated than dealing with the sweeping angles of a scallop polygon. For example, the centers for the sub-scallop polygons may be on  opposite sides of the polygon, so ``up'' and ``down'' are no longer global definitions. As a result, preventing recontamination of previous cleared areas becomes much  more difficult.  

More  generally, determining the full class of pursuer-win environments remains a challenging open question. 
In terms of negative results, there are \style{weakly monotone polygons} which are evader-win.\cite{noori+isler} A simply connected polygon is weakly monotone with respect to vertices $s$ and $t$ when two particles can walk along the boundary chain, one from $s$ to $t$ and one from $t$ to $s$ such that the range of the directions of these walks is less than $\pi$ radians.\cite{hefferman}
One milestone for characterizing pursuer-win environments would be resolving whether the family of \style{sweepable polygons} is  pursuer-win. Note that the sweep line is allowed to backtrack as it navigates the sweepable polygon. A potential search path must account for this backtracking, which presents a further challenge to the methods presented herein. We are optimistic that our notion of making headway could generalize to all sweepable polygons, but there are certainly technical challenges to overcome.

Looking beyond sweepable polygons, we are optimistic that our methods could be adapted to the superfamily of  \style{straight walkable polygons}.\cite{icking+klein,tseng} 
The boundary of such  a polygon can be partitioned into two chains between vertices $s$ and $t$, such that we can move two mutually visible points monotonically from $s$ to $t$, one clockwise and the other counterclockwise. For example a \style{spiral polygon}, whose boundary chain consists of one convex chain and one reflex chain, is straight walkable.\cite{orourke+suri+toth} Note that a spiral polygon is not sweepable whenever the polygon spirals  beyond $\pi$ radians, and a similar statement can be formulated for straight walkable polygons. 
Again, we leave these extensions  for future work.

\section*{Acknowledgments}

This work was supported in part by the Institute for Mathematics and its Applications and
in part by NSF Grant DMS-1156701. Volkan Isler was supported in part by NSF Grant
IIS-0917676. We thank Narges Noori for helpful conversations and feedback. 

\bibliography{pursuit}

\end{document}